\documentclass[12pt]{article}

\setlength{\belowcaptionskip}{-9pt}	
\linespread{1.29}
\parindent = 0.15in
\parskip=.045in
\setlength{\footnotesep}{10pt}

\textheight 9in
\topmargin -0.4in
\oddsidemargin 0.1in
\textwidth 6.25in
\linepenalty=1000

\usepackage{graphicx} 
 \usepackage{setspace}
 \usepackage{amssymb}
\usepackage{amsmath}
 \usepackage{bm}
 \usepackage{graphicx}
 \usepackage{mathrsfs}
  \usepackage{fancyhdr}
   \usepackage[colorlinks, linkcolor=blue]{hyperref}
    \usepackage{amsthm}
     \usepackage{booktabs}
 \usepackage{makecell}
 \usepackage{float}
 \usepackage{footmisc}
 \usepackage{tikz}
 \usetikzlibrary{external}
 \usepackage{appendix}
\usepackage{pgfplots}
\pgfplotsset{compat=1.18}

\usepackage{listings}
\usepackage{tocloft}
 \usepackage{wasysym}
 \usepackage{ragged2e}
\PassOptionsToPackage{backref=page}{hyperref}
 \usepackage{comment}
 \usepackage[ruled, vlined]{algorithm2e}

\usepackage{natbib}
\usepackage[backref=page]{hyperref} 

\usepackage{amsfonts}
\usepackage{setspace}
\usepackage{amsmath,amsthm}
\usepackage{txfonts}
\usepackage{amssymb}
\usepackage{graphicx}
\usepackage{epsfig}
\usepackage{psfrag,pstricks}
\usepackage{caption}
\usepackage{soul}
\usepackage{apptools}
\usepackage{ wasysym }
\usepackage{xfrac}
\usepackage{titling}
\usepackage{booktabs}
\usepackage{dcolumn}
\usepackage{multirow}
\usepackage{dsfont}

\tikzexternalize

\hypersetup{
    citecolor=blue,
    colorlinks=true,
    linkcolor=blue,
    filecolor=magenta,      
    urlcolor=blue,
}

 \newtheorem{proposition}{Proposition}
 \newtheorem{example}{Example}
 \newtheorem{remark}{Remark}
  \newtheorem{theorem}{Theorem}
  \newtheorem{definition}{Definition}
  \newtheorem{lemma}{Lemma}
  
  \newtheorem{assumption}{Assumption}
 
  \newtheorem{corollary}{Corollary}

    \DeclareMathOperator*{\argmax}{argmax}
        \DeclareMathOperator*{\argmin}{argmin}
        \DeclareMathOperator*{\sign}{sign}
        
        \newcounter{savefootnote}
\newcounter{symfootnote}
\newcommand{\symfootnote}[1]{%
   \setcounter{savefootnote}{\value{footnote}}%
   \setcounter{footnote}{\value{symfootnote}}%
   \ifnum\value{footnote}>8\setcounter{footnote}{0}\fi%
   \let\oldthefootnote=\thefootnote%
   \renewcommand{\thefootnote}{\fnsymbol{footnote}}%
   \footnote{#1}%
   \let\thefootnote=\oldthefootnote%
   \setcounter{symfootnote}{\value{footnote}}%
   \setcounter{footnote}{\value{savefootnote}}%
}

\interfootnotelinepenalty=10000

\usepackage{microtype}

\begin{document}

\begin{spacing}{1.2}

\begin{titlepage}
    \centering
    \vspace*{1cm}
    
    \Large
    \textbf{A Constructive Characterization of Optimal Bundling} \symfootnote{This paper was previously circulated as ``Optimal Bundling and Dominance.'' I am indebted to Laura Doval, Navin Kartik, and Qingmin Liu for their continued guidance and support. I am grateful to Elliot Lipnowski, Jacopo Perego, Kai Hao Yang, and Yangfan Zhou for valuable discussions. I also thank Yeon-Koo Che, Zihao Li,  Ellen Muir, Axel Niemeyer, David Pearce, Matt Weinberg, and participants at conferences and seminars for helpful comments. }
    
    \vspace{0.5cm}
    \large
    Zhiming Feng \symfootnote{Department of Economics, Columbia University. Email: \href{zf2295@columbia.edu}{zf2295@columbia.edu}.}

    \vspace{0.5cm}
    \large
    \today
    
    \vspace{1.5cm}

    \normalsize
    \textbf{Abstract}
    \vspace{0.5cm}

\justifying
This paper studies a monopolist selling multiple goods to a consumer with one-dimensional private types. I provide a sufficient condition---single-crossing differences of virtual values together with monotonic differences of valuations---under which the monopolist's problem is equivalent to finding the upper envelope of the marginal revenue curves. This approach guarantees that the optimal mechanism is deterministic and can be implemented via a menu of bundles. I further characterize this upper envelope using a dominance notion. This characterization yields a constructive algorithm that computes the unique optimal menu by iteratively eliminating dominated bundles. As my main application, I use this framework to introduce and provide sufficient conditions for the optimality of tree bundling---a common but previously unmodeled sales strategy where the optimal menu contains a ``root'' bundle but features distinct upgrade paths. This structure captures prevalent sales practices across industries---from automobile manufacturers offering base models with customizable upgrade packages, to software companies allowing modular feature additions.
\\ \\ \\
    \textbf{Keywords:} Optimal bundling, tree bundling, multidimensional screening, dominance and best response, single-crossing differences in convex environments
    \vfill

\end{titlepage}
\pagenumbering{gobble}
{\small
\tableofcontents
}

\newpage
\pagenumbering{arabic}
\setcounter{page}{1}
\section{Introduction}

The optimal bundling problem---designing a menu of bundles and prices to maximize profit---is a fundamental question for multi-product firms. Despite decades of research, the problem remains remarkably intractable. Early work focused on multidimensional type settings with additive values (\cite{mcafee1987}, \cite{manelli2006}, \cite{manelli2007}), where the complexity of characterizing optimal mechanisms led researchers to seek special cases and approximations. Relatively general results are either obtained by characterizing properties of and certifying certain optimal mechanisms (\cite{daskalakis2017}) or by focusing on finite types (\cite{bergemann2021}, \cite{kwak2026separate}). More recent analyses have restricted attention to one-dimensional type spaces but allowed for non-additive values (\cite{ghili2023characterization}, \cite{yang2023nested}), trading off the dimensionality of private information for richer substitution and complementarity patterns across products.

Even with consumer heterogeneity restricted to a single dimension, the seller's optimization over $2^n$ possible bundles when selling goods $\{1,...,n\}$ remains a formidable mechanism design challenge. The difficulty lies not just in the number of possible bundles, but in ensuring that the resulting allocation satisfies incentive compatibility constraints. Consequently, the literature has often imposed additional structure to make the problem tractable, restricting attention to specific menu structures like pure bundling (offering only the grand bundle) or nested bundling (offering a sequence of increasingly comprehensive packages).

This paper develops a constructive solution to the one-dimensional type, non-additive value problem without focusing on specific structures on bundles. My approach begins with a key and well-understood insight: rather than directly tackling the constrained optimization problem, I analyze the relaxed, constraint-free problem of maximizing expected virtual surplus. I provide a sufficient condition (Assumption \ref{a0})---single-crossing differences of virtual value functions together with monotonic differences of valuations---under which the solution to this relaxed problem is guaranteed to be implementable and thus solves the original constrained problem. This condition, which jointly restricts consumer valuations and the type distribution, effectively reduces the complex mechanism design problem to the much simpler task of finding the \emph{upper envelope} of virtual value functions.

The connection between relaxed and constrained problems builds on classical insights from mechanism design (\cite{myerson1981optimal}), but the non-additive setting introduces new challenges. Unlike standard single-good problems where monotonicity of virtual values suffices, or multi-good problems with additive values where bundling can be analyzed component-wise, the interaction between goods requires careful analysis of how virtual value and valuation differences evolve across types. Assumption \ref{a0} provides exactly the structure needed to ensure that the pointwise maximization approach yields an implementable mechanism.

While the solution to the relaxed problem is conceptually simple---pointwise maximization of virtual values---this does not immediately reveal the structure of the optimal menu or offer a practical way to compute it. To bridge this gap, I develop a constructive characterization using certain notions of dominance and best response. Drawing inspiration from \cite{pearce1984}, I establish a key equivalence: if virtual value functions satisfy \emph{single-crossing differences in convex environments} (SCD$^\star$) as defined by \cite{kartik2023} (Assumption \ref{a1}), then a bundle is a ``never strict best response'' if and only if it is ``very weakly dominated'' by a (possibly stochastic) alternative. The classical arguments for such equivalences apply to payoffs that are linear in the underlying state; here the virtual value is a nonlinear function of the type, and the single-crossing structure substitutes for linearity.

This dominance-based characterization yields a practical algorithm that computes the unique minimal optimal menu by iteratively eliminating dominated bundles. The algorithm first removes bundles dominated by pure alternatives, then identifies and eliminates those dominated only by stochastic combinations. Unlike existing characterizations that merely describe properties of optimal menus, this approach constructively generates the solution. The algorithm runs in polynomial time in the number of bundles, though this remains exponential in the number of goods---a complexity that appears inherent to the problem without further structural assumptions.

The economic content of Assumptions \ref{a0} and \ref{a1} merits discussion, which I provide in Section \ref{sec:discussionofassump}. Both are implied by a single stronger condition---monotonic differences of virtual values over stochastic bundles, \`{a} la \textit{monotonic differences in convex environments} (MD$^\star$) as defined by \cite{kartik2023}---whose characterization leads to a multiplicative-additive form of the valuation: each bundle is summarized by a quality index and a baseline term, and each type by a marginal value for quality. Under this form, the assumptions hold whenever the marginal value for quality, adjusted by the standard information-rent term, is monotone in the type---Myerson's regularity condition, applied to the taste for quality rather than to the type itself. I give two examples of settings under which the condition holds: a concave generalization of \cite{mussa1978}, and a specification following \cite{HopkinsKornienko2004} in which willingness to pay depends on the consumer's rank in the type distribution rather than on the type itself---the latter requiring no regularity of the distribution.

Equipped with this constructive methodology, I analyze the optimality of common but previously unformalized sales strategies. My main application introduces and characterizes \emph{tree bundling}. A tree menu is organized around a root bundle---a base product contained in every other bundle on the menu---from which the remaining bundles branch out along different upgrade paths. Consider a computer retailer selling a base laptop, the laptop with a graphics upgrade, the laptop with extended storage, and a fully loaded machine: every option builds on the base laptop, but the graphics and storage upgrades lie on different branches, since neither contains the other. A tree menu thus lets consumers who agree on the base product part ways on how to upgrade it. Nested bundling---a single chain of increasingly comprehensive packages, as in \cite{yang2023nested}---is the parallel case in which the menu has exactly one branch.

I provide two types of results for tree bundling. First, I establish conditions under which the optimal menu is organized around a root: when a root bundle has the highest \textit{sold-alone quantity} and provides higher value than any bundle not containing it to the highest type, the minimal optimal menu is either a tree or a nested menu (Proposition \ref{p33}).\footnote{The sold-alone quantity of a bundle (formally defined in Definition \ref{d4}) is the quantity that a profit-maximizing monopolist would sell if she were selling the bundle alone.} Second, I characterize when the menu genuinely branches. In Theorem \ref{t33}, I provide a condition on the \emph{ratios} of virtual values under which every possible upgrade path from the root bundle to the grand bundle appears in the optimal menu; a worked example (Example \ref{ex:tree}) with a graphical illustration of the upper envelope accompanies the result. This is the ``build-your-own'' model ubiquitous in industries from automobiles to computers, where customers start with a base model and assemble their preferred combination of upgrades. Theorem \ref{t34} provides an alternative characterization through a ``least-favorite good''---one whose removal from the grand bundle, or addition to the root, matters least to the top type---which generates two distinct upgrade paths. These results formalize the intuition that tree bundling is optimal when different customer segments value different upgrade dimensions, yet all agree on a common base product.

Beyond tree bundling, the framework provides a unified lens for analyzing a range of classic bundling structures. It recovers and extends results for pure and nested bundling --- in particular, I show that the results of \cite{ghili2023characterization} and \cite{yang2023nested} can be recovered under a common framework (Theorem \ref{corollary: alternative}): it suffices that the conditions of Theorem \ref{t1} hold on the restricted bundle class each paper targets---the pure-bundling menu and the nested bundles, respectively---rather than on all bundles. I also show that how conditions based on sold-alone quantities, such as the union quantity condition, ensure the optimality of these menus using my approach. For the case of additive values, I provide a complete characterization, demonstrating that the optimal menu is a specific nested structure determined by a strict ordering of virtual value ratios for singleton goods. These applications demonstrate that the dominance-based characterization is not just a solution for a specific case but a general and constructive methodology for a broad class of optimal bundling problems.

\subsection{Related Literature}

\textbf{Optimal Bundling and Product Lines:} The most closely related papers are \cite{ghili2023characterization} and \cite{yang2023nested}, which characterize optimal bundling through the upper envelope of marginal revenue curves. While both papers make significant advances, they focus on verifying when specific menu structures---pure bundling and nested bundling, respectively---emerge as optimal.

These papers employ two types of restrictions. First, they impose structural conditions on preferences: monotonicity (and monotonic differences) of valuations and quasi-concavity (and quasi-concave differences) of revenue curves, which translate to single-crossing (and single-crossing differences) properties of virtual value functions. Importantly, these conditions need only hold for restricted bundle classes---e.g., nested bundles in \cite{yang2023nested}---since each paper targets a particular menu structure. Second, they require specific shapes for the upper envelope of marginal revenue curves. \cite{ghili2023characterization} shows pure bundling is optimal when the grand bundle's marginal revenue dominates all other non-empty bundles (equivalently, has the highest sold-alone quantity). \cite{yang2023nested} establishes that nested bundling emerges when the ``nesting condition'' ensures the upper envelope is generated by nested bundles.

The upper-envelope approach itself has a long history in the product-line literature. \cite{mussa1978} solve the monopolist's quality-choice problem, \cite{johnson2003, johnson2006} develop the ``upgrades'' approach for multi-product lines, and \cite{AndersonCelik2015} characterize the optimal product line as the upper envelope of marginal revenue curves under single-crossing of inverse demands and of marginal revenues---conditions that correspond to the two parts of Assumption \ref{a0}. The difference lies in where the conditions are imposed. In those papers, the products come with a one-dimensional order---qualities on the real line, or a chain of nested packages---and the monotonicity conditions are imposed along that order, so the order in which products appear on the optimal menu is part of the model's input. Here the conditions are imposed on all pairs of the $2^n$ bundles; the order in which bundles appear on the upper envelope is determined by the conditions rather than given, and it need not respect set inclusion. Whether it does---whether the optimal menu is a tree, a chain, or a single bundle---is what the applications characterize.

Rather than verifying optimality of predetermined menu structures, I ask: when can the optimal mechanism be characterized through upper envelope methods, and how can we constructively identify which bundles constitute this upper envelope? This requires conditions on all bundles, not just special subsets, which consequently gives a complete characterization: I identify sufficient conditions for upper envelope methods to solve the bundling problem regardless of the structure of the bundles that constitute the upper envelope, and provide an algorithm that computes the optimal menu for any problem satisfying these conditions. Rather than checking whether a specific bundling structure is optimal, the constructive nature of my approach allows me to systematically compute the optimal menu. The structural results in the applications are then read off this computation: conditions for tree, nested, or pure bundling describe when the elimination procedure outputs a menu of that shape, rather than serving as verification of a menu conjectured in advance. 

\textbf{Dominance and Best Response:}
The connection between dominance concepts and best response behavior has deep roots in game theory. \cite{pearce1984} establishes a fundamental result: when opponents' strategies may be correlated, a strategy is a never best response if and only if it is strictly dominated. 
\cite{weinstein2020} extends this framework to best-reply sets, generalizing the equivalence to never strict best responses and very weak dominance. Alternative proof techniques have been developed, including the duality approach of \cite{myerson1991} and the separating hyperplane theorem approach of \cite{fudenberg1991}, the latter recently extended by \cite{cheng2023do} to weaker dominance concepts. However, all these approaches require players to have von Neumann-Morgenstern utility functions---a linearity assumption that is stronger than the SCD$^\star$ condition I employ.

My contribution reconciles these game-theoretic insights within a mechanism design framework. I prove that under SCD$^\star$, bundles that are never strict best responses are precisely those that are very weakly dominated (Proposition \ref{p1}), adapting Pearce's logic to the bundling context without requiring linearity. Moreover, while \cite{bernheim1984}, \cite{pearce1984}, and \cite{fudenberg1991} characterize the existence of rationalizable strategies through best responses, their arguments do not extend to strict best responses. I establish the existence of strict best response bundles under SCD$^\star$ (Corollary \ref{c1}) and provide a counterexample (Example \ref{e6}) showing this existence can fail without the condition. This existence result is crucial for ensuring the minimal optimal menu is non-empty and well-defined.

\textbf{Single-Crossing Differences in Convex Environments:} \cite{kartik2023} introduces MD$^\star$, providing a framework ideally suited to the bundling problem. A convex environment is sufficiently rich that for any two actions and any convex combination of their utility functions, there exists a third action yielding exactly that convex combination as its utility function. In the bundling context, this naturally accommodates preferences over lotteries---that is, stochastic bundles. Importantly, SCD$^\star$ and MD$^\star$ do not presume any exogenous order on the choice space, aligning perfectly with my approach of not imposing a priori structure on the set of bundles. Moreover, \cite{kartik2023} provides elegant characterizations showing when functions exhibit SCD$^\star$ and MD$^\star$. I leverage these characterizations to identify which combinations of valuation functions and type distributions yield virtual value functions with the requisite structure.

\section{Setup}
\label{s332}

There are $n$ different goods $\{1,...,n\}$. A monopolist sells bundles $b \in \mathcal{B}:= 2^{\{1,...,n\}}$ to a consumer whose type $t$ is distributed on $T:=[\underline{t},\bar{t}] \subseteq \mathbb{R}$ according to distribution $F$ with a continuous, positive density $f$. The consumer's value function is $v: \Delta(\mathcal{B}) \times T \rightarrow \mathbb{R}$, in which $v$ is continuously differentiable in $t$. For any stochastic bundle $a \in \Delta(\mathcal{B})$, $v(a,t)= \mathbb{E}_{b \sim a}[v(b,t)]$. The consumer's preferences are over $(a,p)$, in which $p$ is the price paid to the monopolist, and are quasi-linear in $p$, i.e., type $t$ consumer's utility of getting $a$ and paying $p$ is $v(a,t)-p$. 

The monopolist, who has zero production costs, wants to maximize her expected profits over all mechanisms. By the revelation principle, I only need to focus on direct mechanisms, which are given by, in a slight abuse of notation, a measurable mapping $(a,p): T \rightarrow \Delta(\mathcal{B}) \times \mathbb{R}$, where for each $t$, $a(t)$ is a stochastic bundle assigned to type $t$ consumer, and $p(t)$ is a payment from type $t$ consumer to the monopolist. The mechanism is incentive compatible (IC) if \begin{equation} \label {eq1} v(a(t),t)-p(t) \geq v(a(t'),t)-p(t'), \forall t,t' \in T. \tag{IC} \end{equation}
The mechanism is individually rational (IR) if  \begin{equation} v(a(t),t)-p(t) \geq v(\emptyset,t):=0, \forall t \in T. \tag{IR} \end{equation}
The monopolist's problem is \begin{equation} \label{eq3} \begin{split} & \max \limits_{(a,p)} \mathbb{E}[p(t)], \\ \text{s.t. }  & (a,p) \text{ is IC and IR. }\end{split} \end{equation} Solutions to problem (\ref{eq3}) are called (the monopolist's) optimal mechanisms. A \textit{menu} $B \in 2^\mathcal{B}$ is optimal if there exists an optimal mechanism $(a,t)$ such that for any $t, a(t) \in B$. This implies that an optimal menu exists only when the optimal mechanism is deterministic. An optimal menu $B$ is a \textit{minimal optimal menu} if any menu $B' \subsetneq B$ is not optimal. 

The virtual value function is \begin{equation*} \label{eq4} \phi(b,t):= v(b,t)-\dfrac{1-F(t)}{f(t)}v_t(b,t), \end{equation*}
where $v_t(b,t):=\dfrac{\partial v(b,t)}{\partial t}$. For ease of exposition, I assume that no two different bundles have the same virtual value functions, i.e., for any $b \neq b', \text{ there exists } t \in T$ such that $\phi(b,t) \neq \phi(b',t)$. I also assume that $\max \limits_{b \neq \emptyset} \phi(b,\underline{t})<0=\phi(\emptyset, \underline{t})$. This condition ensures that no non-empty bundle has a non-negative virtual value across the entire type space. By the envelope theorem, the monopolist's problem can then be written as
\begin{equation} \label{eq5} \begin{split} \max \limits_{(a,p)} &\mathbb{E}[\sum \limits_{b \in \mathcal{B}} a_b(t) \phi(b,t)]-[v(a(\underline{t}),\underline{t})-p(\underline{t})], 
\\ &\text{s.t. }  (a,p) \text{ is IC and IR,} \end{split} \tag{MP} \end{equation} 
where $a_b(t)$ is the weight that the stochastic bundle $a(t)$ assigns to the deterministic bundle $b$.\footnote{The reason why I can write $\phi(a,t)$ as $\sum \limits_{b \in \mathcal{B}} a_b(t)\phi(b,t)$ is because $v(a,t)= \mathbb{E}_{b \sim a}[v(b,t)]=\sum \limits_{b \in \mathcal{B}} a_b(t)v(b,t)$.}

\section{Solving the Problem}
\label{sec:solving}
\subsection{The Upper Envelope of Virtual Value Functions}
\label{s331}
This subsection aims to demonstrate when taking the upper envelope of virtual value functions, or more standardly, doing pointwise maximization of virtual values, is equivalent to solving the monopolist's problem. I first introduce the following assumption, which requires the virtual value functions to satisfy single-crossing differences and the value functions to satisfy monotonic (either increasing or decreasing) differences:
\begin{assumption}
\label{a0}
For any $b, b' \in \mathcal{B}$: (i) $\phi(b,t)-\phi(b',t)$ is single-crossing in $t$; (ii) $v(b,t)-v(b',t)$ is monotonic in $t$.
\end{assumption}

Taking the upper envelopes of virtual value functions, or figuring out the component bundles of upper envelopes is almost the same as figuring out minimal optimal menus for the following unconstrained problem:
\begin{equation}
\label{eq6}
    \mathbb{E}[\max \limits_{b \in \mathcal{B}} \phi(b,t)],
    \tag{UE}
\end{equation}
where a menu of problem \eqref{eq6} is a set of bundles obtained through picking a bundle from $\argmax \limits_{b \in \mathcal{B}} \phi(b,t)$ for each $t$. 

The following theorem demonstrates the connection between problem \eqref{eq5} and problem \eqref{eq6}. 

\begin{theorem}
    \label{t1}
    Suppose Assumption \ref{a0} holds. The set of minimal optimal menus of problem \eqref{eq5} is the set of minimal optimal menus of problem \eqref{eq6}. 
\end{theorem}

The substance of the theorem is that every minimal optimal menu of problem \eqref{eq6} is a minimal optimal menu of problem \eqref{eq5}. Such a menu attains the value of the relaxed problem, since $\mathbb{E}[\sum_{b \in \mathcal{B}}a_b(t)\phi(b,t)]-[v(a(\underline{t}),\underline{t})-p(\underline{t})] \leq \mathbb{E}[\max_{b \in \mathcal{B}}\phi(b,t)]$ for any $(a,p)$ that is IR; what has to be established is feasibility---a price scheme under which the allocation induced by the menu is incentive compatible and individually rational.

The argument proceeds in three steps. First, part (i) of Assumption \ref{a0} gives the allocation an interval structure: each bundle in a minimal optimal menu of \eqref{eq6} is optimal over a unique interval of types, and the intervals quasi-partition the type space, overlapping only at their boundaries. Second, any two bundles optimal over adjacent intervals swap the ranking of their virtual values between $\underline{t}$ and $\bar{t}$, and for such pairs part (ii) of Assumption \ref{a0} pins down the direction of the valuation difference:

\begin{lemma}
    \label{l1}
    Suppose part (ii) of Assumption \ref{a0} holds. If $b, b' \in \mathcal{B}$ satisfy $\phi(b, \underline{t}) < \phi(b', \underline{t})$ and $\phi(b, \bar{t}) > \phi(b', \bar{t})$, then $v(b,t)-v(b',t)$ is increasing in $t$.
\end{lemma}

Third, with increasing valuation differences across adjacent intervals in hand, the standard price scheme---the one leaving each marginal type indifferent between adjacent bundles---makes the allocation incentive compatible. Individual rationality follows because the lowest type receives the empty bundle at a price of zero, and by IC every higher type obtains at least that outside utility.\footnote{Note that the set of valuation functions on $\mathcal{B}$, $\{v(\cdot, t): t \in T\}$, is not necessarily convex since $v(b, \cdot)$ need not be linear in $t$, so the feasibility of deterministic allocations in \cite{saksandyu2005} cannot be directly applied.}

Part (ii) of Assumption \ref{a0} ranks bundles: for every pair, the direction in which $v(b,t)-v(b',t)$ moves with $t$ orders $b$ against $b'$, and the resulting ranking is complete. The product-line and bundling literatures work with the same kind of order---\cite{mussa1978}, \cite{ghili2023characterization}, and \cite{yang2023nested} likewise impose monotonicity conditions on ordered pairs of products---but there the order is given: qualities on the real line, or a chain under set inclusion, with the conditions imposed along it. Here the conditions apply to all pairs at once, and the ranking they induce need not coincide with set inclusion or any other order available before the analysis. 

Moreover, this global ordering is not necessary for the upper-envelope methodology. Section \ref{s663} demonstrates an alternative version of Theorem \ref{t1}: Assumption \ref{a0} only needs to be imposed on all bundles in some $\mathcal{B}'  \subseteq \mathcal{B}$, and, under an additional condition, the set of minimal optimal menus of \eqref{eq5} still coincides with the set of minimal optimal menus of \eqref{eq6}, with each minimal optimal menu a subset of $\mathcal{B}'$.

\subsection{Characterizing and Computing the Upper Envelope}
\label{s22}

Theorem \ref{t1} reduces the monopolist's problem to finding the upper envelope of virtual value functions but does not specify structure of the upper envelope. This section further deals with it in two steps. First, it provides a theoretical characterization of the upper envelope using the concepts of dominance and best response. Second, it develops an algorithm that operationalizes this characterization, allowing for the systematic computation of the menu. 

I begin with two notions of dominance and best response induced by $\phi$ that I study in this paper. I bring up these two notions because the upper envelope of the virtual value functions consist of strict best response bundles that I define below. While the strict best responses may be hard to characterize directly, it is more tractable to characterize the bundles excluded from the upper envelope—the never strict best responses, or equivalently, once the equivalence below is established, the very weakly dominated bundles.

\begin{definition}
\label{def1}
    A stochastic bundle $a \in \Delta(\mathcal{B})$ is \textit{very weakly dominated} if there exists $a' \in \Delta(\mathcal{B}), a' \neq a$ such that for any $t \in T$, $\phi(a,t) \leq \phi(a',t).$
\end{definition}

\begin{definition}
\label{def2}
    A bundle $b \in \mathcal{B}$ is a \textit{never strict best response} if for any $t \in T$, there exists $b' \in \mathcal{B}, b' \neq b$ such that $\phi(b,t) \leq \phi(b',t)$.\footnote{This is equivalent to say: for any $t \in T$, there exists $a' \in \Delta(\mathcal{B}), a' \neq b$ such that $\phi(b,t) \leq \phi(a',t)$.} Correspondingly, a bundle $b \in \mathcal{B}$ is a \textit{strict best response} if there exists $t' \in T$ such that $\phi(b,t')>\phi(\tilde{b},t')$ for any $\tilde{b} \neq b$.
\end{definition}

These two definitions resemble the notions of dominated strategies and best response in game theory. Note that $b'$ in Definition \ref{def2} could be type-dependent, as in the standard ``never best response'' definitions.  

The following assumption helps establish the equivalence between very weakly dominated deterministic bundles and never strict best response bundles. Moreover, this assumption also ensures the uniqueness of the minimal optimal menu and the fact that it is indeed the set of strict best responses. The assumption requires $\phi$ to have a certain single-crossing difference property. 

\begin{assumption}
\label{a1}
    For any $a,a' \in \Delta(\mathcal{B}), \phi(a,t)-\phi(a',t)$ is single-crossing in $t$.\footnote{A function $\tilde{f}: T \rightarrow \mathbb{R}$ is single-crossing in $t$ if $\sign [\tilde{f}(t)]$ is monotonic in $t$. $\sign: \mathbb{R} \rightarrow \{-1,0,1\}$ is defined as: $ \sign x =  -1 \text{ if } x<0; =0 \text{ if } x=0; =1 \text{ if }x>0 $.} 
\end{assumption}

Note that Assumption \ref{a1} is imposed on any pairs of \emph{stochastic} bundles, whereas Assumption \ref{a0} is only imposed on any pairs of \emph{deterministic} bundles. The virtual value function $\phi$ satisfying Assumption \ref{a1} is the same as $\phi$ having SCD$^{\star}$, or \emph{single-crossing differences in convex environments}, defined in \cite{kartik2023}.\footnote{This is because the set of functions $\{\phi(a,\cdot): T \rightarrow \mathbb{R}\}_{a \in \Delta(\mathcal{B})}$ is convex: for any $a,a' \in \Delta(\mathcal{B}), t \in T, \lambda \in [0,1], \lambda \phi(a,t)+(1-\lambda) \phi(a',t)= \phi(\lambda a+(1-\lambda) a',t)$, which implies $\lambda \phi(a,\cdot)+(1-\lambda)\phi(a',\cdot)=\phi(\lambda a+(1-\lambda) a',\cdot) \in \{\phi(a,\cdot): T \rightarrow \mathbb{R}\}_{a \in \Delta(\mathcal{B})}$.}
\begin{proposition}
\label{p1}
    Suppose Assumption \ref{a1} holds. $b \in \mathcal{B}$ is a never strict best response if and only if $b$ is very weakly dominated.
\end{proposition}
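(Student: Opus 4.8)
The plan is to prove the two directions separately; the implication ``very weakly dominated $\Rightarrow$ never strict best response'' is immediate, and the converse carries all the weight. For the easy direction, suppose $b$ is very weakly dominated, say by $a'\in\Delta(\mathcal{B})$ with $a'\neq b$ and $\phi(b,t)\le\phi(a',t)$ for every $t\in T$. Then this single stochastic bundle $a'$ witnesses, for every $t$, the inequality required in Definition~\ref{def2} (in the equivalent stochastic form spelled out in its footnote), so $b$ is a never strict best response.

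For the converse, assume $b$ is a never strict best response. I will work with the rescaled function $\tilde\phi$, the positive weight $f_1$, and the increasing $\lambda:T\to[0,1]$ with $\lambda(\underline t)=0$, $\lambda(\bar t)=1$ supplied by Proposition~1 of \cite{kartik2023} and recorded in the text, so that $\tilde\phi(a,t)=(1-\lambda(t))\tilde\phi(a,\underline t)+\lambda(t)\tilde\phi(a,\bar t)$ for all $a\in\Delta(\mathcal{B})$. Since $f_1>0$, the pointwise order among the functions $\phi(a,\cdot)$ agrees with that among the $\tilde\phi(a,\cdot)$, and ``never strict best response'' is the same notion for $\phi$ and for $\tilde\phi$, so it is harmless to replace $\phi$ by $\tilde\phi$. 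Write $\psi(a,s):=(1-s)\tilde\phi(a,\underline t)+s\tilde\phi(a,\bar t)$ for $s\in[0,1]$; this is affine in $s$ and affine in $a$, and $\tilde\phi(a,t)=\psi(a,\lambda(t))$. I will use two facts. (i) Because every $\lambda(t)$ lies in $[0,1]$, the inequalities $\psi(a,0)\ge\psi(a',0)$ and $\psi(a,1)\ge\psi(a',1)$ already force $\tilde\phi(a,\cdot)\ge\tilde\phi(a',\cdot)$ on all of $T$, so pointwise domination can be checked at the two boundary types only. (ii) $\lambda(T)=[0,1]$: by Lemma~\ref{l1} the weight $f_1$ is continuous, and each $\phi(a,\cdot)$ is continuous (it is assembled from the $C^{1}$ map $v$ and the continuous positive density), so $\tilde\phi(a,\cdot)-\tilde\phi(a',\cdot)=f_1(\cdot)[\phi(a,\cdot)-\phi(a',\cdot)]$ is continuous; substituting this into the convex-combination identity shows $\lambda$ is continuous as soon as some pair $a,a'$ has $\tilde\phi(a,\underline t)-\tilde\phi(a',\underline t)\neq\tilde\phi(a,\bar t)-\tilde\phi(a',\bar t)$, and then $\lambda(T)=[\lambda(\underline t),\lambda(\bar t)]=[0,1]$. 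If no such pair exists, all pairwise differences of $\tilde\phi$ are constant in $t$, and in that degenerate case both ``never strict best response'' and ``very weakly dominated'' collapse to ``there is $b'\neq b$ with $\tilde\phi(b',\cdot)\ge\tilde\phi(b,\cdot)$'', so the claim holds trivially; I dispatch this case at the outset.

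Now I run the minmax argument in the nondegenerate case. Combining never-strict-best-response with (ii): for every $s\in[0,1]$ there is a type $t$ with $\lambda(t)=s$ and a bundle $b'\neq b$ with $\tilde\phi(b',t)\ge\tilde\phi(b,t)$, i.e.\ $\max_{b'\in\mathcal{B}\setminus\{b\}}\psi(b',s)\ge\psi(b,s)$. Consider the zero-sum game in which the maximizer picks $a\in\Delta(\mathcal{B}\setminus\{b\})$, the minimizer picks $\mu\in\Delta(\{\underline t,\bar t\})$ (equivalently $s\in[0,1]$), and the payoff is $\psi(a,s)-\psi(b,s)$. The payoff is bilinear and the strategy sets are nonempty, compact, and convex, so the minmax theorem (\cite{sion1958}) gives $\max_{a}\min_{s}[\psi(a,s)-\psi(b,s)]=\min_{s}\max_{a}[\psi(a,s)-\psi(b,s)]$, and the right-hand side is $\ge 0$ by the previous sentence. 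Hence there is $a^{*}\in\Delta(\mathcal{B}\setminus\{b\})$ with $\psi(a^{*},s)\ge\psi(b,s)$ for all $s$, in particular at $s=0$ and $s=1$; by (i) this upgrades to $\tilde\phi(a^{*},\cdot)\ge\tilde\phi(b,\cdot)$ on $T$, hence $\phi(a^{*},\cdot)\ge\phi(b,\cdot)$ on $T$ since $f_1>0$, and $a^{*}\neq b$ because $a^{*}$ lives in $\Delta(\mathcal{B}\setminus\{b\})$. So $b$ is very weakly dominated.

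The main obstacle is exactly the ``second challenge'' named in the text: the minmax theorem on its own only delivers some $a\in\Delta(\mathcal{B})$ with $\phi(a,\cdot)\ge\phi(b,\cdot)$, which could be $b$ itself. What makes the argument close is that restricting the maximizer to $\Delta(\mathcal{B}\setminus\{b\})$ does not lower the game's value below zero, and this uses SCD$^{\star}$ essentially: the affine-in-$s$ representation together with $\lambda(T)=[0,1]$ is what lets ``never strict best response'' be rephrased as $\max_{b'\neq b}\psi(b',s)\ge\psi(b,s)$ at \emph{every} $s$. Without SCD$^{\star}$ a best response at an intermediate type need not be attained on $\mathcal{B}\setminus\{b\}$ even when it is at both extremes, and the equivalence genuinely breaks, which is the content of Example~\ref{e2}; the continuity input behind $\lambda(T)=[0,1]$ (namely Lemma~\ref{l1}) is the other place where the interval structure of $T$ is really needed.
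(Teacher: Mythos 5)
Your proof is correct, and for the hard direction it takes a genuinely different route from the paper. The paper applies Sion's minmax theorem to $u^b(a,t)=f_1(t)[\phi(b,t)-\phi(a,t)]$ with the maximizer ranging over the \emph{full} simplex $\Delta(\mathcal{B})$ (quasi-concavity in $t$ coming from the MD$^\star$ property of the rescaled $\tilde\phi$, continuity from Lemma \ref{l1}), which yields some dominating mixture but possibly $b$ itself; it then spends the bulk of the proof on a case analysis of crossing points ($t_*$, $t^*$, mixtures of $\tilde b_1,\tilde b_2$ calibrated at $\underline t$, and a perturbation argument) to show, via Assumption \ref{a1}, that the argmin cannot be $\{b\}$ alone. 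You instead exclude $b$ from the maximizer's strategy set at the outset, linearize the type dimension through the two-point affine representation $\psi(a,s)=(1-s)\tilde\phi(a,\underline t)+s\tilde\phi(a,\bar t)$, and use Lemma \ref{l1} to show $\lambda$ is continuous and onto $[0,1]$ in the nondegenerate case, so that ``never strict best response'' translates into nonnegativity of $\max_{b'\neq b}\psi(b',s)-\psi(b,s)$ at \emph{every} $s\in[0,1]$; a single minimax application to the restricted bilinear game then produces a dominating $a^*\in\Delta(\mathcal{B}\setminus\{b\})$, so distinctness is automatic. Your degenerate-case dispatch (all $\tilde\phi$-differences constant, so a pure witness at one type dominates globally) and the easy direction (via the footnote to Definition \ref{def2}, whose content is essentially the small support-argmax argument the paper writes out) are both fine; the assertion $\lambda(\underline t)=0,\lambda(\bar t)=1$ is indeed forced by the identity in the nondegenerate case, and even without it the argument would go through with the minimizer's set $[\lambda(\underline t),\lambda(\bar t)]$. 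What the two approaches buy: yours is shorter and makes the role of SCD$^\star$ transparent (it is exactly what guarantees that intermediate mixture weights are realized by actual types, the failure mode in Example \ref{e2}), but it leans on the full Kartik--Leal--Rappoport representation for lotteries; the paper's more laborious argument works directly with $\phi$ and single-crossing and is the template it reuses for Proposition \ref{p10} under the weaker deterministic-versus-lottery single-crossing hypothesis, where the representation you use is not available.
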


The basic idea of the argument is familiar from the duality of never-best responses and strict dominance (\cite{pearce1984}), but there are two crucial differences. First, $\phi$ is not necessarily linear in $t$. Second, under strict dominance ``$b$ is strictly dominated by $a$'' implies $b \neq a$, whereas under very weak dominance this implication is lost. As demonstrated in the proof, Assumption \ref{a1} plays a role in showing that there does exist $a \neq b$ such that $\phi(a,t) \geq \phi(b,t), \forall t$.

\begin{remark}
Example \ref{e2} in Online Appendix \ref{s622} will show that, without Assumption \ref{a1}, a never strict best response bundle may not be very weakly dominated, which is exactly the case when there exists a never strict best response $b$ such that $\phi(a,t) \geq \phi(b,t), \forall t$ only holds for $a=b$. 
\end{remark}


The existence of strict best response bundles can be regarded as a corollary of Proposition \ref{p1}, and will be introduced in Online Appendix \ref{sec:existence}. 

With Proposition \ref{p1} in hand, I am ready to introduce the characterization of the minimal optimal menus for problem \eqref{eq6}.

\begin{proposition}
\label{p4}
    Suppose Assumption \ref{a1} holds. $\{b_1,...,b_m\} \in 2^{\mathcal{B}}$ is a minimal optimal menu for problem \eqref{eq6} if and only if the following hold:
    \begin{enumerate}
        \item Any very weakly dominated bundle, in particular, any $b \in \mathcal{B} \backslash \{b_1,...,b_m\}$ is very weakly dominated by a convex combination of $b_1,...,b_m$.
        \item For any $i \in \{1,...,m\}, b_i$ is not very weakly dominated. \label{cp2}
    \end{enumerate}
    In addition, the minimal optimal menu for problem \eqref{eq6} is unique. 
\end{proposition}

The two conditions say that a minimal optimal menu has the flavor of a von Neumann--Morgenstern stable set (\cite{vonneumann1944}) of the very weakly dominance relation---externally stable in that every excluded bundle is dominated by a convex combination of menu bundles, internally stable in that no menu bundle is dominated. It is worth mentioning that if $\{b_1,...b_m\}$ does not satisfy condition \ref{cp2} above, then $\{b_1,...,b_m\}$ is an optimal but not necessarily minimal optimal menu for problem \eqref{eq6}. This condition, along with Proposition \ref{p1}, ensures that getting rid of everything very weakly dominated is exactly getting rid of everything that is not a strict best response. 

Proposition \ref{p4} says that a set of bundles is a minimal optimal menu if and only if it is exactly the set of not very weakly dominated bundles.\footnote{This result resembles Theorem 3 of \cite{cheng2023do}, which basically says that the set of strict best response is the minimal optimal menu when never strict best response is equivalent to very weakly dominance.} The uniqueness result directly follows from this observation, as the set of not very weakly dominated bundles is unique. 

\begin{remark}
\label{rmk:nsbr}
Characterizing minimal optimal menus requires \textit{never strict best response}, not merely never best response: a bundle that is optimal at some type but never \textit{uniquely} optimal is redundant, yet ``never best response'' fails to exclude it. For instance, if $b''$ ties $b$ and $b'$ at a single type but is never the sole maximizer, then it could be that $\{b,b'\}$ and $\{b,b',b''\}$ are both optimal menus, so $b''$ should be dropped when discussing minimal optimality---which only the strict notion does.\footnote{The strict notion raises one caveat: if $\phi(b,t)=\phi(b',t)$ for all $t$, then $b$ and $b'$ very weakly dominate each other, yet there could be cases where neither should be removed from the upper envelope. The assumption in Section \ref{s332} that distinct bundles induce distinct virtual value functions rules this out.} 
\end{remark}

\begin{remark}
    Examples \ref{e5} and \ref{e6} in Online Appendix \ref{s622} will show that both the properties of minimal optimal menus specified in Proposition \ref{p4} and the uniqueness of the minimal optimal menu cannot hold if Assumption \ref{a1} is not satisfied. 
\end{remark}

In addition to building up the equivalence between never strict best response and very weakly dominance, Assumption \ref{a1} also plays a role in offering a simple condition that can be used to tell when bundle $b$ is very weakly dominated by (stochastic) bundle $a$: if $b$'s virtual value is no larger than $a$'s virtual value at $\underline{t}$ and $\bar{t}$, then $b$'s virtual value is no larger than $a$'s virtual value for any $t$, as $\sign[\phi(a,t)-\phi(b,t)]$ is monotonic in $t$. Formally,

\begin{corollary}
\label{c2}
    Suppose Assumption \ref{a1} holds. $b \in \mathcal{B}$ is a never strict best response if and only if there exists $a \in \Delta(\mathcal{B})$ such that $\phi(a,\bar{t}) \geq \phi(b,\bar{t})$ and $\phi(a,\underline{t}) \geq \phi(b,\underline{t})$. 
\end{corollary}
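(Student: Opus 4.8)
The plan is to read off both directions from Proposition \ref{p1} together with the affine representation of $\tilde{\phi}$ recorded just before Lemma \ref{l1}; the only genuinely new observation needed is that agreement of the $\phi$-values (in the $\ge$ sense) at the two endpoints $\underline t$ and $\bar t$ already forces it at every $t\in T$, which is exactly the force of single-crossing / the affine structure.

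For the ``only if'' direction I would argue directly: if $b$ is a never strict best response then, by Proposition \ref{p1}, $b$ is very weakly dominated, so there is $a\in\Delta(\mathcal{B})$ with $a\neq b$ and $\phi(a,t)\ge\phi(b,t)$ for all $t\in T$; in particular $\phi(a,\underline t)\ge\phi(b,\underline t)$ and $\phi(a,\bar t)\ge\phi(b,\bar t)$.

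For the ``if'' direction, take $a\in\Delta(\mathcal{B})$ (with $a\neq b$, consistent with Definition \ref{def1}) such that $\phi(a,\underline t)\ge\phi(b,\underline t)$ and $\phi(a,\bar t)\ge\phi(b,\bar t)$. By the representation there are $f_1>0$, $f_2$, and an increasing $\lambda:T\to[0,1]$ with $\tilde{\phi}(x,t)=f_1(t)\phi(x,t)+f_2(t)=\lambda(t)\,\tilde{\phi}(x,\bar t)+(1-\lambda(t))\,\tilde{\phi}(x,\underline t)$ for every $x\in\Delta(\mathcal{B})$. Since $f_1(t)>0$, the endpoint inequalities transfer to $\tilde{\phi}$, giving $\tilde{\phi}(a,\underline t)\ge\tilde{\phi}(b,\underline t)$ and $\tilde{\phi}(a,\bar t)\ge\tilde{\phi}(b,\bar t)$. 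Subtracting the two representations,
$$\tilde{\phi}(a,t)-\tilde{\phi}(b,t)=\lambda(t)\bigl[\tilde{\phi}(a,\bar t)-\tilde{\phi}(b,\bar t)\bigr]+(1-\lambda(t))\bigl[\tilde{\phi}(a,\underline t)-\tilde{\phi}(b,\underline t)\bigr]\ \ge\ 0$$
for all $t$, being a convex combination of two nonnegative numbers. Dividing by $f_1(t)>0$ yields $\phi(a,t)\ge\phi(b,t)$ for all $t\in T$, i.e.\ $b$ is very weakly dominated by $a$, and Proposition \ref{p1} then gives that $b$ is a never strict best response. (Equivalently, and as flagged in the footnote, one can bypass the representation: were $\phi(a,t_0)<\phi(b,t_0)$ for some $t_0$, then $\sign[\phi(a,\cdot)-\phi(b,\cdot)]$ would be $-1$ at $t_0$ yet nonnegative at $\underline t\le t_0$; monotonicity of this sign forces it to be $-1$ at $\bar t\ge t_0$, contradicting $\phi(a,\bar t)\ge\phi(b,\bar t)$.)

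I do not anticipate a real obstacle here: the difficult parts --- the minmax argument, and especially producing a dominating bundle genuinely distinct from $b$ --- are already packaged inside Proposition \ref{p1}, so that the backward direction uses only the elementary half of it, and what remains is the ``two endpoints pin down the whole interval'' fact delivered by the affine structure of $\tilde{\phi}$. The one point warranting a word of care is the implicit restriction $a\neq b$ in the statement, without which $a=\delta_b$ would trivially satisfy the right-hand condition and the equivalence would fail.
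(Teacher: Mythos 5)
Your proposal is correct and follows essentially the same route as the paper's in-text argument: the ``only if'' direction is read off Proposition \ref{p1}, and the ``if'' direction uses the $\tilde{\phi}$ representation (equivalently MD$^{\star}$, or the footnoted sign-monotonicity argument) to extend the endpoint inequalities to all of $T$ and then applies Proposition \ref{p1} again. Your remark that the statement must implicitly require $a\neq b$ (otherwise $a=\delta_b$ trivializes the condition) is consistent with the paper's reading, where the condition is meant to certify very weak dominance in the sense of Definition \ref{def1}.
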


Proposition \ref{p4} and Corollary \ref{c2} offer the potential to come up with an algorithm that can generate the minimal optimal menu, in particular, an algorithm that can (iteratively) getting rid of all very weakly dominated bundles. To have a good grasp of the algorithm, I first introduce the following lemma: 

\begin{lemma} \label{l2} 
Suppose Assumption \ref{a1} holds. For any $b, b', b'' \in \mathcal{B}$ with
\begin{equation*}  
\phi(b,\underline{t}) < \phi(b',\underline{t}) < \phi(b'',\underline{t}) \text{ and } \phi(b,\bar{t}) > \phi(b',\bar{t}) > \phi(b'',\bar{t}),
\end{equation*}
$b'$ is very weakly dominated by a convex combination of $b$ and $b''$ if and only if 
\begin{equation} \label{eq66}
\dfrac{\phi(b,\bar{t}) - \phi(b',\bar{t})}{\phi(b',\bar{t}) - \phi(b'',\bar{t})} \geq \dfrac{\phi(b,\underline{t}) - \phi(b',\underline{t})}{\phi(b',\underline{t}) - \phi(b'',\underline{t})}.
\end{equation}
\end{lemma}

\paragraph{Interpretation via marginal revenue curves.} Define $MR(b,q)$ as the standard marginal revenue curve for bundle $b$ at quantitiy $q \in [0,1]$. Since $\phi(b, t) = MR(b, 1-F(t))$ for any $b$ and $t$, the hypothesis of Lemma \ref{l2} can be restated as
\begin{equation*}
MR(b, 1) < MR(b', 1) < MR(b'', 1) \text{ and } MR(b, 0) > MR(b', 0) > MR(b'', 0),
\end{equation*}
and the inequality \eqref{eq66} becomes
\begin{equation*}
\dfrac{MR(b, 0) - MR(b', 0)}{MR(b', 0) - MR(b'', 0)} \geq \dfrac{MR(b, 1) - MR(b', 1)}{MR(b', 1) - MR(b'', 1)}.
\end{equation*}
Ranked by marginal revenue, $b''$ is most attractive at $q = 0$ (the top of the demand curve, low quantities) while $b$ is most attractive at $q = 1$ (the bottom, high quantities). Bundle $b'$ lies strictly between them at both endpoints.

The lemma asks when this intermediate position suffices to make $b'$ part of the optimal menu. The answer is: not always. Inequality \eqref{eq66} is a \emph{chord test}---it requires that the chord connecting $\phi(b'', \cdot)$ and $\phi(b, \cdot)$ at the two endpoints passes weakly above $\phi(b', \cdot)$. When this holds, a randomization between $b$ and $b''$ replicates at least as much marginal revenue as $b'$ at both endpoints, so $b'$ is redundant. 

I proceed to demonstrate the algorithm for generating the minimal optimal menu for problem \eqref{eq6}: 

\begin{algorithm}[H]

\KwIn{A set of bundles $\mathcal{B}$; the value of $\phi$ at $\bar{t}$ and $\underline{t}$: $\phi(\cdot, \bar{t}); \phi(\cdot, \underline{t})$.\footnotemark} 
\KwOut{The minimal optimal menu for problem \eqref{eq6}.}

\textbf{Step 1:} Remove from $\mathcal{B}$ all $b$ for which there exists 
$b' \neq b$ such that 
$\phi(b',\bar{t}) \ge \phi(b, \bar{t})$ and $\phi(b', \underline{t}) \ge \phi(b, \underline{t})$.\\

\Repeat{\text{no bundles are removed during Step 2}}{
  \textbf{Step 2:} Denote the remaining bundles as 
  $\{b'_1, b'_2, \dots, b'_l\}$ such that 
  $\phi(b'_1, \underline{t}) > \phi(b'_2, \underline{t}) > \dots > \phi(b'_l, \underline{t}); \phi(b'_1, \bar{t}) < \phi(b'_2, \bar{t}) < \dots < \phi(b'_l, \bar{t})$. 
  
  \For{$i \leftarrow 2$ \KwTo $l-1$}{
    \If{\(
      \frac{\phi(b'_{i+1}, \bar{t}) - \phi(b'_i, \bar{t})}%
           {\phi(b'_i, \bar{t}) - \phi(b'_{i-1}, \bar{t})}
      \;\;\ge\;\;
      \frac{\phi(b'_{i+1}, \underline{t}) - \phi(b'_i, \underline{t})}%
           {\phi(b'_i, \underline{t}) - \phi(b'_{i-1}, \underline{t})}
      \)}
    {
      Remove $b'_i$ from $\mathcal{B}$\;
    }
  }
}

\caption{Generating the Minimal Optimal Menu}
\label{al1}
\end{algorithm}
\footnotetext{The input does not need to be the set of functions $\{\phi(b,\cdot)\}_{b \in \mathcal{B}}$.}
\normalsize
Note that a direct implementation of Algorithm \ref{al1} runs in time polynomial in $|\mathcal{B}|$. When there are $n$ goods, $|\mathcal{B}|=2^n$, which is exponential in $n$. This complexity is typical in multi-good mechanism design: enumerating all bundles is often unavoidable unless further structural assumptions reduce the search space, so an exponential dependence on $n$ is not surprising.

Lemma \ref{l2} is key in Step 2 in the algorithm, which can be regarded as the deletion of very weakly dominated bundles that are not very weakly dominated by pure bundles but are very weakly dominated by stochastic bundles. The output of the algorithm is characterized by the following theorem:

\begin{theorem}
\label{t2}
    Suppose Assumptions \ref{a0} and \ref{a1} hold. The output of Algorithm \ref{al1} is the minimal optimal menu for the monopolist's problem (problem \eqref{eq5}). 
\end{theorem}

With Theorem \ref{t1}, it suffices to show that, under Assumption \ref{a1}, the output of Algorithm \ref{al1} is the minimal optimal menu for problem \eqref{eq6}. The key step of the proof is to show that, given a set of bundles $\{b_1,...,b_m\}$, if $b_i$ is not very weakly dominated by any convex combination of any pairs of bundles $b_j$ and $b_k$, then $b_i$ is not very weakly dominated by any convex combination of $b_1,...,b_{i-1},b_{i+1},...,b_m$. This observation does not hold generally, and depends on Assumption \ref{a1}. 

\subsection{Discussions of Assumptions}
\label{sec:discussionofassump}
In this subsection, I discuss what $v$ and $F$, and more broadly what environments, satisfy Assumptions \ref{a0} and \ref{a1}. 
Readers primarily interested in applications may proceed directly to the tree bundling application (Section \ref{tree}), where Assumptions \ref{a0} and \ref{a1} are maintained throughout.

\label{s23}

Assumptions \ref{a0} and \ref{a1} provide analytical structure directly on the virtual value function $\phi$. To understand the economic foundation of this structure, I proceed to explore the primitive conditions on the consumer's valuation $v$ and the type distribution $F$ that ensure these assumptions hold. I first propose a stronger assumption than the combination of Assumptions \ref{a0} and \ref{a1}:

\begin{assumption}
    \label{a2}
    For any $a,a' \in \Delta(\mathcal{B}), \phi(a,t)-\phi(a',t)$ is monotonic in $t$. 
\end{assumption}

The virtual value function $\phi$ satisfying Assumption \ref{a2} is $\phi$ having MD$^{\star}$, or \emph{monotonic differences in convex environments}, defined in \cite{kartik2023}. Again, Assumption \ref{a2} imposes the condition on any pair of stochastic bundles, and it is clear that if $\phi$ satisfies Assumption \ref{a2}, then $\phi$ satisfies Assumption \ref{a1}, but the converse does not hold. Nevertheless, since both Assumptions \ref{a1} and \ref{a2} are conditions on any pair of stochastic bundles, there is indeed a condition that can bridge the gap between the two. Detailed discussions can be found in Proposition \ref{p133} in Online Appendix \ref{sec:a1a2}.

The relationship between Assumptions \ref{a2} and \ref{a0}, in particular, part (ii) of Assumption \ref{a0} follows from the observation below:

\begin{proposition}
    \label{prop:mdphiv}
    If for any $b,b' \in \mathcal{B}, \phi(b,t)-\phi(b',t)$ is monotonic in $t$, then for any $b,b' \in \mathcal{B}, v(b,t)-v(b',t)$ is monotonic in $t$. 
\end{proposition}

Hence, Assumption \ref{a2} is indeed stronger than the combination of Assumptions \ref{a0} and \ref{a1}, and any $(v,F)$ pair that satisfies Assumption \ref{a2} also satisfies Assumptions \ref{a0} and \ref{a1}. 

As a benchmark, the canonical settings of \citet{myerson1981optimal} and \cite{mussa1978} satisfy Assumption \ref{a2}---and hence Assumptions \ref{a0} and \ref{a1}---whenever $F$ is regular. In the bundling notation, the valuation function corresponds to $v(b,t) = g(b)\,t$ for some $g: \Delta(\mathcal{B}) \to \mathbb{R}$, so that for any $a, a' \in \Delta(\mathcal{B})$,
\[
\phi(a,t) - \phi(a',t) = [g(a) - g(a')]\left[t - \frac{1-F(t)}{f(t)}\right],
\]
which is monotonic in $t$ if the function $t - \frac{1-F(t)}{f(t)}$ is increasing, i.e., when $F$ is regular.

Going further, by \cite{kartik2023}, $\phi$ satisfying Assumption \ref{a2} is equivalent to the following: for any $a \in \Delta(\mathcal{B}),$ there exist $g_1,g_2: \Delta(\mathcal{B}) \rightarrow \mathbb{R}$ and $h_1,h_2: T \rightarrow \mathbb{R}$ such that:
\begin{equation} \label{eq9} \phi(a,t)=g_1(a)h_1(t)+g_2(a)+h_2(t), \end{equation}
where $h_1$ is monotonic. Consider the right hand side of (\ref{eq9}) in terms of deterministic bundles and plug it into the left hand side of (\ref{eq4}):
\begin{equation}  \label{eq12} v(b,t)-\dfrac{1-F(t)}{f(t)}v_t(b,t)=g_1(b)h_1(t)+g_2(b)+h_2(t).  \end{equation}
Given $b$ and $F$, \eqref{eq12} can be regarded as a linear ODE of order one, for which the solution is:
\begin{equation*}
    \begin{split}
        v(b,t) &=\dfrac{\displaystyle{\int} f(t)[g_1(b)h_1(t)+g_2(b)+h_2(t)]dt \bigg|_{t=\bar{t}}  -\displaystyle{\int} f(t)[g_1(b)h_1(t)+g_2(b)+h_2(t)]dt}{1-F(t)} , t\in [\underline{t},t),  \\  & v(b, \bar{t}) =g_1(b)h_1(\bar{t})+g_2(b)+h_2(\bar{t}),
    \end{split}
\end{equation*}
for any $b \in \mathcal{B}$, in which the boundary condition is pinned down by the continuity of $v$ at $\bar{t}$. From the above expression, if for any $i \in \{1,2\}, a \in \Delta(\mathcal{B}), g_i(a)=\mathbb{E}_{b \sim a} [g_i(b)]$, i.e., the expected utility property, then $v(a,t)=\mathbb{E}_{b \sim a} [v(b,t)]$ is satisfied for all $a \in \Delta(\mathcal{B})$ and the solution form persists. Hence, it suffices to focus on the valuation functions for deterministic bundles. To further simplify, for any $b,t$, \begin{equation} \label{eq:multiadd} v(b,t)=g_1(b)x(t)+g_2(b)+y(t), \tag{quality-index} \end{equation}
where $x,y: T \rightarrow \mathbb{R}$.\footnote{A foundation of this simplification is the quantile transformation \`{a} la \cite{toikka2011}. See Online Appendix \ref{sec:quantile} for details.}  The name reflects the reading the form imposes: each bundle is summarized by a one-dimensional quality index $g_1(b)$ and a type-independent baseline $g_2(b)$, and each type by a marginal value $x(t)$ for that index---so the entire interaction between bundles and types runs through a quality dimension.\footnote{``Quality'' is a ``memory aid'': neither monotonicity of $x$ nor any relation between $g_1$ and set inclusion is imposed. The specification of \cite{HopkinsKornienko2004} below, where $x$ depends on rank, illustrates the reach beyond vertical differentiation.}\textit{Under this specification}, the following observation holds:
\begin{proposition}
\label{prop:equivalence}
    Suppose $v$ satisfies \eqref{eq:multiadd}. If for any $i \in \{1,2\}, g_i$ satisfies the expected utility property, then:
 \[ \hat{x}(t):=x(t)-\frac{1-F(t)}{f(t)}x'(t) \text{ is monotonic in } t \Rightarrow \text{ Assumption \ref{a2} holds }. \]

\end{proposition}

The sufficient condition has a clean economic reading: $x(t)$ is the type's marginal value for the single ``quality index'' $g_1$, and $\hat{x}(t)$ is its virtual counterpart; the entire apparatus of Section \ref{sec:solving}---the upper-envelope characterization (Theorem \ref{t1}), the equivalence between never-strict-best-response and very weak dominance (Proposition \ref{p1}), and the iterative elimination algorithm (Algorithm \ref{al1})---is guaranteed by a single condition: Myersonian regularity of the virtual taste $\hat{x}$. 

After reducing to the monotonicity of $\hat{x}$, apart from the aforementioned benchmark settings, two economically natural pairings of $(x,F)$ illustrate the reach of the condition.

The first is a concave generalization of \cite{mussa1978}: the consumer's taste $x$ for the bundle characteristic $g_1$ increases in her type but at a diminishing rate, so higher types value quality more, yet the taste gap between adjacent types narrows toward the top---a natural saturation pattern for most quality dimensions. When tastes exhibit this shape ($x' \geq 0$, $x'' \leq 0$) and $F$ is regular,
\[ \frac{d\hat{x}(t)}{dt}=\underbrace{x'(t)\,\frac{d}{dt}\Big[t-\frac{1-F(t)}{f(t)}\Big]}_{\geq 0 \text{ by regularity}}\;\underbrace{-\,\frac{1-F(t)}{f(t)}\,x''(t)}_{\geq 0 \text{ by concavity}} \geq 0, \]
so $\hat{x}$ is monotone: the information-rent correction, which scales with the slope of tastes, shrinks exactly where tastes flatten, and never overturns the underlying order. The mirror case ($x' \leq 0$, $x'' \geq 0$) describes heterogeneous sensitivity to a ``bad'' characteristic of the bundles---say, complexity or maintenance burden---with higher types less bothered by it.

The second pairing makes willingness to pay depend on the consumer's \emph{relative position} rather than the cardinal distance between types. Following \cite{HopkinsKornienko2004}, let the taste be determined by percentile rank, $x(t)=\tilde{q}(F(t))$, with $\tilde{q}$ increasing and concave, so that improvements in rank raise the taste for $g_1$ at a diminishing rate. This specification is natural when $t$ represents income or socioeconomic status and $g_1(b)$ measures the visibility or status intensity of bundle $b$.\footnote{This is related to conspicuous consumption and wealth signaling, e.g., \cite{bagwell1996veblen}.} Here,
\[ \frac{d\hat{x}(t)}{dt}=f(t)\Big[2\tilde{q}'(F(t))-(1-F(t))\,\tilde{q}''(F(t))\Big] \geq 0 \quad \text{for \emph{any} } F, \]
so monotonicity holds without regularity: because tastes are pinned to ranks, redistributing types across the line rescales the picture without reordering it, and only the shape of $\tilde{q}$ matters.

\section{Tree Bundling}
\label{tree}

This section presents the main application of this paper: \textit{tree bundling}. In particular, I introduce ``\textit{tree menu}'' and the conditions for the optimality of tree bundling. I start with the formal definitions of nested menus and tree menus:

\begin{definition}
    \label{d35} $\{b_1,...,b_m\} \in 2^\mathcal{B}$ is a nested menu if there exists a one-to-one mapping $\rho: \{1,...,m\} \rightarrow \{1,...,m\}$ such that $b_{\rho(1)} \subseteq ... \subseteq b_{\rho(m)}$.\footnote{In words, a menu is nested if any two bundles inside the menu are ordered by set inclusion.} 
\end{definition}

\begin{definition}
     \label{d33}
    $\{b_1,...,b_m\} \in 2^\mathcal{B}$ is a tree menu if:
    \begin{enumerate}
        \item $\text{ there exists }i \in \{1,...,m\}$ such that $b_i \neq \emptyset, \text{ for any } b_j \neq \emptyset, b_i \subseteq b_j$, and 
        \item $\text{ there exists } k,l \in \{1,...,m\}$ such that $b_k \nsubseteq b_l$ and $b_l \nsubseteq b_k$. 
    \end{enumerate}
\end{definition}

The first condition of Definition \ref{d33} characterizes the role of the ``\textit{root bundle}'': every bundle in the menu has to build on it. Note that in nested menus (Definition \ref{d35}), there is also a ``root bundle''--- the existence is guaranteed by the finiteness of the menu. In the language of product upgrading, the root bundle is the simplest, least upgraded one and the foundation for all upgrades. The uniqueness of the root bundle is straightforward: If there exists $b_i \neq b_i'$ such that both $b_i$ and $b_i'$ are root bundles, then $b_i \subseteq b_i', b_i' \subseteq b_i \Rightarrow b_i=b_i'$, a contradiction. 

The second condition of Definition \ref{d33} is what distinguishes a tree menu from a nested menu: It demonstrates the existence of different upgrade paths in a tree menu, whereas there is only one but not many upgrade paths in a nested menu. 

Here are some simple examples of tree and not-tree menus when selling three goods $\{1,2,3\}$: $\{\{1\}, \{1,2\}, \{1,3\}\}$ is a tree menu: $\{1\}$ is the root bundle, and there are two different upgrade paths: $\{1\} \rightarrow \{1,2\}$ and $\{1\} \rightarrow \{1,3\}$. In contrast, $\{\{1\},\{2\}, \{1,2\}\}$ is not a tree menu, as there does not exist a root bundle; $\{\{1\}, \{1,2\}, \{1,2,3\}\}$ is a nested but not a tree menu, as there is only one upgrade path. 

\subsection{Optimality of Tree Bundling}

This subsection demonstrates the optimality of tree bundling. I start with the definition of a bundle's \textit{sold-alone quantity}. 

\begin{definition}
\label{d4}
    The sold-alone quantity $Q(b) \in [0,1]$ of bundle $b \in \mathcal{B}$ is $1-F(t_b)$, in which $t_b \in T$ satisfies:
    \begin{equation*}
        \phi(b,t_b)=MR(b, 1-F(t_b))=0.
    \end{equation*}
\end{definition}

In words, bundle $b$'s sold-alone quantity is exactly the quantity sold by a monopolist (that maximizes her revenue) when she is only selling bundle $b$. The existence of the sold-alone quantity for any non-empty bundle $b$ is guaranteed by the assumption that $\max \limits_{b \neq \emptyset} \phi(b, \underline{t})<0$ and $\phi(b, \bar{t})=v(b, \bar{t})>0, \forall b \neq \emptyset$, and the continuity of $\phi$; meanwhile, if the single-crossing difference in Assumption \ref{a0} is strengthen to the strict version, then the uniqueness is guaranteed: for any $b, \phi(b,t)-\phi(\emptyset,t)=\phi(b,t)$ is strictly single-crossing in $t$.

As a first step of showing the optimality of tree bundling, I propose conditions such that any minimal optimal menu must have either a nested or a tree structure.

\begin{proposition}
\label{p33}
    Suppose Assumptions \ref{a0} and \ref{a1} hold. The minimal optimal menu is either a tree menu or a nested menu if there exists $b^{root} \in \mathcal{B}$ such that:
    \begin{enumerate}
        \item $\{b^{root}\}=\argmax \limits_{\tilde{b}} Q(\tilde{b})$, and  \label{cd1}
        \item $\text{ for any } b$ such that $b \not \supset b^{root}, v(b,\bar{t})<v(b^{root},\bar{t})$. \label{cd2}
    \end{enumerate}
\end{proposition}

Corresponding to the notation, $b^{root}$ is the root of the tree menu, or plays the role of ``$b_i$'' in Definition \ref{d33}. The first condition of Proposition \ref{p33} requires the root bundle to be the one with the largest sold-alone quantity. This reflects the real-life observation that the base of any product upgrade line is usually the most widely accepted product that can be sold to the largest set of types of consumers when selling alone. The second condition of Proposition \ref{p33} says that if a bundle does not contain the root bundle, then the valuation of it to the type $\bar{t}$ consumer is less than the valuation of the root bundle. This condition is natural in the sense that even though different consumer prefers different upgrades, all of them agree that the upgrades have to be built on the root. Any upgrade will lose its value without the support of the base, like a car with high-end leather and audio but without wheels cannot be sold. 

Taken together, these conditions ensure that any bundle that does not contain $b^{root}$ is very weakly dominated by $b^{root}$. Meanwhile, the root bundle is not very weakly dominated because it has the largest sold-alone quantity. Hence, the minimal optimal menu consists of the root bundle and bundles that contain it, which makes it either a tree or a nested menu.

An alternative approach can also ensure that bundles not containing the root bundle are very weakly dominated. This approach relies on two conditions: i) The assumption that the top-type consumer has a higher valuation for larger bundles, and ii) A ``partial union quantity condition'', which applies only to the root bundle and bundles that do not contain it.\footnote{See \cite{yang2023nested} or (\ref{eq114}) below for the definition of ``union quantity condition''.}

The following proposition formalizes this alternative characterization:

\begin{proposition}
\label{p34}
    Suppose Assumptions \ref{a0}, \ref{a1}, and the assumption that for any $b \subsetneq b', v(b, \bar{t})<v(b', \bar{t})$ hold. The minimal optimal menu is either a tree menu or a nested menu if there exists $b^{root} \in \mathcal{B}$ such that:
        \begin{enumerate}
        \item $\{b^{root}\}=\argmax \limits_{\tilde{b}} Q(\tilde{b})$, and 
        \item $\text{ for any } b$ such that $b \not \supset b^{root}, Q(b \cup b^{root}) \geq \min \{Q(b), Q(b^{root})\}$. \label{cd52}
    \end{enumerate}
   
\end{proposition}

This characterization is similar to the one in \cite{yang2023nested}, but the conditions do not give the optimality of nested bundling just because condition \ref{cd52} only holds for the root bundle and bundles that do not include the root, but not for \emph{any} two bundles. 

I proceed to discuss when the minimal optimal menu must be a tree menu but not a nested menu. I first demonstrate a result that makes \emph{every} bundle ``between'' the root bundle and the grand bundle, i.e., every bundle that contains the root bundle (and is contained by the grand bundle, which is trivially true) is included in the minimal optimal menu. Similar to Proposition \ref{p34}, I also need an additional assumption that depicts the monotonicity of type $\bar{t}$ consumer's valuation among certain nested bundles.

\begin{theorem}
\label{t33}
    Suppose conditions \ref{cd1} and \ref{cd2} of Proposition \ref{p33}, Assumptions \ref{a0} and \ref{a1}, and the assumption that for any $b_1, b_2$ with $b^{root} \subseteq b_1 \subsetneq b_2 \subseteq b^*$, $v(b_1,\bar{t})<v(b_2,\bar{t})$ hold, where $b^*$ is the grand bundle. The minimal optimal menu is a tree menu if:
    \begin{enumerate}
        \item $|\{\tilde{b}: b^{root} \subsetneq \tilde{b} \subsetneq b^*\}| \geq 2$, and
        \item $ \text{ there exists } \mathfrak{f}: [0,1] \rightarrow [0,1]$ that is increasing and strictly convex such that \begin{equation*} \text{ for any } \tilde{b} \text{ such that } b^{root} \subsetneq \tilde{b} \subsetneq b^*, \dfrac{\phi(\tilde{b},\underline{t})-\phi(b^{root},\underline{t})}{\phi(b^*,\underline{t})-\phi(b^{root},\underline{t})}=\mathfrak{f}(\dfrac{\phi(\tilde{b},\bar{t})-\phi(b^{root},\bar{t})}{\phi(b^*,\bar{t})-\phi(b^{root},\bar{t})}). \end{equation*} 
    \end{enumerate}

\end{theorem}

The two conditions have clean readings. The first says $b^*$ contains at least two goods beyond $b^{root}$, so at least two upgrade paths exist. The second says that, across all intermediate bundles, the normalized virtual values at $\underline{t}$ and $\bar{t}$ are linked by a single increasing, strictly convex map $\mathfrak{f}$.

These two properties of $\mathfrak{f}$ are exactly what keep every intermediate bundle on the menu. \emph{Increasing} rules out one intermediate bundle pointwise dominating another; \emph{strict convexity} rules out dominance by lotteries---via Lemma \ref{l2}, it ensures no intermediate bundle fails the chord test against its neighbors, so Algorithm \ref{al1} never removes one. Both properties are tight: if $\mathfrak{f}$ is linear ($\mathfrak{f}(x)=x$), every intermediate bundle is very weakly dominated by a convex combination of $b^{root}$ and $b^*$, collapsing the menu to $\{b^{root}, b^*\}$---not even a tree. 

See below a concrete example of Theorem \ref{t33}:
\begin{example}
\label{ex:tree}
A monopolist sells a computer (good ``$1$''), a lighting kit (good ``$2$''), and a performance upgrade (good ``$3$''). $t$ is uniformly distributed on $[0,1]$ and the valuation takes the \eqref{eq:multiadd} form of Section \ref{sec:discussionofassump}: $v(b,t)=g_1(b)\,x(t)+g_2(b)$ with $x(t)=-t^2+2t$, so that $\phi(b,t)=g_1(b)\,\hat{x}(t)+g_2(b)$ with $\hat{x}(t)=-3t^2+6t-2$ increasing on $[0,1]$, and Assumptions \ref{a0} and \ref{a1} hold by Proposition \ref{prop:equivalence}. The $g_1(b)$ and $g_2(b)$ for $b \in 2^{\{1,2,3\}}$ are
\begin{center}
\begin{tabular}{lcccccccc}
$b$ & $\emptyset$ & $\{1\}$ & $\{2\}$ & $\{3\}$ & $\{2,3\}$ & $\{1,2\}$ & $\{1,3\}$ & $\{1,2,3\}$ \\
\hline
$g_1(b)$ & $0$ & $1$ & $0.8$ & $0.9$ & $1.5$ & $2$ & $3$ & $5$ \\
$g_2(b)$ & $0$ & $1$ & $0.4$ & $0.5$ & $0.4$ & $1.5$ & $1.8$ & $2$ \\
\end{tabular}
\end{center}
Accessories are worth little without the computer, and the computer alone has the strictly largest sold-alone quantity, so the conditions of Proposition \ref{p33} hold with $b^{root}=\{1\}$; the conditions of Theorem \ref{t33} hold as well. Running Algorithm \ref{al1}, Step 1 removes the three bundles without the computer, and no bundle is removed in Step 2. The minimal optimal menu is
\[
\{\emptyset,\ \{1\},\ \{1,2\},\ \{1,3\},\ \{1,2,3\}\},
\]
a tree with root $\{1\}$ and two upgrade paths: add the lighting kit, or add the performance upgrade. Figure \ref{fig:tree-envelope} plots the virtual value functions; each menu bundle occupies its own segment of the upper envelope. Notably, the two branches serve adjacent segments of the middle market even though neither bundle contains the other: the allocation is ordered by virtual values, not by set inclusion. 
\end{example}

\begin{figure}[H]
    \centering
    \includegraphics[width=1\linewidth]{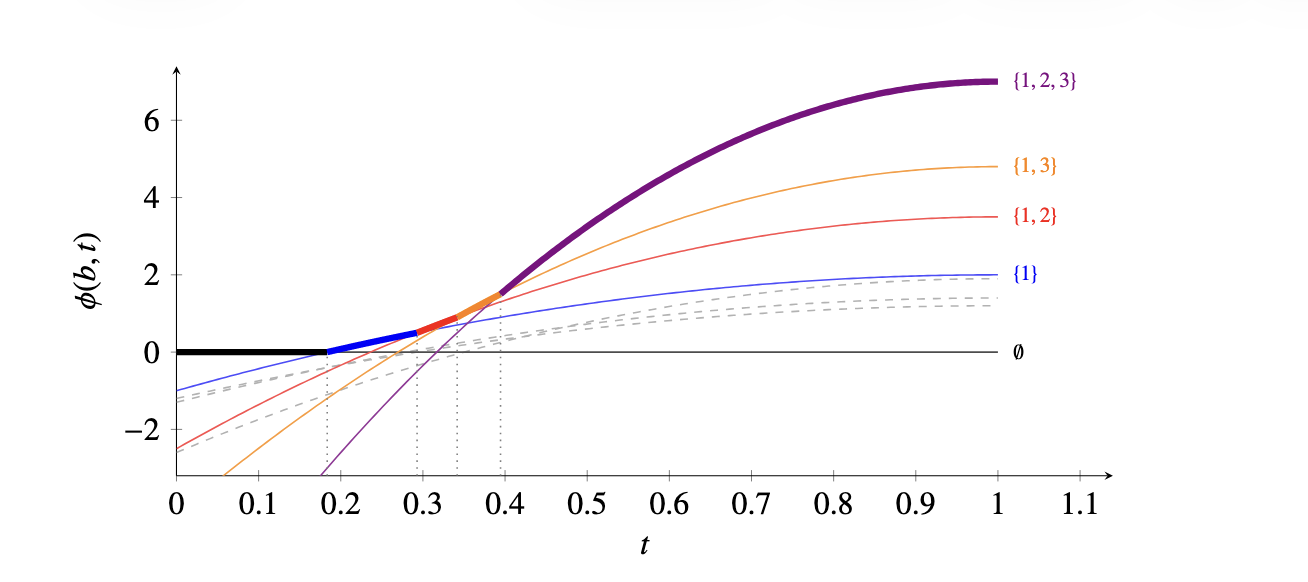}
    \caption{Virtual value functions in Example \ref{ex:tree}. The upper envelope is drawn thick, colored by the bundle attaining it on each segment; dotted verticals mark the marginal types. Gray dashed curves are bundles removed in Step 1 of Algorithm \ref{al1}.}
    \label{fig:tree-envelope}
\end{figure}

Theorem \ref{t33} delivers a tree by placing every intermediate bundle on the menu, which requires all of them to line up on a single convex curve. A tree, however, needs much less: it suffices that the menu contain two set-incomparable bundles. The next result exploits this observation. It identifies one good whose presence or absence generates the two branches, and imposes conditions only on the two bundles that differ from the corners by exactly that good.

Say that good $i$ is a \emph{least-favorite good} if
\[
\{i\}=\argmax_j v(b^* \backslash \{j\}, \bar{t}) \quad \text{and} \quad \{i\}= \argmin_{j,\, b^{root} \cup \{j\} \neq b^{root}} v(b^{root} \cup \{j\}, \bar{t}),
\]
that is, removing $i$ from the grand bundle causes the smallest drop in the top type's value, and adding $i$ to the root causes the smallest gain---like keyboard lighting on a computer or a decorative decal on a car.

\begin{theorem}
\label{t34}
Suppose conditions \ref{cd1} and \ref{cd2} of Proposition \ref{p33}, Assumptions \ref{a0} and \ref{a1}, and the assumption that for any $b_1, b_2$ with $b^{root} \subseteq b_1 \subsetneq b_2 \subseteq b^*$, $v(b_1,\bar{t})<v(b_2,\bar{t})$ hold. If there exists a least-favorite good $i$ and the branching condition (formally introduced in Appendix \ref{app:t34}) holds, then the minimal optimal menu is a tree menu. In particular, it contains $\{b^{root}, b^{root} \cup \{i\}, b^* \backslash \{i\}, b^*\}$. 
\end{theorem}

The branching condition concerns four bundles: the root $b^{root}$, the grand bundle $b^*$, and the two bundles that differ from them by exactly the least-favorite good---$b^{root} \cup \{i\}$, the cheapest upgrade from the root, and $b^* \setminus \{i\}$, the largest bundle short of everything. The condition ensures that all four survive the elimination of Algorithm \ref{al1}. Its first part applies the chord test of Lemma \ref{l2} among the four themselves: neither $b^{root} \cup \{i\}$ nor $b^* \setminus \{i\}$ is very weakly dominated by a lottery over the other three. Its second part restricts the endpoint virtual values of every other intermediate bundle $\tilde{b}$, so that no lottery involving $\tilde{b}$ can dominate $b^{root} \cup \{i\}$ or $b^* \setminus \{i\}$ either. The least-favorite property is what reduces this second part to a single, checkable restriction: it places $b^{root}\cup\{i\}$ and $b^*\setminus\{i\}$ at the two extremes of the intermediate bundles' values at $\bar{t}$, so the only lotteries that could threaten them are pinned down---one involving the root, the other involving the grand bundle---and one region for $\tilde{b}$'s endpoint values rules out both at once. The contrast with Theorem \ref{t33} is in how much the conditions demand: Theorem \ref{t33} keeps \emph{every} intermediate bundle on the menu, which requires all of them to stand in a precise relation to one another, while Theorem \ref{t34} keeps these two but not necessarily the rest, which requires conditions on the pair and merely enough discipline on the rest not to eliminate them.

Since $b^{root} \cup \{i\}$ and $b^* \setminus \{i\}$ are set-incomparable, the minimal optimal menu, which contains $\{b^{root}, b^{root} \cup \{i\}, b^* \setminus \{i\}, b^*\}$, cannot be nested---it is a tree. Economically, the two upgrade paths $b^{root} \rightarrow b^{root} \cup \{i\}$ and $b^{root} \rightarrow b^* \setminus \{i\}$ are differentiated enough in their marginal-revenue profiles that neither collapses into the other or into the base or grand bundle. This matches common practice: tech firms routinely sell $\{\text{``computer''},\; \text{``computer''}\!+\!\text{``keyboard lighting''},\; \text{``premium computer''}\!-\!\text{``keyboard lighting''},\; \text{``premium computer''}\}$.\footnote{For example, Alienware Aurora R16: \href{https://www.dell.com/en-us/shop/cty/pdp/spd/alienware-aurora-r16-desktop/useahbtsr16igtgy}{https://www.dell.com/en-us/shop/cty/pdp/spd/alienware-aurora-r16-desktop/useahbtsr16igtgy}.} An numerical example of Theorem \ref{t34} can be found in Online Appendix \ref{f4}

\section{Other Applications}
\label{sec:other}
I apply the framework to characterize optimal bundling with additive values (Section \ref{s62}) and establish results about the optimality of pure and nested bundling (Section \ref{s663}).
\subsection{Additive Values}
\label{s62}
This section considers one of the most natural cases in which the consumer's valuation function is additive. Rigorously, for a generic $b = \{i_1,...,i_j\}\in \mathcal{B} $, in which $i_1,...,i_j \in \{1,...,n\}, v(b,t)=\sum \limits_{k=1}^j v(\{i_k\},t), \forall t \in T$. 

The following theorem directly pins down the minimal optimal menu when values are additive:
\begin{theorem}
\label{t4}
    Suppose Assumptions \ref{a0} and \ref{a1} hold. If $(\dfrac{\phi(b,\bar{t})}{\phi(b,\underline{t})})_{|b|=1}$ are strictly ordered, i.e., \begin{equation*} \dfrac{\phi(b_{j_1},\bar{t})}{\phi(b_{j_1},\underline{t})} < \dfrac{\phi(b_{j_2},\bar{t})}{\phi(b_{j_2},\underline{t})}< ... < \dfrac{\phi(b_{j_n},\bar{t})}{\phi(b_{j_n},\underline{t})}, \end{equation*}
    where $\{b_{j_1},...,b_{j_n}\}=\{\{1\},...,\{n\}\}$, then when the consumer's valuation function is additive, the minimal optimal menu is a nested menu. In particular, it is \begin{equation*} \{\emptyset, \{b_{j_1}\}, \bigcup_{i=1}^2 \{b_{j_i}\}, ...,  \bigcup_{i=1}^{n-1}\{b_{j_i}\}, b^*\}.\footnote{It is not hard to see that $b^*=\{1,...,n\}=\bigcup_{i=1}^{n}\{b_{j_i}\}$.} \end{equation*} 
\end{theorem}

The additional condition on the strict order the $\phi$ ratios is arguably not demanding, as these ratios are naturally ordered, though may not strictly. The strictness is used to show that every bundle in the menu is not very weakly dominated. If the strictness assumption is not satisfied, then certain bundles in the menu will be very weakly dominated, making the menu not minimal optimal. In particular, if $\dfrac{\phi(b_{j_{k'}},\bar{t})}{\phi(b_{j_{k'}},\underline{t})}=\dfrac{\phi(b_{j_{k'+1}},\bar{t})}{\phi(b_{j_{k'+1}},\underline{t})}$ for $k' \in \{1,...,k-1\}$, then $\bigcup \limits_{i=1}^{k'}\{b_{j_i}\}$ is very weakly dominated by a convex combination of $\bigcup \limits_{i=1}^{k'-1}\{b_{j_i}\}$ and $\bigcup \limits_{i=1}^{k'+1}\{b_{j_i}\}$, a direct result of Lemma \ref{l2}. 


It is worth pointing out that, though the minimal optimal menu is nested, the nesting condition in \cite{yang2023nested} might not be satisfied: consider bundle $b_{j_1}$, for any $b \supsetneq b_{j_1}$, it is true that $\phi(b,\underline{t}) < \phi(b_{j_1}, \underline{t})$, but this observation does not rule out the possibility that $Q(b_{j_1}) > Q(b)$, in which $Q$ is the sold-alone quantity defined in Definition \ref{d4}. In other words, it is possible that $b_{j_1}$ is not dominated in the sense of \cite{yang2023nested}. Similarly, $b_{j_2}$ may also not be dominated in the sense of \cite{yang2023nested}. Nevertheless, $b_{j_1}$ and $b_{j_2}$ are not nested, a possible violation of the nesting condition.\footnote{As a side note, consider a setting where both Assumption \ref{a1} and all of \cite{yang2023nested}'s assumptions hold. If $b$ dominates $b'$ according to Definition \ref{def1}, then $b$ may not dominate $b'$ in the sense of \cite{yang2023nested}, as $b' \subseteq b$ may not hold. On the other hand, if $b$ dominates $b'$ in the sense of \cite{yang2023nested}, then $b'$ is either dominated by $b$ or is dominated by a convex combination of $b$ and $\emptyset$ in the sense of Definition \ref{def1}. }

\subsection{Optimality of Pure Bundling and Nested Bundling}
\label{s663}

This section presents results regarding the optimality of pure and nested bundling using conditions similar to the ones in \cite{ghili2023characterization} and \cite{yang2023nested}, with reconciliations into this paper's environment to better demonstrate how the ``taking and computing upper envelope'' approach shown above can be used to replicate their results. 

I first show a relatively general theorem, which is inspired by Theorem \ref{t1}, that captures what \cite{ghili2023characterization} and \cite{yang2023nested} aim to discover. 

\begin{theorem}
    \label{corollary: alternative}
    If there exists $\mathcal{B}' \subseteq \mathcal{B}$ such that 
    \begin{enumerate}
        \item for any $b,b' \in \mathcal{B}',$ (i) $\phi(b,t)-\phi(b',t)$ is single-crossing in $t$; (ii) $v(b,t)-v(b',t)$ is monotonic in $t$, and
        \item for any $t, \argmax \limits_{b \in \mathcal{B}} \phi(b,t) \subseteq \mathcal{B}'$,
    \end{enumerate}
    then the set of minimal optimal menus of problem \eqref{eq5} is the set of minimal optimal menus of problem \eqref{eq6}. In particular, every minimal optimal menu is a subset of $\mathcal{B}'$. 
\end{theorem}

The proof is straightforward from previous results: any minimal optimal menu of \eqref{eq6} does not contain bundles not included in $\mathcal{B}'$, so the fact that the set of minimal optimal menus of \eqref{eq6} is a subset of the set of minimal optimal menus of \eqref{eq5} naturally follows from Lemma \ref{l1} and the arguments in Theorem \ref{t1}. The other direction, namely the set of minimal optimal menus of \eqref{eq5} is a subset of the set of minimal optimal menus of \eqref{eq6}, follows from Theorem \ref{t1} by an identical argument.

In \cite{ghili2023characterization}, $\mathcal{B}'$ is the pure bundling menu $\{\emptyset, b^*\}$, whereas in \cite{yang2023nested}, $\mathcal{B}'$ is a set of nested bundles. It is worth mentioning that both \cite{ghili2023characterization} and \cite{yang2023nested} make key assumptions on quasi-concave differences of revenue functions. I aim to translate this assumption into single-crossing difference of virtual value functions in accordance with Assumption \ref{a0}. I first introduce the following proposition.

\begin{proposition}
\label{lemma:quasiconcave}
    Consider $z: T \rightarrow \mathbb{R}$ continuously differentiable. $z$ is strictly quasi-concave on $T$ if and only if there does not exist $t, t' \in T$ such that $t<t', \sign[z'(t)]<0, \sign[z'(t')]>0$, and $z'$ is not identically zero on any interval of $T$. 
\end{proposition}

Heuristically, apart from knife-edge cases, the conditions for $z'$ in Proposition \ref{lemma:quasiconcave} can be characterized as $z'$ is strictly single-crossing from above.\footnote{That is, $\sign[z']$ is decreasing, and $|\{t: z'(t)=0\}| \leq 1$.} \cite{yang2023nested}'s main assumptions are: for any $b_1 \subseteq b_2, v(b_2,t)-v(b_1,t)$ is strictly increasing in $t$ and $\pi(b_2,t)-\pi(b_1,t)$ is strictly quasi-concave in $t$, in which for any bundle $b$, $\pi(b,\cdot)$ is the revenue function when only selling $b$. By Proposition \ref{lemma:quasiconcave}, these assumptions can almost be translated (rigorously, a sufficient condition of) into Assumption \ref{a0} imposed on pairs of nested bundles. The similar story goes for \cite{ghili2023characterization}. Combining with their conditions on the sold-alone
quantity that guarantees the second condition in Theorem \ref{corollary: alternative} holds for their particular $\mathcal{B}'$, they obtain their results in the same spirit as Theorem \ref{corollary: alternative}.

I proceed to present results with more specific conditions. \cite{ghili2023characterization} says that pure bundling is optimal if and only if the sold-alone quantity of $b^*$ is the largest among the sold-alone quantities of all bundles. Following the same conditions, under Assumptions \ref{a0} and \ref{a1}, I am able to obtain the same result, with the help of an additional but natural assumption that makes $b^*$ the (strictly) most valuable bundle for type $\bar{t}$ consumer:

\begin{theorem}
    \label{t5}
    Suppose Assumptions \ref{a0}, \ref{a1}, and the assumption that $\{b^*\} = \argmax \limits_{b \in \mathcal{B}} v(b,\bar{t})$ hold. If $Q(b^*) \geq \max \limits_{b \neq b^*} Q(b)$, then $\{\emptyset, b^*\}$ is the minimal optimal menu. 
\end{theorem}

It is worth mentioning that the additional assumptions in Theorem \ref{t5} (and later in Theorem \ref{t6}) only focuses on the valuation of type $\bar{t}$ consumer, which is different from \cite{ghili2023characterization} and \cite{yang2023nested} where the corresponding assumptions have to hold for all types. 

The proof basically shows that any $b \notin \{\emptyset, b^*\}$ is very weakly dominated. It is not hard to see that, with the help of the additional assumption, $b$ that satisfies $\phi(b, \underline{t}) \leq \phi(b^*, \underline{t})$ is very weakly dominated by $b^*$. As for $b$ such that $\phi(b, \underline{t}) > \phi(b^*, \underline{t})$, it is very weakly dominated by a convex combination of $\emptyset$ and $b^*$ by the condition that $b^*$ has the largest sold-alone quantity. 

\cite{yang2023nested} provides a sufficient condition for the optimality of nested menus: the \emph{nesting condition}. Furthermore, a sufficient condition for the nesting condition is introduced by \cite{yang2023nested} as the \emph{union quantity condition}:  \begin{equation*} \label{eq114} \text{ for any } b_1,b_2 \in \mathcal{B}, Q(b_1 \cup b_2) \geq \min \{Q(b_1), Q(b_2)\}. \end{equation*} Similar to how I obtain Theorem \ref{t5}, following the \emph{union quantity condition} and with the help of an additional assumption that specifies type $\bar{t}$ consumer has a higher valuation on larger bundles, I am able to show the optimality of nested menus. Moreover, I am also able to pin down a concrete optimal menu like Proposition 9 of \cite{yang2023nested}. 

\begin{theorem}
    \label{t6}
     Suppose Assumptions \ref{a0}, \ref{a1}, and the assumption that $\text{ for any } b \subseteq b', v(b',\bar{t}) \geq v(b,\bar{t})$ hold. If (\ref{eq114}) holds, then the minimal optimal menu is nested. In addition, $\{\emptyset, b_1^\star, b_1^\star \cup b_2^\star,...,b^*\}$ is an optimal menu, in which $\{b_1^\star, ..., b_{2^n}^\star\}=\mathcal{B}$, and $b_i^\star$ is totally ordered by the sold-alone quantity: for any $i \leq j, Q(b_i^\star) \geq Q(b_j^\star)$.\footnote{It is not hard to see that $b_1^\star \cup b_2^\star \cup ... \cup b_{2^n}^\star=b^*$, as for any $i, b_i^\star \subseteq b^*$. } 
\end{theorem}

The arguments of the proof are similar to the ones in the proof of Theorem \ref{t5}, which shows that for any pair of non-nested bundles $b_1$ and $b_2$, if $Q(b_1) \leq Q(b_2)$, then $b_1$ is either very weakly dominated by $b_1 \cup b_2$ or very weakly dominated by a convex combination of $\emptyset$ and $b_1 \cup b_2$, so that there cannot exist a pair of non-nested bundles in the minimal optimal menu. Similar arguments for the second part of the theorem: the proposed menu is optimal because every bundle not included is very weakly dominated. However, while this construction guarantees the proposed menu is optimal, it does not guarantee it is the \emph{minimal} optimal menu. Some bundles within this menu could be weakly dominated by others, making them redundant without further assumptions.\footnote{One way to see this is that the order of the sold-alone quantity of the bundles inside the menu is unknown.}

\section{Conclusion}

This paper develops a constructive approach to solving the optimal bundling problem for a multiproduct monopolist facing a consumer with a one-dimensional type. The central insight is that under single-crossing differences of virtual values together with monotonic differences of valuations (Assumption \ref{a0}), the monopolist's problem simplifies to finding the upper envelope of the virtual value functions. I provide a complete characterization of this upper envelope using notions of dominance and best response. I show that the minimal optimal menu is unique and consists precisely of bundles that are not very weakly dominated, a condition I prove is equivalent to being a strict best response for some consumer type. This equivalence allows me to develop a practical algorithm that computes the optimal menu by iteratively eliminating dominated bundles.

My framework provides a tool for analyzing common, yet complex, sales strategies. I apply it to introduce and formalize tree bundling, where a menu is built around a common ``root'' bundle with multiple upgrade paths. I derive conditions under which such a structure is optimal and, in some cases, when it is optimal to offer all possible upgrades from the root—providing a theoretical foundation for the ``build-your-own'' models offered by many firms. The methodology is also applied to provide new insights on classic problems, including bundling with additive values and the optimality of pure and nested bundling.

A promising direction for future research is extending this framework to a multidimensional type space. A significant challenge in such settings, as discussed by \cite{rochet1998}, is the direct extension of the Myersonian approach, which is a prerequisite for a direct application of my upper-envelope approach. Although a well-defined virtual value function can either be obtained by picking a specific path to compute the line integral (\cite{armstrong1996}), or be obtained through the duality approach (\cite{cai2016}, \cite{bergemann2021}), the implementability of the former is hard to capture, whereas the latter only works well for finite types or distributional robustness problems. (\cite{carroll2017}, \cite{haghpanah2021when})

However, my characterization based on dominance may prove more portable, as both the equivalence between never strict best response and very weakly dominance (Proposition \ref{p1}) and the sufficiency to check the virtual values at $\underline{t}$ and $\bar{t}$ for dominance (Corollary \ref{c2}) still holds in the multidimensional type setting. An interesting question is how my approach connects to alternative dominance notions like the transfer dominance of \cite{manelli2007}. Developing these connections could provide a new and tractable path for analyzing complex multidimensional screening problems.

\bibliographystyle{apalike}
\bibliography{dominance}

\appendix

\addtocontents{toc}{\protect\setcounter{tocdepth}{1}}

\section{Proofs for Taking the Upper Envelope}

\subsection{Proof of Lemma \ref{l1}}

\begin{proof}
    By Assumption \ref{a0}, $v(b,t)-v(b',t)$ is either increasing or decreasing in $t$. Suppose it is decreasing in $t$, i.e., $v_t(b,t)-v_t(b',t) \leq 0$ for all $t \in T$, then $\phi(b, \bar{t})>\phi(b',\bar{t})$ implies that $v(b,t)-v(b',t)>0$ for all $t \in T$. Meanwhile, by the continuity of $\phi$ and $\phi(b, \underline{t})<\phi(b', \underline{t})$, there exists $t^* \in (\underline{t}, \bar{t})$ such that $\phi(b,t^*)=\phi(b',t^*)$, which further implies that
    \[ v(b,t^*)-v(b',t^*)=\dfrac{1-F(t^*)}{f(t^*)}[v_t(b,t^*)-v_t(b',t^*)] \leq 0,\]
    a contradiction to $v(b,t)-v(b',t)>0$ for all $t \in T$. Hence, $v(b,t)-v(b',t)$ is increasing in $t$. 
\end{proof}

\subsection{Proof of Theorem \ref{t1}}

\begin{proof}
    Denote the set of minimal optimal menus of the monopolist's problem (problem \eqref{eq5}) as $\mathcal{M}$, and the set of minimal optimal menus of the unconstrained problem (problem \eqref{eq6}) as $\mathcal{M}'$. I first show that $\mathcal{M}' \subseteq \mathcal{M}$.
    
    I begin with demonstrating for any $M' \in \mathcal{M}'$, $M'$ is a feasible menu of problem \eqref{eq5}, i.e., there exists a price scheme $\{p(t)\}_{t \in T}$ such that $\{b(t),p(t)\}_{t \in T}$ satisfies IC and IR, and $\{b(t)\}_{t \in T}=M'$. Claim that for any $b \in M',$ there exists $t' \in T$, which could be $b$-dependent, such that $\phi(b,t')>\max \limits_{\tilde{b} \neq b} \phi(\tilde{b},t')$. Suppose not, then there exists $b \in M'$ such that for all $t \in T$, there exists $\tilde{b} \neq b$, which could be $t$-dependent, such that $\phi(b,t) \leq \phi(\tilde{b},t)$. This implies that $M' \backslash \{b\}$ is also an optimal menu of problem \eqref{eq6}, a contradiction to $M'$ being minimally optimal. With this observation, by continuity of $\phi$, given $b$ and $t'$ such that $\phi(b,t')> \max \limits_{\tilde{b} \neq b} \phi(\tilde{b}, t'),$ there exists $\delta>0$ such that $\phi(b, \tilde{t})> \max \limits_{\tilde{b} \neq b} \phi(\tilde{b}, \tilde{t}),$ for any $\tilde{t} \in (t'-\delta, t'+\delta)$. Suppose there exist $t''>t'+\delta$ and $\delta'>0$ such that 
    \[\phi(b,t'') \leq \max \limits_{\tilde{b} \neq b} \phi(\tilde{b},t''); \phi(b,t''+\delta')>\max \limits_{\tilde{b} \neq b}\phi(\tilde{b},t''+\delta'),\] then there exists $\bar{b}$ such that \[\phi(b,t')-\phi(\bar{b},t')>0; \phi(b,t'')-\phi(\bar{b},t'') \leq 0; \phi(b, t''+\delta')-\phi(\bar{b},t''+\delta') >0,\] a contradiction to $\phi(b,t)-\phi(\bar{b},t)$ is monotonic in $t$. Therefore, $\forall b \in M'$, unless $\phi(b,t)>\max \limits_{\tilde{b} \neq b} \phi(\tilde{b},t), \forall t$, $b$ is the unique maximizer of $\phi$ on an open interval of $T$. 

    Consider $b,b' \in M'$ and $t<t'<t''<t'''$ such that $b$ is the unique maximizer of $\phi$ on $(t,t')$, $b'$ is the unique maximizer of $\phi$ on $(t'',t''')$, and there does not exist $b'' \in M'$ such that it is the unique maximizer of $\phi$ on an open interval within $[t',t'']$. Claim that it has to be the case that $\phi(b,\tilde{t})=\phi(b',\tilde{t})>\max \limits_{\tilde{b} \in M' \backslash \{b,b'\}} \phi(\tilde{b}, \tilde{t}), \forall \tilde{t} \in [t',t'']$. Suppose there exists $t^* \in [t',t'']$ such that $\max \limits_{\tilde{b} \in M' \backslash \{b,b'\}} \phi(\tilde{b},t^*) \geq \max \{\phi(b, t^*), \phi(b',t^*)\}$, then for the bundles belonging to $\argmax \limits_{\tilde{b} \in M' \backslash \{b,b'\}} \phi(\tilde{b},t^*)$, they are either not a unique maximizer, or are a unique maximizer on an open interval within $[t',t'']$, both lead to contradictions. Suppose there exists $t^{**} \in [t',t'']$ such that $\phi(b,t^{**})>(<) \phi(b',t^{**})$, then $b(b')$ is the unique maximizer at $t^{**}$, a contradiction to it being the unique maximizer only on $(t,t')((t'',t'''))$. Hence, the claim holds.  
   
   Therefore, every bundle $b$ in $M'$ gives the strictly highest value of $\phi$ for an open interval $I_b$, and the complementary $T \backslash \bigcup \limits_{b \in M'} I_b$ are regions where a pair of adjacent bundles give the same value of $\phi$.\footnote{``Adjacent'' in the sense of their unique maximizer open intervals.} In particular, a generic optimal allocation can be represented as:
    \begin{figure}[H]
    \centering
    \includegraphics[width=1\linewidth]{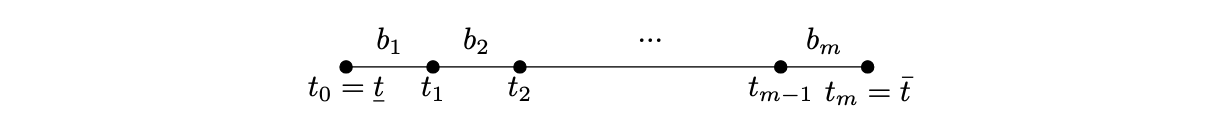}
    \caption{Interval structure of the optimal allocation}
\end{figure}
in which $M'=\{b_1,...,b_m\}$, and $\forall i \in \{1,...,m\}, \phi(b_i,t) \geq \max \limits_{b \neq b_i} \phi(b,t), \forall t \in [t_{i-1},t_i]$, with the inequality being strict on $(t_i^x, t_i^y)$, where $t_{i-1} \leq t_i^x<t_i^y\leq t_i$.
    
    I proceed to show that for any $i \in \{1,...,m-1\}$

    \begin{equation}
    \label{eq23}
\phi(b_i,\underline{t})>\phi(b_{i+1},\underline{t}); \phi(b_i, \bar{t})<\phi(b_{i+1}, \bar{t}).
    \end{equation}
    
    Suppose not, then there exists $i \in \{1,...,m-1\}$ such that one of the following cases has to hold:
    \begin{enumerate}
        \item $\phi(b_i,\underline{t}) \leq \phi(b_{i+1},\underline{t}); \phi(b_i, \bar{t})<\phi(b_{i+1}, \bar{t}).$
        \item $\phi(b_i,\underline{t})>\phi(b_{i+1},\underline{t}); \phi(b_i, \bar{t}) \geq \phi(b_{i+1}, \bar{t}).$
        \item $\phi(b_i,\underline{t}) \leq \phi(b_{i+1},\underline{t}); \phi(b_i, \bar{t})\geq \phi(b_{i+1}, \bar{t}).$
    \end{enumerate}

The first two cases are similar: under Assumption \ref{a0}, it is either $\phi(b_i,t) \leq \phi(b_{i+1},t), \forall t$, with a strict inequality at $\bar{t}$, or $\phi(b_i,t) \geq \phi(b_{i+1},t), \forall t$, with a strict inequality at $\underline{t}$. This, respectively, implies that $b_i$ and $b_{i+1}$ are redundant, a contradiction to minimal optimality. As for the third case, it suggests that \[\phi(b_{i+1},\underline{t})-\phi(b_i,\underline{t}) \geq 0; \phi(b_{i+1}, \dfrac{t_i^x+t_i^y}{2})-\phi(b_{i}, \dfrac{t_i^x+t_i^y}{2})<0; \phi(b_{i+1}, \dfrac{t_{i+1}^x+t_{i+1}^y}{2})-\phi(b_{i}, \dfrac{t_{i+1}^x+t_{i+1}^y}{2})>0.\] Since $\underline{t}<\dfrac{t_i^x+t_i^y}{2}<\dfrac{t_{i+1}^x+t_{i+1}^y}{2}$, this is a violation of Assumption \ref{a0}. (\ref{eq23}) thus holds, and by Lemma \ref{l1}, $v(b_{i+1},t)-v(b_i,t)$ is increasing in $t, \forall i \in \{1,...,m-1\}$. 

$\forall i \in \{1,...,m\}, t,t' \in (t_{i-1},t_i)$, IC requires that \begin{equation*} \begin{split} v(b_i,t)-p(t) \geq v(b_i,t)-p(t'); v(b_i,t')-p(t') \geq v(b_i,t')-p(t), \end{split} \end{equation*}
which implies $p(t)=p(t'), \forall t,t' \in (t_{i-1},t_i)$. To simplify notations, I write $p(t)=p(b_i), \forall t \in (t_{i-1},t_i)$. I construct $\{p(b_i)\}_{i \in \{1,...,m\}}$ such that $v(b_i,t_{i-1})-p(b_i)=v(b_{i-1},t_{i-1})-p(b_{i-1})$.\footnote{$v(b_0,t_0)-p(b_0)=v(\emptyset,\underline{t})$.} Claim that IC constraints hold if and only if \begin{equation*} \begin{split} v(b_i,t)-p(b_i) \geq v(b_{i-1},t)-p(b_{i-1}); v(b_i,t)-p(b_i) \geq v(b_{i+1},t)-p(b_{i+1}), \end{split} \end{equation*}
for all $t \in [t_{i-1},t_i], i \in \{1,...,m\}$, i.e., IC constraints hold for every pair of adjacent intervals. It suffices to show the ``if'' direction: if IC holds for every pair of adjacent intervals, then for all $i \in \{1,...,m\}, t \in [t_{i-1},t_i]$ \begin{equation*} \begin{split} v(b_i,t)-p(b_i) \geq v(b_{i-1},t)-p(b_{i-1}) &= v(b_{i-1},t)-v(b_{i-1},t_{i-1})+v(b_i,t_{i-1})-p(b_i) \\ 
&\Leftrightarrow \\ v(b_i,t)-v(b_{i-1},t) &\geq v(b_i,t_{i-1})-v(b_{i-1},t_{i-1}). \end{split} \end{equation*} By increasing differences, $\forall j<i, t \in (t_{i-1},t_i), v(b_i,t)-v(b_j,t)=\sum \limits_{k=i}^{j+1} v(b_k,t)-v(b_{k-1},t) \geq \sum \limits_{k=i}^{j+1} v(b_k,t_{k-1})-v(b_{k-1},t_{k-1})=p(b_i)-p(b_j)$. Similarly, for all $t \in [t_{i-1},t_i]$, \begin{equation*} \begin{split} v(b_i,t)-p(b_i) \geq v(b_{i+1},t)-p(b_{i+1})& =v(b_{i+1},t) +v(b_i,t_i)-v(b_{i+1},t_i)-p(b_i)
\\ & \Leftrightarrow \\ v(b_i,t)-v(b_{i+1},t) & \geq v(b_i,t_i)-v(b_{i+1},t_i) \end{split} \end{equation*}
By increasing differences, $\forall j>i, t \in [t_{i-1},t_i], v(b_i,t)-v(b_j,t) \geq \sum \limits_{k=i}^{j-1} v(b_k,t_k)-v(b_{k+1},t_k)=p(b_i)-p(b_j)$. The only thing left to show is that IC indeed holds for every pair of adjacent intervals. By increasing differences, for all $t \in [t_{i-1},t_i], i \in \{2,...,n\}, v(b_i,t)-v(b_{i-1},t) \geq v(b_i,t_{i-1})-v(b_{i-1},t_{i-1})= p(b_i)-p(b_{i-1}); v(b_i,t)-v(b_{i+1},t) \geq v(b_i,t_i)-v(b_{i+1},t_i)=p(b_i)-p(b_{i+1})$, which fulfills the claim.

Next I show that IR holds, and $v(a(\underline{t}),\underline{t})-p(\underline{t})=v(\emptyset,\underline{t})=0$. For any $i \in \{1,...,m\}, t \in [t_{i-1},t_i]$, $v(b_i,t)-p(b_i) \geq v(b_1,t)-p(b_1)=v(\emptyset,t)-p(\emptyset)$.\footnote{By the assumption that $\phi(\emptyset, \underline{t})>\max \limits_{b \neq \emptyset} \phi(b, \underline{t}) \Rightarrow b_1=\emptyset$. This assumption can be replaced by the assumption that $v(b, \cdot)$ is increasing in $t$ for any $b$ to still make IR hold: $v(b_i,t)-p(b_i) \geq v(b_1,t)-p(b_1) \geq v(b_1, \underline{t})-p(b_1)=0$.} Since $v(\emptyset,\underline{t})-p(\emptyset)=v(\emptyset,\underline{t})$, $p(\emptyset)=0$, which gives $v(b_i,t)-p(b_i) \geq v(\emptyset,t)=0$. This also gives $v(a(\underline{t}),\underline{t})-p(\underline{t})=v(b_1,\underline{t})-p(\underline{t})=v(\emptyset, \underline{t})-p(\emptyset)=v(\emptyset,\underline{t})=0$.

Therefore, for any $M' \in \mathcal{M}', M'$ is a feasible menu of problem \eqref{eq5}. As for why $M' \in \mathcal{M}'$ also generates the optimal revenue, note that for any $(a,p)$ that is IR,
\begin{equation*}
    \mathbb{E}[\sum \limits_{b \in \mathcal{B}}a_b(t)\phi(b,t)]-[v(a(\underline{t}),\underline{t})-p(\underline{t})] \leq \mathbb{E}[\max \limits_{b \in \mathcal{B}}\phi(b,t)],
\end{equation*}
which further implies that
\begin{equation*} \max \limits_{(a,p), \text{IR}} \mathbb{E}[\sum \limits_{b \in \mathcal{B}}a_b(t)\phi(b,t)]-[v(a(\underline{t}),\underline{t})-p(\underline{t})] \leq \mathbb{E}[\max \limits_{b \in \mathcal{B}}\phi(b,t)]. \end{equation*}
Hence, a feasible menu of \eqref{eq5} that solves \eqref{eq6} also solves \eqref{eq5}, which implies that $M' \in \mathcal{M}$, which further implies that $\mathcal{M}' \subseteq \mathcal{M}$. 

Regarding $\mathcal{M} \subseteq \mathcal{M}'$, suppose there exists $M \in \mathcal{M}$ such that $M \notin \mathcal{M}'$, then one of the following cases has to hold: (i) $M$ is not an optimal menu of problem \eqref{eq6}; (ii) $M$ is an optimal but not a minimal optimal menu of problem \eqref{eq6}.

By $\mathcal{M}' \subseteq \mathcal{M}$, case (i) and case (ii) imply $M$ is not an optimal menu of problem \eqref{eq5} and $M$ is not a minimal optimal menu of problem \eqref{eq5}, respectively: both lead to contradictions. Therefore, $\mathcal{M} \subseteq \mathcal{M}'$, which fulfills the proof.

\end{proof}

\section{Proofs for Characterizing the Upper Envelope}

\subsection{Proof of Proposition \ref{p1}}
\begin{proof}
    By Proposition 1 of \cite{kartik2023}, if $\phi$ has SCD$^{\star}$, and $T$ has a minimum ($\underline{t}$) and a maximum ($\bar{t}$), and $\{\phi(a,\cdot): T \rightarrow \mathbb{R}\}_{a \in \Delta(\mathcal{B})}$ is convex, then there exists $f_1: T \rightarrow \mathbb{R}$ and $f_2: T \rightarrow \mathbb{R}$, in which $f_1(\cdot)>0$, such that \begin{equation*} \tilde{\phi}(a,t)=\lambda(t)\tilde{\phi}(a,\bar{t})+(1-\lambda(t))\tilde{\phi}(a,\underline{t}), \end{equation*}
in which $\tilde{\phi}(a,t)=f_1(t)\phi(a,t)+f_2(t)$, and $\lambda: T \rightarrow [0,1]$ is an increasing function. $\tilde{\phi}$ can be re-expressed as $\tilde{\phi}(a,t)=\lambda(t)[\tilde{\phi}(a,\bar{t})-\tilde{\phi}(a,\underline{t})]+\tilde{\phi}(a,\underline{t})$ with $\lambda$ monotonic. By Theorem 3 of \cite{kartik2023}, $\tilde{\phi}$ has MD$^{\star}$, i.e., $\forall a,a' \in \Delta(\mathcal{B}), \tilde{\phi}(a',t)-\tilde{\phi}(a,t)$ is monotonic in $t$, which is equivalent to $\forall a,a' \in \Delta(\mathcal{B}), f_1(t)[\phi(a',t)-\phi(a,t)]$ is monotonic in $t$, which is further equivalent to $\forall a,a' \in \Delta(\mathcal{B}), f_1(t)[\phi(a',t)-\phi(a,t)]$ is quasi-concave in $t$.\footnote{$\forall a,a' \in \Delta(\mathcal{B})$, both $f_1(t)[\phi(a',t)-\phi(a,t)]$ and $f_1(t)[\phi(a,t)-\phi(a',t)]$ are quasi-concave in $t$ $\Leftrightarrow$ $f_1(t)[\phi(a',t)-\phi(a,t)]$ is monotonic in $t$, $\forall a,a' \in \Delta(\mathcal{B})$. } Furthermore, the following lemma helps me say more about the continuity of $f_1(t)[\phi(a',t)-\phi(a,t)]$.

\begin{lemma}
\label{l33}
    $f_1$ is continuous in $t$.
\end{lemma}

\begin{proof}
    Theorem 2 of \cite{kartik2023} shows that $\phi$ takes the form \begin{equation*} \phi(a,t)=g_1(a)h_1(t)+g_2(a)h_2(t)+h(t), \end{equation*}
    with $h_1$ and $h_2$ each single-crossing and ratio ordered. By Appendix B.5 (proof of Proposition 1) of \cite{kartik2023}, if $ (h_1(\underline{t}),h_2(\underline{t}))$ and $(h_1(\bar{t}),h_2(\bar{t}))$ are linearly dependent, then $f_1(t) \equiv 1$, which is trivially continuous. If $(h_1(\bar{t}),h_2(\bar{t}))$ are linearly independent, then $f_1(t)=\dfrac{1}{\alpha(t)+\beta(t)}$, in which $(h_1(t),h_2(t))=\alpha(t)(h_1(\bar{t}),h_2(\bar{t}))+\beta(t)(h_1(\underline{t}),h_2(\underline{t}))$. This gives
    \begin{equation*} \alpha(t)=\dfrac{h_1(\underline{t})h_2(t)-h_1(t)h_2(\underline{t})}{h_1(\underline{t})h_2(\bar{t})-h_1(\bar{t})h_2(\underline{t})}; \beta(t)=\dfrac{-h_1(\bar{t})h_2(t)+h_1(t)h_2(\bar{t})}{h_1(\underline{t})h_2(\bar{t})-h_1(\bar{t})h_2(\underline{t})}.\end{equation*}
    By Appendix B.4 (proof of Theorem 2) of \cite{kartik2023}, there exists $a_1,a_2 \in \Delta(\mathcal{B})$ such that $h_1(t)=\phi(a,t)-\phi(a_1,t); h_2(t)=\phi(a,t)-\phi(a_2,t)$. Since $\phi(a,t)$ is continuous in $t$ for any $a \in \Delta(\mathcal{B})$, both $h_1(t)$ and $h_2(t)$ are continuous in $t$, which further implies $\alpha(t)$ and $\beta(t)$, thus $f_1(t)=\dfrac{1}{\alpha(t)+\beta(t)}$ is continuous in $t$. 
\end{proof}

Now I proceed to prove the equivalence between never strict best response and very weakly dominance. For the ``if'' direction of Proposition \ref{p1}, if $b$ is very weakly dominated by $a' \neq b$, $\forall t$, consider $b^t \in \argmax \limits_{b \in supp(a')} \phi(b,t)$, then $\phi(b,t) \leq \phi(a,t) \leq \phi(b^t,t), \forall t$. Moreover, $\forall t$, there exists $b^t \in \argmax \limits_{b \in supp(a')} \phi(b,t)$ such that $b^t \neq b$. Otherwise, $\{b\}=\argmax \limits_{b \in supp(a')} \phi(b,t)$ and $a' \neq b$ implies that $\phi(b,t)>\phi(a',t)$, a contradiction to very weakly dominance. Hence, $b$ is a never strict best response because of the existence $b^t$ for any $t$.

As for the ``only if'' direction, denote the set of never strict best response of $\mathcal{B}_{NSB}$. For any $b \in \mathcal{B}_{NSB}$, let $u^b(a,t): \Delta(\mathcal{B}) \times T \rightarrow \mathbb{R}$ such that $u^b(a,t)=f_1(t)[\phi(b,t)-\phi(a,t)]$. Since both $\Delta(\mathcal{B})$ and $T$ are compact and convex, and $u^b$ is continuous and linear in $a$, and is continuous (Lemma \ref{l33}) and quasi-concave in $t$. By minmax theorem (in particular, \cite{sion1958}), \begin{equation*} \min \limits_{a \in \Delta(\mathcal{B})} \max \limits_{t \in T} u^b(a,t)= \max \limits_{t \in T} \min \limits_{a \in \Delta(\mathcal{B})} u^b(a,t). \end{equation*}

Note that $b \in \mathcal{B}_{NSB}$ implies that $\forall t \in T, \min \limits_{a \in \Delta(\mathcal{B})} u^b(a,t) \leq 0$, which further implies that $\min \limits_{a \in \Delta(\mathcal{B})} \max \limits_{t \in T} u^b(a,t)= \max \limits_{t \in T} \min \limits_{a \in \Delta(\mathcal{B})} u^b(a,t) \leq 0$. Consider $a^* \in \argmin \limits_{a \in \Delta(\mathcal{B})} \max \limits_{t \in T} u^b(a,t)$, then $\max \limits_{t \in T} u^b(a^*,t)= \min \limits_{a \in \Delta(\mathcal{B})} \max \limits_{t \in T} u^b(a,t) \leq 0 $, which further implies that $u^b(a^*,t)= f_1(t)[\phi(b,t)-\phi(a^*,t)]\leq 0, \forall t \in T$. As $f_1(t)>0, \forall t$, this is equivalent to say $\phi(b,t)-\phi(a^*,t) \leq 0, \forall t$. 

The only thing left to show is there exists $a^* \in \argmin \limits_{a \in \Delta(\mathcal{B})} \max \limits_{t \in T} u^b(a,t)$ such that $a^* \neq b$, i.e., $\{b\} \neq \argmin \limits_{a \in \Delta(\mathcal{B})} \max \limits_{t \in T} u^b(a,t)$. Suppose $\{b\}=\argmin \limits_{a \in \Delta(\mathcal{B})} \max \limits_{t \in T} u^b(a,t)$, then $\forall a \neq b, \max \limits_{t \in T} u^b(a,t)>0$.

\textbf{Step 1: }Consider $\mathcal{B}^b:=\{b' \in \mathcal{B}| \phi(b',t) < \phi(b,t), \forall t\}$, i.e., the set of (deterministic) bundles that are strictly dominated by $b$. Suppose that $\mathcal{B} \backslash \mathcal{B}^{b}= \emptyset$, then $\forall a \neq b, \forall t \in T, u^b(a,t) > 0$, a contradiction to the existence of $a$ and $t$ such that $u^b(a,t) \leq 0$. Hence, $\mathcal{B} \backslash \mathcal{B}^b \neq \emptyset$. 

\textbf{Step 2: }For any $\tilde{b} \in \mathcal{B} \backslash \mathcal{B}^b$, by Assumption \ref{a1}, the continuity of $\phi$, and $\forall a \neq b, \max \limits_{t \in T} u^b(a,t)>0$, one of the following two cases has to be correct:

\begin{enumerate}

\item there exists $t_{\tilde{b}}$ such that $\forall t \in [\underline{t}, t_{\tilde{b}}], \phi(b,t) \leq \phi(\tilde{b},t)$ and $ \forall t \in (t_{\tilde{b}},\bar{t}], \phi(b,t)>\phi(\tilde{b},t)$; \item there exists $t_{\tilde{b}}$ such that $\forall t \in [\underline{t}, t_{\tilde{b}}), \phi(b,t) > \phi(\tilde{b},t)$ and $ \forall t \in [t_{\tilde{b}},\bar{t}], \phi(b,t) \leq \phi(\tilde{b},t)$. 

\end{enumerate}

Consider $\{t_{\tilde{b}}|\tilde{b} \in \mathcal{B} \backslash \mathcal{B}^b, \phi(b,t)>\phi(\tilde{b},t), \forall t \in [\underline{t},t_{\tilde{b}}); \phi(b,t) \leq \phi(\tilde{b},t), \forall t \in [t_{\tilde{b}},\bar{t}]\}$ and $\{t_{\tilde{b}}|\tilde{b} \in \mathcal{B} \backslash \mathcal{B}^b, \phi(b,t)>\phi(\tilde{b},t), \forall t \in (t_{\tilde{b}}, \bar{t}]; \phi(b,t) \leq \phi(\tilde{b},t), \forall t \in [\underline{t},t_{\tilde{b}}]\}$. If the former set is empty, let $t^*$ be the largest element of the latter set, i.e., $t^*:=\max \{t_{\tilde{b}}|\tilde{b} \in \mathcal{B} \backslash \mathcal{B}^b, \phi(b,t)>\phi(\tilde{b},t), \forall t \in (t_{\tilde{b}}, \bar{t}]; \phi(b,t) \leq \phi(\tilde{b},t), \forall t \in [\underline{t},t_{\tilde{b}}]\}$. If $t^* < \bar{t}$, then $\forall \tilde{b} \in \mathcal{B} \backslash \mathcal{B}^b, \forall t \in (t^*,\bar{t}], \phi(b,t)>\phi(\tilde{b},t)$, which further implies that $\forall \tilde{b} \in \mathcal{B} \backslash \{b\}, t \in (t^*,\bar{t}], \phi(b,t)>\phi(\tilde{b},t)$, a contradiction to $b$ is a never strict best response. If $t^*=\bar{t}$, then there exists $\tilde{b}$ such that $\phi(b,t) \leq \phi(\tilde{b},t), \forall t \in T$, a contradiction to $\max \limits_{t \in T} u^b(\tilde{b},t) >0$. Similar discussions lead to contradiction for the case where the latter set is empty. 

\textbf{Step 3: }If both sets are nonempty, let $t_*$ be the smallest element of the former set, i.e., $t_{*}:= \min \{t_{\tilde{b}}|\tilde{b} \in \mathcal{B} \backslash \mathcal{B}^b, \phi(b,t)>\phi(\tilde{b},t), \forall t \in [\underline{t},t_{\tilde{b}}); \phi(b,t) \leq \phi(\tilde{b},t), \forall t \in [t_{\tilde{b}},\bar{t}]\}$.\footnote{$t_*$ and $t^*$ are well-defined because $\mathcal{B}$ is finite.} 

\textbf{Step 3.1: }If $t_*>t^*$, then for all $\tilde{b} \in \mathcal{B} \backslash \{b\}$ and $t \in (t^{*},t_*), \phi(b,t)>\phi(\tilde{b},t)$, a contradiction to $b$ is a never strict best response. 

\textbf{Step 3.2: }If $t_* \leq t^*$, consider $\tilde{b}_1 $ and $\tilde{b}_2$ such that $t_{\tilde{b}_1}=t_*, t_{\tilde{b}_2}=t^*$. By construction, for all  $t \in [t_*,t^*], \phi(b,t) \leq \phi(\tilde{b}_1,t), \phi(b,t) \leq \phi(\tilde{b}_2,t)$. Consider $a \in \Delta(\{\tilde{b}_1,\tilde{b}_2\})$ such that $\phi(a,\underline{t})=\phi(b,\underline{t})$.\footnote{The existence of $a$ is guaranteed because $\phi(\tilde{b}_1,\underline{t}) < \phi(b,\underline{t}) \leq \phi(\tilde{b}_2,t)$.}

\textbf{Step 3.2.1: }If there exists $\tilde{b} \in supp(a)$ and $\tilde{t} \in [t_*,t^*]$ such that $\phi(\tilde{b},\tilde{t}) > \phi(b, \tilde{t})$, then $\phi(a,\tilde{t})>\phi(b,\tilde{t})$. By Assumption \ref{a1}, $\phi(a,t) \geq \phi(b,t), \forall t$, a contradiction to $\forall a \neq b, \max \limits_{t \in T} u^b(a,t)>0$ as clearly the constructed $a$ is different from $b$. 

\textbf{Step 3.2.2: }If for all $\tilde{b} \in supp(a)$ and $\tilde{t} \in [t_*,t^*], \phi(\tilde{b},\tilde{t})=\phi(b,\tilde{t})$, then $\phi(a,\tilde{t})=\phi(b,\tilde{t}), \forall \tilde{t} \in [t_*,t^*]$. If $\phi(a,\bar{t}) \geq \phi(b,\bar{t})$, then by Assumption \ref{a1}, $\phi(a,t) \geq \phi(b,t), \forall t$. Since $a \neq b$, this is also a contradiction to $\forall a \neq b, \max \limits_{t \in T} u^b(a,t)>0$.\footnote{$a \neq b$ holds even when $\phi(a,t)=\phi(b,t), \forall t$, as $a \in \Delta(\mathcal{B} \backslash \mathcal{B}^b)$.} If $\phi(a,\bar{t})<\phi(b,\bar{t})$, then there exists $\varepsilon>0$ such that for $a'=(1-\varepsilon)a+\varepsilon b_1$, $\phi(a',\underline{t})<\phi(b,\underline{t}), \phi(a', \tilde{t}) \geq (=) \phi(b,\tilde{t}), \forall \tilde{t} \in [t_*,t^*], \phi(a',\bar{t})<\phi(b,\bar{t})$. In that case, $\phi(a',t)-\phi(b,t)$ is not single-crossing in $t$, a contradiction to Assumption \ref{a1}. Thus, there exists $a^* \in \argmin \limits_{a \in \Delta(\mathcal{B})} \max \limits_{t \in T} u^b(a,t)$ such that $a^* \neq b$, and $b$ is very weakly dominated by $a^*$.

\end{proof}

\subsection{Proof of Proposition \ref{p4}}

\begin{proof}
    \textbf{Step 1: }I first show that $\{b_1,...,b_m\}$ is an optimal menu if and only if satisfying everything but the ``$\forall i \in \{1,...,m\}, b_i$ is not very weakly dominated'' condition in Proposition \ref{p4}. 

    ``only if'': If $\{b_1,...,b_m\}$ is an optimal menu, then it contains every strict best response bundles. By Proposition \ref{p1}, it is equivalent to say that $\{b_1,...,b_m\}$ contains all not very weakly dominated bundles. As every very weakly dominated bundle is very weakly dominated by a convex combination of not very weakly dominated bundles, any $b \in \mathcal{B} \backslash \{b_1,...,b_m\}$, which is very weakly dominated, is very weakly dominated by a convex combination of $b_1,...,b_m$. 

    ``if'': Suppose that every very weakly dominated bundle is very weakly dominated by a convex combination of $b_1,...,b_m$, but $\{b_1,...,b_m\}$ is not an optimal menu for problem \eqref{eq6}, then there exists $b \in \mathcal{B}$ and $t_0 \in T$ such that $\phi(b,t_0) > \max \limits_{i \in \{1,...,m\}} \phi(b_i,t_0)$. Consider $\underline{a}$ which is very weakly dominated, thus there exists $\lambda_1,..,\lambda_m \in [0,1], \sum \limits_{i=1}^m \lambda_i=1$ such that $\underline{a}$ is very weakly dominated by $\sum \limits_{i=1}^m \lambda_ib_i$, i.e., $\phi(\underline{a},t) \leq \sum \limits_{i=1}^m \lambda_i \phi(b_i,t), \forall t \in T$. Since $\phi(b,t_0) > \max \limits_{i \in \{1,...,m\}} \phi(b_i,t_0)$, there exists $\lambda \in (0,1)$ such that $\lambda \phi(\underline{a},t_0)+(1-\lambda) \phi(b,t_0) > \max \limits_{i \in \{1,...,m\}} \phi(b_i,t_0)$, indicating that $\lambda \underline{a} +(1-\lambda) b$ cannot be very weakly dominated by any convex combination of $b_1,...,b_m$. Meanwhile, 
    \[\phi(\lambda \underline{a}+(1-\lambda) b, t) \leq \sum \limits_{i=1}^m \lambda \lambda_i \phi(b_i,t)+(1-\lambda) \phi(b,t), \forall t \in T.\] Note that $\underline{a} \neq \sum \limits_{i=1}^m \lambda_i b_i$ implies that $\lambda \underline{a}+(1-\lambda) b \neq  \sum \limits_{i=1}^m \lambda \lambda_i b_i +(1-\lambda) b$, which further implies that $\lambda \underline{a}+(1-\lambda)b$ is a very weakly dominated bundle, in particular, is very weakly dominated by $\sum \limits_{i=1}^m \lambda \lambda_i b_i+(1-\lambda)b$. This is a contradiction to the condition that every very weakly dominated bundle is very weakly dominated by a convex combination of $b_1,...,b_m$. Hence, by contradiction, $\{b_1,...,b_m\}$ is an optimal menu for problem \eqref{eq6}. 

    \textbf{Step 2: }I proceed to discuss the minimal optimality when considering the previous ignored condition. 
    
    ``only if'': If $\{b_1,..,b_m\}$ is a minimal optimal menu, then by the argument above, every weakly dominated bundle is very weakly dominated by a convex combination of $b_1,...,b_m$. In addition, if there exists $b_i \in \{b_1,...,b_m\}$ such that $b_i$ is very weakly dominated, then by Proposition \ref{p1}, it is a never strict best response, so that $\{b_1,...,b_{i-1},b_{i+1},...,b_m\}$ is also an optimal menu, a contradiction to $\{b_1,...,b_m\}$ being minimal. Hence, $\forall i \in \{1,...,m\}, b_i$ is not very weakly dominated.   

    ``if'': If every weakly dominated bundle is very weakly dominated by a convex combination of $b_1,...,b_m$, then by the argument above, $\{b_1,...,b_m\}$ is an optimal menu for problem \eqref{eq6}. Suppose $\{b_1,...,b_m\}$ is not a minimal optimal menu, then there exists $B \subsetneq \{b_1,...,b_m\}$ such that $B$ is an optimal menu for problem \eqref{eq6}. Consider $b' \in B \backslash \{b_1,...,b_m\}$, as $b'$ is not very weakly dominated, by Proposition \ref{p1}, $b'$ is a strict best response, i.e., for any $b \in B$, there exists $t \in T$ (does not depend on $b$) such that $\phi(b',t) > \phi(b,t)$, a contradiction to $B$ being optimal. Thus, by contradiction, $\{b_1,...,b_m\}$ is a minimal optimal menu for problem \eqref{eq6}. 

     \textbf{Step 3: }Regarding the uniqueness result, here is a direct argument without the specification that $\{b_1,...,b_m\}$ is exactly the set of not very weakly dominated bundles: suppose there exist two distinctive minimal optimal menus $B_1=\{b_1,....b_x\},B_2=\{b_1',...,b_y'\}$, then there must exist $b_i' \in B_2$ such that $b_i' \notin B_1$.\footnote{Otherwise, $B_1=B_2$, as $B_2 \subsetneq B_1$ is impossible. } By Proposition \ref{p4}, $b_i'$ is not very weakly dominated, so that by Proposition \ref{p1}, $b_i'$ is a strict best response, indicating that there exists $t_i' \in T$ such that $\phi(b_i',t_i')> \max \limits_{b_i \in B_1} \phi(b_i,t_i')$. Again by Proposition \ref{p4}, there exists a very weakly dominated bundle $\underline{b}$ and $\lambda_1,...,\lambda_x \in [0,1], \sum \limits_{i=1}^x \lambda_i=1$ such that $\underline{b}$ is very weakly dominated by $\sum \limits_{i=1}^x \lambda_i b_i$. Since $\phi(b_i',t_i')> \max \limits_{b_i \in B_1} \phi(b_i,t_i')$, there exists $\lambda \in (0,1)$ such that $\phi(\lambda b_i'+(1-\lambda) \underline{b},t_i')>\max \limits_{b_i \in B_1} \phi(b_i,t_i')$, so that $\lambda b_i'+(1-\lambda) \underline{b}$ cannot be very weakly dominated by any convex combination of $b_1,...,b_x$. However, $\lambda b_i'+(1-\lambda) \underline{b}$ is very weakly dominated by $\lambda b_i'+(1-\lambda) \sum \limits_{i=1}^x \lambda_i b_i$, a contradiction to $B_1$ being a minimal optimal menu (Proposition \ref{p4}). Therefore, the minimal optimal menu for problem \eqref{eq6} is unique. 

\end{proof}

\section{Proofs for Computing the Upper Envelope}

\subsection{Proof of Lemma \ref{l2}}
\begin{proof}
By Corollary \ref{c2}, $b'$ is very weakly dominated by a convex combination of $b$ and $b''$ if and only if there exists $\lambda \in [0,1]$ such that $\phi(b',\bar{t}) \leq \phi(\lambda b +(1-\lambda) b'',\bar{t}); \phi(b', \underline{t}) \leq \phi(\lambda b +(1-\lambda) b'', \underline{t}) $. 

``if'': If $\dfrac{\phi(b,\bar{t})-\phi(b',\bar{t})}{\phi(b',\bar{t})-\phi(b'',\bar{t})} \geq \dfrac{\phi(b,\underline{t})-\phi(b',\underline{t})}{\phi(b',\underline{t})-\phi(b'',\underline{t})}$, then there exists $\lambda \in (0,1)$ such that \begin{equation*} \dfrac{\phi(b,\bar{t})-\phi(b',\bar{t})}{\phi(b',\bar{t})-\phi(b'',\bar{t})} \geq \dfrac{1-\lambda}{\lambda} \geq \dfrac{\phi(b,\underline{t})-\phi(b',\underline{t})}{\phi(b',\underline{t})-\phi(b'',\underline{t})}, \end{equation*} which immediately gives $\phi(b',\bar{t}) \leq \lambda \phi(b,\bar{t})+(1-\lambda) \phi(b'',\bar{t}) = \phi(\lambda b +(1-\lambda) b'',\bar{t}); \phi(b', \underline{t}) \leq \lambda \phi(b,\underline{t})+(1-\lambda) \phi(b'',\underline{t})= \phi(\lambda b +(1-\lambda) b'', \underline{t}) $, i.e., $b'$ is very weakly dominated by $\lambda b+(1-\lambda)b'$.\footnote{$\lambda \neq 1$ because $\phi(b,\bar{t})>\phi(b',\bar{t})$ and $\phi(b,\underline{t})<\phi(b',\underline{t})$. }

``only if'': If $b'$ is very weakly dominated by $\lambda b+(1-\lambda) b''$, then $\dfrac{\phi(b,\bar{t})-\phi(b',\bar{t})}{\phi(b',\bar{t})-\phi(b'',\bar{t})} \geq \dfrac{1-\lambda}{\lambda} \geq \dfrac{\phi(b,\underline{t})-\phi(b',\underline{t})}{\phi(b',\underline{t})-\phi(b'',\underline{t})}$. 

\end{proof}

\subsection{Proof of Theorem \ref{t2}}

As discussed in Section \ref{s22}, it suffices to show that the output of Algorithm \ref{al1} is the minimal optimal menu for problem \eqref{eq6} under Assumption \ref{a1}. 

If there do not exist very weakly dominated bundles, then no bundles will be removed in neither Step 1 nor Step 2, and the algorithm stops at Step 2, with the outcome being $\mathcal{B}$, which, by Proposition \ref{p4}, is the minimal optimal menu for problem \eqref{eq6}.

  If there exist very weakly dominated bundles, by Proposition \ref{p4}, it suffices to show that the output of Algorithm \ref{al1}, denoted as $\{b_1,...,b_m\}$ , satisfies
    \begin{enumerate}
        \item every very weakly dominated bundle is very weakly dominated by a convex combination of $b_1,...,b_m$, \label{cdd1}
        \item for any $i \in \{1,...,m\}, b_i$ is not very weakly dominated. \label{cdd2}
    \end{enumerate}
    First, it is straightforward to see that every bundle belongs to $\mathcal{B} \backslash \{b_1,...,b_m\}$ is very weakly dominated, for they are either very weakly dominated by $b_i \in \{b_1,...,b_m\}$ (meaning that they are deleted in Step 1) or are very weakly dominated by a convex combination of $b_i$ and $b_{i+1}, i \in \{1,...,m-1\}$ (meaning that they are deleted in Step 2 and its iterations). 

Next, I am going to show that for any $i \in \{1,...,m\}, b_i$ is not very weakly dominated. Showing this not only directly makes $\{b_1,..,b_m\}$ satisfy condition \ref{cdd2} above, but also makes $\{b_1,...,b_m\}$ satisfy condition \ref{cdd1} above, as $\mathcal{B} \backslash \{b_1,...,b_m\}$ is exactly the set of very weakly dominated bundles. Let $\{b_1,...,b_m\}$ satisfy $\phi(b_1,\bar{t})>...>\phi(b_m,\bar{t}); \phi(b_1,\underline{t})<...<\phi(b_m,\underline{t})$. To show for any $i \in \{1,...,m\}$, $b_i$ is not very weakly dominated, again by each bundle belonging to $\mathcal{B} \backslash \{b_1,...,b_m\}$ is very weakly dominated by a convex combination of $b_1,...,b_m$, it suffices to show that $b_i$ is not very weakly dominated by any convex combination of $b_1,...,b_{i-1},b_{i+1},...,b_m, \forall i \in \{1,...,m\}$. Clearly, since $\{b_1\}=\argmax \limits_{b_i \in \{b_1,...,b_m\}} \phi(b_i,\bar{t}), \{b_m\}=\argmax \limits_{b_i \in \{b_1,...,b_m\}} \phi(b_i,\underline{t})$, both $b_1$ and $b_m$ are not very weakly dominated. As for $i \in \{2,...,m-1\}$, $b_i$ is not very weakly dominated by any convex combination of $b_{i-1}$ and $b_{i+1}$, otherwise $b_i$ will be deleted by (iterated) Step 2. 
    
    Now I am going to introduce the following lemma to help me establish the equivalence between $b_i$ not very weakly dominated by any convex combination of $b_{i-1}$ and $b_{i+1}$ and $b_i$ not very weakly dominated.
    \begin{lemma}
    \label{l5}
        Under Assumption \ref{a1}, for $b,b',b'' \in \mathcal{B}$ such that $\phi(b,\underline{t})<\phi(b',\underline{t})<\phi(b'',\underline{t}); \phi(b,\bar{t})>\phi(b',\bar{t})>\phi(b'',\bar{t})$, let the intersection of $\phi(b,\cdot)$ and $\phi(b',\cdot)$ be $t_{bb'}$, and the intersection of $\phi(b',\cdot)$ and $\phi(b'',\cdot)$ be $t_{b'b''}$. $b'$ is very weakly dominated by a convex combination of $b$ and $b''$ if and only if $t_{bb'} \leq t_{b'b''}$.
    \end{lemma}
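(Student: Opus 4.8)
The plan is to exploit Lemma \ref{l2}, which already gives an algebraic criterion for $b'$ to be very weakly dominated by a convex combination of $b$ and $b''$, namely
$$\frac{\phi(b,\bar{t})-\phi(b',\bar{t})}{\phi(b',\bar{t})-\phi(b'',\bar{t})} \;\geq\; \frac{\phi(b,\underline{t})-\phi(b',\underline{t})}{\phi(b',\underline{t})-\phi(b'',\underline{t})},$$
and to show that this inequality is equivalent to $t_{bb'} \leq t_{b'b''}$, where the crossing points $t_{bb'}$ and $t_{b'b''}$ are well-defined and unique by Assumption \ref{a1} (single-crossing differences) together with the sign pattern in the hypothesis: $\phi(b,\cdot)-\phi(b',\cdot)$ is negative at $\underline{t}$ and positive at $\bar{t}$, so it is single-crossing from below, and likewise for $\phi(b',\cdot)-\phi(b'',\cdot)$.

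First I would pass to the representation $\tilde{\phi}(a,t) = f_1(t)\phi(a,t)+f_2(t)$ from the discussion after Lemma \ref{l1}, which is affine in $\lambda(t)$: $\tilde{\phi}(a,t) = \lambda(t)\tilde{\phi}(a,\bar{t}) + (1-\lambda(t))\tilde{\phi}(a,\underline{t})$ with $\lambda$ increasing. Since $f_1 > 0$, the crossing point of $\phi(b,\cdot)$ and $\phi(b',\cdot)$ coincides with the crossing point of $\tilde{\phi}(b,\cdot)$ and $\tilde{\phi}(b',\cdot)$, and for the latter one can solve explicitly: $\tilde{\phi}(b,t)=\tilde{\phi}(b',t)$ forces
$$\lambda(t_{bb'}) = \frac{\tilde{\phi}(b',\underline{t}) - \tilde{\phi}(b,\underline{t})}{[\tilde{\phi}(b',\underline{t}) - \tilde{\phi}(b,\underline{t})] + [\tilde{\phi}(b,\bar{t}) - \tilde{\phi}(b',\bar{t})]},$$
and symmetrically for $\lambda(t_{b'b''})$. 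Both numerators and the extra terms in the denominators are strictly positive by the hypothesized sign pattern (and strict inequalities), so these are genuine numbers in $(0,1)$. Because $\lambda$ is increasing, $t_{bb'} \leq t_{b'b''}$ if and only if $\lambda(t_{bb'}) \leq \lambda(t_{b'b''})$ — here I should be slightly careful that $\lambda$ need only be weakly increasing, so strictly I get $t_{bb'}\le t_{b'b''} \Leftrightarrow \lambda(t_{bb'})\le\lambda(t_{b'b''})$ in the direction I need, using that $\lambda$ is continuous and that the crossing points are where the single-crossing difference actually changes sign; I would note that this is where Assumption \ref{a1} is essential.

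Then it is pure algebra: cross-multiplying $\lambda(t_{bb'}) \leq \lambda(t_{b'b''})$ (all denominators positive) reduces, after cancellation, to
$$[\tilde{\phi}(b',\underline{t}) - \tilde{\phi}(b,\underline{t})]\,[\tilde{\phi}(b',\bar{t}) - \tilde{\phi}(b'',\bar{t})] \;\geq\; [\tilde{\phi}(b',\bar{t}) - \tilde{\phi}(b,\bar{t})]\,[\tilde{\phi}(b',\underline{t}) - \tilde{\phi}(b'',\underline{t})]\cdot(-1)\cdot(-1),$$
which rearranges to exactly the Lemma \ref{l2} inequality written in $\tilde{\phi}$; and since $\tilde{\phi}(a,\underline{t}) = f_1(\underline{t})\phi(a,\underline{t})+f_2(\underline{t})$ with $f_1(\underline{t})>0$, the differences $\tilde{\phi}(b,\underline{t})-\tilde{\phi}(b',\underline{t}) = f_1(\underline{t})[\phi(b,\underline{t})-\phi(b',\underline{t})]$ are positive multiples of the original ones, so the ratio inequality in $\tilde{\phi}$ is identical to the ratio inequality in $\phi$. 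Invoking Lemma \ref{l2} closes the equivalence. The main obstacle I anticipate is the bookkeeping around whether $\lambda$ is strictly or only weakly increasing and whether the crossing points are unique — i.e., making the step "$t_{bb'}\le t_{b'b''} \Leftrightarrow \lambda(t_{bb'})\le\lambda(t_{b'b''})$" fully rigorous; everything after that is mechanical cross-multiplication, and everything before it is a direct application of the structure already established in the paper.
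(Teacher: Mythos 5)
Your proof is correct in substance, but it takes a genuinely different route from the paper's. The paper proves the lemma directly from Assumption \ref{a1}: for the ``only if'' direction it observes that $t_{bb'}>t_{b'b''}$ would make $\phi(b',t)$ strictly exceed both $\phi(b,t)$ and $\phi(b'',t)$ (hence every mixture) on the intermediate interval; for the ``if'' direction it constructs the dominating mixture by matching $b'$ at $\underline{t}$ and extends domination to all of $T$ by single-crossing, with a separate sub-argument for the co-intersection case $t_{bb'}=t_{b'b''}$. You instead route everything through Lemma \ref{l2} and the representation $\tilde{\phi}(a,t)=\lambda(t)\tilde{\phi}(a,\bar{t})+(1-\lambda(t))\tilde{\phi}(a,\underline{t})$: solving for $\lambda$ at the two crossing points and cross-multiplying does reduce $\lambda(t_{bb'})\le\lambda(t_{b'b''})$ exactly to the boundary-ratio inequality of Lemma \ref{l2} (the $f_1>0$ factors cancel in the ratios, as you note), so your algebra is sound. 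What your approach buys is brevity and unification with the Lemma \ref{l2}/Corollary \ref{c2} machinery, and it makes the boundary case transparent (equality of ratios corresponds to equality of the $\lambda$-values, i.e., co-intersection); what the paper's approach buys is that it never invokes the representation or Lemma \ref{l2} and works at the level of single-crossing alone.

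The one step you flag but do not close is the direction $\lambda(t_{bb'})\le\lambda(t_{b'b''})\Rightarrow t_{bb'}\le t_{b'b''}$, which is needed for ``dominated $\Rightarrow t_{bb'}\le t_{b'b''}$'' and does not follow from weak monotonicity of $\lambda$ alone. It can be closed as follows: if $t_{bb'}>t_{b'b''}$, weak monotonicity forces $\lambda(t_{bb'})=\lambda(t_{b'b''})$, hence $\lambda$ is constant on $[t_{b'b''},t_{bb'}]$ at the critical value $\frac{\tilde{\phi}(b',\underline{t})-\tilde{\phi}(b,\underline{t})}{[\tilde{\phi}(b',\underline{t})-\tilde{\phi}(b,\underline{t})]+[\tilde{\phi}(b,\bar{t})-\tilde{\phi}(b',\bar{t})]}$; but then $\tilde{\phi}(b,\cdot)-\tilde{\phi}(b',\cdot)=f_1(\cdot)[\phi(b,\cdot)-\phi(b',\cdot)]$ vanishes on that whole interval, so $\phi(b,\cdot)$ and $\phi(b',\cdot)$ intersect at every point of it, contradicting the uniqueness of $t_{bb'}$ that the lemma's statement (and the paper's own proof) relies on. With that sentence added, your argument is complete and at the same level of rigor as the paper's.
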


    \begin{proof}
        By $\phi(b,\underline{t})<\phi(b',\underline{t})<\phi(b'',\underline{t}); \phi(b,\bar{t})>\phi(b',\bar{t})>\phi(b'',\bar{t})$ and Assumption \ref{a1}, both $t_{bb'}$ and $t_{b'b''}$ are well-defined and unique. ``only if'': If $b'$ is very weakly dominated by a convex combination of $b$ and $b''$, but $t_{bb'}>t_{b'b''}$, then by Assumption \ref{a1}, for any $t \in (t_{bb'},t_{b'b''}), \phi(b',t)>\phi(b,t); \phi(b',t)>\phi(b'',t)$. In that case, for any $t \in (t_{bb'},t_{b'b''})$ and $\lambda \in [0,1], \phi(b',t)>\phi(\lambda b +(1-\lambda) b'',t)$, a contradiction to $b'$ is very weakly dominated by a convex combination of $b$ and $b''$. Hence, $t_{bb'} \leq t_{b'b''}$. 

        ``if'': If $t_{bb'} < t_{b'b''}$, then by Assumption \ref{a1}, $\forall t \in (t_{bb'},t_{b'b''}), \phi(b,t)>\phi(b',t); \phi(b'',t)>\phi(b',t)$. Since $\phi(b,\underline{t})<\phi(b',\underline{t})<\phi(b'',\underline{t})$, there exists $\lambda \in (0,1)$ such that $\phi(b',\underline{t})=\phi(\lambda b +(1-\lambda) b'',\underline{t})$. For $t \in (t_{bb'},t_{b'b''}), \phi(b',t)<\phi(\lambda b+(1-\lambda)b'',t)$, thus by Assumption \ref{a1}, $\phi(b',t) \leq \phi(\lambda b+(1-\lambda) b'',t), \forall t \in T$, i.e., $b'$ is very weakly dominated by $\lambda b+(1-\lambda) b''$. 

        If $t_{bb'}=t_{b'b''}$, then it has to be the case that $t_{bb'}=t_{b'b''}=t_{bb''}=\dot{t}$, i.e., $\phi(b,\cdot),\phi(b',\cdot)$, and $\phi(b'',\cdot)$ co-intersect. Again consider $\lambda$ such that $\phi(b',\underline{t})=\phi(\lambda b +(1-\lambda) b'',\underline{t})$. Claim that $\phi(b',\bar{t}) \leq \phi(\lambda b+(1-\lambda)b'',\bar{t})$. If $\phi(b',\underline{t}) > \phi(\lambda b +(1-\lambda) b'',\bar{t})$, then there exists $\varepsilon >0 $ such that $\phi(b',\underline{t})>\phi((\lambda+\varepsilon)b +(1-\lambda - \varepsilon)b'',\underline{t}), \phi(b',\bar{t})>\phi((\lambda +\varepsilon)b+(1-\lambda-\varepsilon)b'',\bar{t})$ but $\phi (b',\dot{t})=\phi((\lambda+\varepsilon)b+(1-\lambda-\varepsilon),\dot{t})$. Since $\underline{t}<\dot{t}<\bar{t}$, $\phi(b',\cdot)-\phi((\lambda+\varepsilon)b+(1-\lambda-\varepsilon)b'',\cdot)$ is not single-crossing, a contradiction to Assumption \ref{a1}. Therefore, $\phi(b',\bar{t}) \leq \phi(\lambda b+(1-\lambda)b'',\bar{t})$. By Assumption \ref{a1}, $\phi(b',t) \leq \phi(\lambda b+(1-\lambda) b'', t), \forall t \in T$, i.e., $b'$ is very weakly dominated by $\lambda b+(1-\lambda)b''$. 

        In conclusion, $t_{bb'} \leq t_{b'b''}$ implies that $b'$ is very weakly dominated by a convex combination of $b$ and $b''$. 
    \end{proof}
    As $\forall i \in \{2,...,m-1\}, b_i$ is not very weakly dominated by any convex combination of $b_{i-1}$ and $b_{i+1}$, by Lemma \ref{l5}, it has to be the case that $t_{b_{i-1}b_i}>t_{b_ib_{i+1}}, \forall i \in \{2,...,m-1\}$. Claim that $\forall i \in \{2,...,m-1\}, t \in (t_{b_ib_{i+1}},t_{b_{i-1}b_i}), \{b_i\}=\argmax \limits_{b \in \{b_1,...,b_m\}} \phi(b,t)$, i.e., $b_i$ is a strict best response when the set of bundle is $\{b_1,...,b_m\}$.\footnote{In fact, since bundles in $\mathcal{B} \backslash \{b_1,...,b_m\}$ are very weakly dominated, $b_i$ is also a strict best response when the set of bundles is $\mathcal{B}$. } This is because for any $t \in (t_{b_ib_{i+1}}, t_{b_{i-1}b_i})$, $t>t_{b_ib_{i+1}}>t_{b_{i+1}b_{i+2}}>...>t_{b_{m-1}b_m}$ implies that $\phi(b_i,t)>\phi(b_{i+1},t)>...>\phi(b_m,t)$. Meanwhile, $t<t_{b_{i-1}b_i}<t_{b_{i-2}b_{i-1}}<...<t_{b_1b_2}$ implies that $\phi(b_i,t)>\phi(b_{i-1},t)>...>\phi(b_1,t)$.  By Proposition \ref{p1}, $b_i$ is not very weakly dominated by any convex combination of $b_1,..,b_{i-1},b_{i+1},...,b_m$, hence not very weakly dominated.

\section{Proofs for Tree Bundling}

\subsection{Proofs of Proposition \ref{p33} and \ref{p34}}

\begin{proof}
    For Proposition \ref{p33}, it is not hard to see that $\forall b$ such that $b \not \supset b^{root}$, $b$ is either very weakly dominated by $b^{root}$ or very weakly dominated by a convex combination of $b^{root}$ and $\emptyset$: if $\phi(b,\underline{t}) \leq \phi(b^{root}, \underline{t})$, then by $\phi(b, \bar{t})=v(b,\bar{t})<v(b^{root}, \bar{t})=\phi(b^{root},\bar{t})$, $b$ is very weakly dominated by $b^{root}$. Meanwhile, if $\phi(b, \underline{t})>\phi(b^{root}, \underline{t})$, then $\phi(b^{root}, \underline{t})<\phi(b, \underline{t})<\phi(\emptyset, \underline{t}); \phi(b^{root}, \bar{t})>\phi(b, \bar{t})>\phi(\emptyset, \bar{t})$. Following the notations in Lemma \ref{l5}, claim that $t_{b^{root}b}<t_{b\emptyset}$. If $t_{b^{root}b} \geq t_{b\emptyset}$, then by Assumption \ref{a0}, $\phi(b^{root}, t_{b \emptyset}) \leq \phi(b, t_{b \emptyset})=0 \Rightarrow t_{b^{root} \emptyset} \geq t_{b \emptyset} \Rightarrow Q(b^{root}) \leq Q(b)$, a contradiction to $b^{root}$ has the strictly largest sold-alone quantity. The claim thus holds, and by Lemma \ref{l5}, $b$ is very weakly dominated by a convex combination of $b^{root}$ and $\emptyset$. To fulfill the proof, it suffices to show that $b^{root}$ is not very weakly dominated, so that it will be included in the minimal optimal menu and plays the role of ``root bundle''. Suppose that there exists $b' \supsetneq b^{root}$ such that $\phi(b', \underline{t}) \geq \phi(b^{root}, \underline{t})$, then by Assumption \ref{a0}, $\phi(b^{root}, t_{b' \emptyset}) \leq \phi(b', t_{b' \emptyset})=0 \Rightarrow t_{b^{root} \emptyset} \geq t_{b' \emptyset} \Rightarrow Q(b^{root}) \leq Q(b')$, again a contradiction to $b^{root}$ has the strictly largest sold-alone quantity. Hence, $\forall b' \supsetneq b^{root}$, $\phi(b', \underline{t})<\phi(b^{root}, \underline{t})$, i.e., $b^{root}$ is not very weakly dominated by any $b' \subsetneq b^{root}$. Meanwhile, if there exists $b' \supsetneq b^{root}$ such that $b^{root}$ is very weakly dominated by a convex combination of $b'$ and $\emptyset$, then by Lemma \ref{l5}, $t_{b'b^{root}} \leq t_{b^{root} \emptyset} \Rightarrow t_{b' \emptyset} \leq t_{b^{root} \emptyset}$, which is also a contradiction. Therefore, when running Algorithm \ref{al1}, $b$ can never be eliminated, so it will be included in the minimal optimal menu.

    As for Proposition \ref{p34}, by the same arguments as above, $b^{root}$ has to be included in the minimal optimal menu. To show that the minimal optimal menu is either a tree menu or a nested menu, it suffices to show that $b$ is very weakly dominated for all $b \not \supset b^{root}$. Since $Q(b^{root})>Q(b), Q(b^{root} \cup b) \geq \min \{Q(b^{root}), Q(b)\}$ implies that $Q(b^{root} \cup b) \geq Q(b)$. In addition, $\phi(b^{root} \cup b, \bar{t})>\phi(b, \bar{t})$. Hence, with similar arguments as above, $b$ is either very weakly dominated by $b^{root} \cup b$ or is very weakly dominated by a convex combination of $b^{root} \cup b$ and $\emptyset$. 
\end{proof}

\subsection{Proof of Theorem \ref{t33}}

\begin{proof}
    Consider $\tilde{b}$ such that $b^{root} \subsetneq \tilde{b} \subsetneq b^*$. By the monotonicity of $v$ at $\bar{t}$, there exists $\lambda \in (0,1)$ such that $\phi(\tilde{b},\bar{t})=\lambda \phi(b^*,\bar{t})+(1-\lambda) \phi(b^{root}, \bar{t})$. In that case, by the existence of $\mathfrak{f}$,
    \begin{equation*} \dfrac{\phi(\tilde{b},\bar{t})-\phi(b^{root},\bar{t})}{\phi(b^*,\bar{t})-\phi(b^{root},\bar{t})}= \lambda \Rightarrow \dfrac{\phi(\tilde{b},\underline{t})-\phi(b^{root},\underline{t})}{\phi(b^*,\underline{t})-\phi(b^{root},\underline{t})}=\mathfrak{f}(\lambda) \Rightarrow \phi(\tilde{b},\underline{t})=\mathfrak{f}(\lambda) \phi(b^*, \underline{t})+(1-\mathfrak{f}(\lambda))\phi(b^{root}, \underline{t}). \end{equation*} For any $b^{root} \subsetneq \tilde{b}_1, \tilde{b}_2 \subsetneq b^*$ such that  \begin{equation} \label{eq55} \phi(\tilde{b}_1,\bar{t})>\phi(\tilde{b},\bar{t})>\phi(\tilde{b}_2,\bar{t}), \end{equation} it has to be the case that $\lambda_1>\lambda>\lambda_2$, in which
    \begin{equation*} \begin{pmatrix} \phi(\tilde{b}_1, \bar{t}) \\ \phi(\tilde{b}_2, \bar{t}) \end{pmatrix} = \begin{pmatrix} \lambda_1 & 1-\lambda_1 \\ \lambda_2 & 1-\lambda_2 \end{pmatrix} \begin{pmatrix} \phi(b^*,\bar{t}) \\ \phi(b^{root}, \bar{t}) \end{pmatrix}.
    \end{equation*}
    
    $\mathfrak{f}$ being increasing and strictly convex implies that $\mathfrak{f}$ is strictly increasing.\footnote{Suppose there exist $x_1,x_2 \in [0,1]$ such that $x_1<x_2$ but $\mathfrak{f}(x_1)=\mathfrak{f}(x_2)$, then by strict convexity, $\mathfrak{f}(\frac{x_1+x_2}{2})<\frac{1}{2} \mathfrak{f}(x_1)+\frac{1}{2} \mathfrak{f}(x_2)=\mathfrak{f}(x_1)=\mathfrak{f}(x_2)$, a violation of $\mathfrak{f}$ being increasing.} Hence \begin{equation} \label{eq57} \mathfrak{f}(\lambda_1)>\mathfrak{f}(\lambda)>\mathfrak{f}(\lambda_2) \Rightarrow \phi(\tilde{b}_1,\underline{t})<\phi(\tilde{b},\underline{t})<\phi(\tilde{b}_2, \underline{t}). \end{equation} (\ref{eq55}) and (\ref{eq57}) together satisfy the preliminary condition for Lemma \ref{l2}. To apply Lemma \ref{l2}, consider the ratio of differences of $\phi$: \begin{equation*} \dfrac{\phi(\tilde{b}_1,\bar{t})-\phi(\tilde{b},\bar{t})}{\phi(\tilde{b},\bar{t})-\phi(\tilde{b}_2,\bar{t})}=\dfrac{\lambda_1-\lambda}{\lambda-\lambda_2}; \dfrac{\phi(\tilde{b}_1,\underline{t})-\phi(\tilde{b},\underline{t})}{\phi(\tilde{b},\underline{t})-\phi(\tilde{b}_2,\underline{t})}=\dfrac{\mathfrak{f}(\lambda_1)-\mathfrak{f}(\lambda)}{\mathfrak{f}(\lambda)-\mathfrak{f}(\lambda_2)}. \end{equation*} Let $\lambda=\mu \lambda_1+(1-\mu)\lambda_2, \mu \in (0,1)$, then by strict convexity of $\mathfrak{f}$,  \begin{equation*} \mathfrak{f}(\lambda)< \mu \mathfrak{f}(\lambda_1)+(1-\mu)\mathfrak{f}(\lambda_2) \Rightarrow \dfrac{\mathfrak{f}(\lambda_1)-\mathfrak{f}(\lambda)}{\mathfrak{f}(\lambda)-\mathfrak{f}(\lambda_2)}>\dfrac{1-\mu}{\mu}=\dfrac{\lambda_1-\lambda}{\lambda-\lambda_2}. \end{equation*} In that case, by Lemma \ref{l2}, no $\tilde{b}$ is very weakly dominated by any convex combination of any ($\tilde{b}_1$, $\tilde{b}_2$) that satisfy (\ref{eq55}) and (\ref{eq57}). Therefore, by Algorithm \ref{al1}, no $\tilde{b}$ can be eliminated, so that no $\tilde{b}$ is very weakly dominated, which further implies that the minimal optimal menu is the set of bundles that (weakly) contain $b^{root}$. 
\end{proof}

\subsection{The Branching Condition for Theorem \ref{t34}}
\label{app:t34}

The branching condition is the following:
There exist $\lambda_1,\lambda_2,\mu_1,\mu_2$ such that
\begin{equation} \label{eq3310} \begin{pmatrix} \phi(b^{root} \cup \{i\}, \bar{t}) \\ \phi(b^* \backslash \{i\}, \bar{t}) \end{pmatrix} = \begin{pmatrix} \lambda_1 & 1-\lambda_1 \\ \lambda_2 & 1-\lambda_2 \end{pmatrix} \begin{pmatrix} \phi(b^*,\bar{t}) \\ \phi(b^{root}, \bar{t}) \end{pmatrix}; \quad \begin{pmatrix} \phi(b^{root} \cup \{i\}, \underline{t}) \\ \phi(b^* \backslash \{i\}, \underline{t}) \end{pmatrix} = \begin{pmatrix} \mu_1 & 1-\mu_1 \\ \mu_2 & 1-\mu_2 \end{pmatrix} \begin{pmatrix} \phi(b^*,\underline{t}) \\ \phi(b^{root}, \underline{t}) \end{pmatrix}, \end{equation}
with
\begin{equation} \label{eq18} \text{for any } k \in \{1,2\},\ \mu_k, \lambda_k \in (0,1); \quad \lambda_1<\lambda_2; \quad \dfrac{\mu_1}{\lambda_1}<\dfrac{\mu_2-\mu_1}{\lambda_2-\lambda_1}<\dfrac{1-\mu_2}{1-\lambda_2}. \end{equation}
Moreover, for any $\tilde{b}$ such that $b^{root} \subsetneq \tilde{b} \subsetneq b^*$, $\tilde{b} \notin \{b^{root} \cup \{i\}, b^* \backslash \{i\}\}$, there exist $\lambda^{\tilde{b}}, \mu^{\tilde{b}}$ such that $\phi(\tilde{b},\bar{t})=\lambda^{\tilde{b}} \phi(b^*,\bar{t})+(1-\lambda^{\tilde{b}}) \phi(b^{root}, \bar{t})$, $\phi(\tilde{b},\underline{t})=\mu^{\tilde{b}} \phi(b^*,\underline{t})+(1-\mu^{\tilde{b}}) \phi(b^{root}, \underline{t})$, and
\begin{equation} \label{eq19} (\lambda^{\tilde{b}}, \mu^{\tilde{b}}) \in \{(x,y) \in \mathbb{R}^2 \mid y > \dfrac{\mu_1}{\lambda_1}x;\ y > \dfrac{1-\mu_2}{1-\lambda_2}x+\dfrac{\mu_2-\lambda_2}{1-\lambda_2},\ x\geq \lambda_1;\ y \leq \mu_2\}.\footnote{This set is the intersection of $\{(x,y) \mid \lambda_1\leq x \leq 1; \mu_1 \leq y \leq 1; y>\frac{\mu_1}{\lambda_1}x\}$ and $\{(x,y) \mid 0\leq x \leq \lambda_2; 0 \leq y \leq \mu_2; y>\frac{1-\mu_2}{1-\lambda_2}x+\frac{\mu_2-\lambda_2}{1-\lambda_2}\}$.} \end{equation}

Equations \eqref{eq3310} and \eqref{eq18} guarantee no dominance among the four corner bundles $b^{root}$, $b^{root}\cup\{i\}$, $b^*\setminus\{i\}$, $b^*$: $b^{root} \cup \{i\}$ is not very weakly dominated by any convex combination of $b^{root}$ and $b^* \backslash \{i\}$, and $b^* \backslash \{i\}$ is not very weakly dominated by any convex combination of $b^{root} \cup \{i\}$ and $b^*$; this also implies both are undominated by combinations of $b^{root}$ and $b^*$. Equation \eqref{eq19} constrains all other intermediate bundles. It exploits the least-favorite property of $i$: if $b^{root} \cup \{i\}$ is not very weakly dominated by a deterministic bundle, then it is very weakly dominated only if dominated by a convex combination of $b^{root}$ and another bundle; symmetrically, $b^* \backslash \{i\}$ only by a combination of $b^*$ and another bundle. Equation \eqref{eq19} rules out both. The efficacy of these restrictions runs through Lemma \ref{l2}: under \eqref{eq18} and \eqref{eq19}, inequality \eqref{eq66} cannot hold once $b^{root} \cup \{i\}$ or $b^* \backslash \{i\}$ is the middle bundle. 

\subsection{Proof of Theorem \ref{t34}}

\begin{proof}

The proof aims to show that both $b^{root} \cup \{i\}$ and $b^* \backslash \{i\}$ are not very weakly dominated. 

\textbf{Step 1: }I first show that (\ref{eq18}) guarantees that $b^{root} \cup \{i\}$ is not very weakly dominated by any convex combination of $b^{root}$ and $b^* \backslash \{i\}$, and $b^* \backslash \{i\}$ is not very weakly dominated by any convex combination of $b^{root} \cup \{i\}$ and $b^*$. By (\ref{eq18}), $\mu_2-\mu_1>(\lambda_2-\lambda_1) \dfrac{\mu_1}{\lambda_1}>0$, so that $0<\lambda_1<\lambda_2<1$ and $0<\mu_1<\mu_2<1$, which satisfies the preliminary conditions of Lemma \ref{l2}. In that case, $b^{root} \cup \{i\}$ is not very weakly dominated by any convex combination of $b^{root}$ and $b^* \backslash \{i\}$ if and only if
\begin{equation*}
    \dfrac{\phi(b^* \backslash \{i\}, \bar{t})-\phi(b^{root} \cup \{i\},\bar{t})}{\phi(b^{root} \cup \{i\},\bar{t})- \phi(b^{root}, \bar{t})} <     \dfrac{\phi(b^* \backslash \{i\}, \underline{t})-\phi(b^{root} \cup \{i\},\underline{t})}{\phi(b^{root} \cup \{i\},\underline{t})- \phi(b^{root}, \underline{t})}, 
\end{equation*}
which is equivalent to
\begin{equation*}
    \dfrac{\lambda_2-\lambda_1}{\lambda_1}<\dfrac{\mu_2-\mu_1}{\mu_1} \Leftrightarrow \dfrac{\mu_1}{\lambda_1}< \dfrac{\mu_2-\mu_1}{\lambda_2-\lambda_1}.
\end{equation*}
Similarly, $b^* \backslash \{i\}$ is not very weakly dominated by any convex combination of $b^{root} \cup \{i\}$ and $b^*$ if and only if $\dfrac{1-\lambda_2}{\lambda_2-\lambda_1}<\dfrac{1-\mu_2}{\mu_2-\mu_1} \Leftrightarrow \dfrac{\mu_2-\mu_1}{\lambda_2-\lambda_1}<\dfrac{1-\mu_2}{1-\lambda_2}$. 

It is not hard to see that (\ref{eq18}) also ensures that both $b^{root} \cup \{i\}$ and $b^* \backslash \{i\}$ are not very weakly dominated by any convex combination of $b^{root}$ and $b^*$, either by directly checking Lemma \ref{l2} like what has been argued above or following the adjacent pair of bundles check demonstrated in Algorithm \ref{al1}.  

\textbf{Step 2: }Next I show that (\ref{eq19}) guarantees that $b^{root} \cup \{i\}$ and $b^* \backslash \{i\}$ are not very weakly dominated by other bundles as well. Claim that if $\forall \tilde{b}$ such that $b^{root} \subsetneq \tilde{b} \subsetneq b^*, $  $\exists \lambda^{\tilde{b}}, \mu^{\tilde{b}}$ such that $\phi(\tilde{b},\bar{t})=\lambda^{\tilde{b}} \phi(b^*,\bar{t})+(1-\lambda^{\tilde{b}}) \phi(b^{root}, \bar{t}); \phi(\tilde{b},\underline{t})=\mu^{\tilde{b}} \phi(b^*,\underline{t})+(1-\mu^{\tilde{b}}) \phi(b^{root}, \underline{t})$, and \begin{equation} \label{eq62} \begin{split}(\lambda^{\tilde{b}}, \mu^{\tilde{b}}) \in &\{(x,y) \in \mathbb{R}^2| \lambda_1\leq x \leq 1; \mu_1 \leq y \leq 1; y>\dfrac{\mu_1}{\lambda_1}x\} \\ &\bigcap \{(x,y) \in \mathbb{R}^2| 0\leq x \leq \lambda_2; 0 \leq y \leq \mu_2; y>\dfrac{1-\mu_2}{1-\lambda_2}x+\dfrac{\mu_2-\lambda_2}{1-\lambda_2}\}, \end{split} \end{equation} then $b^{root} \cup \{i\}$ is very weakly dominated if and only if there exists $\tilde{b}$ such that $b^{root} \cup \{i\}$ is very weakly dominated by a convex combination of $b^{root}$ and $\tilde{b}$. This is because there does not exist $(\lambda^{\tilde{b}}, \mu^{\tilde{b}})$ such that $\lambda^{\tilde{b}}> \lambda_1, \mu^{\tilde{b}}<\mu_1$, i.e., no $\tilde{b}$ that very weakly dominates $b^{root} \cup \{i\}$. Proceed to show that no $\tilde{b}$ combining with $\emptyset$ can very weakly dominates $b^{root} \cup \{i\}$ given (\ref{eq62}) holds for any $(\lambda^{\tilde{b}}, \mu^{\tilde{b}})$. As $\lambda^{\tilde{b}}>\lambda_1>0; \mu^{\tilde{b}}>\mu_1>0$, by Lemma \ref{l2}, $b^{root} \cup \{i\}$ is very weakly dominated by a convex combination of $\tilde{b}$ and $b^{root}$ if and only if $\dfrac{\lambda^{\tilde{b}}-\lambda_1}{\lambda_1} \leq \dfrac{\mu^{\tilde{b}}-\mu_1}{\mu_1} \Leftrightarrow \dfrac{\lambda^{\tilde{b}}}{\mu^{\tilde{b}}} \leq \dfrac{\lambda_1}{\mu_1}$. However, by (\ref{eq62}), $\mu^{\tilde{b}}> \dfrac{\mu_1}{\lambda_1}\lambda^{\tilde{b}} \Rightarrow \dfrac{\lambda^{\tilde{b}}}{\mu^{\tilde{b}}}>\dfrac{\lambda_1}{\mu_1}$, so that $b^{root} \cup \{i\}$ is not very weakly dominated. Similarly, there does not $\tilde{b}$ that very weakly dominates $b^* \backslash \{i\}$, so $b^* \backslash \{i\}$ is very weakly dominated if and only if it is very weakly dominated by a convex combination of $b^*$ and $\tilde{b}$, which holds if and only if $\dfrac{1-\lambda_2}{\lambda_2-\lambda^{\tilde{b}}} \leq \dfrac{1-\mu_2}{\mu_2-\mu^{\tilde{b}}}$. Again, by $\mu^{\tilde{b}}>\dfrac{1-\mu_2}{1-\lambda_2}\lambda^{\tilde{b}}+\dfrac{\mu_2-\lambda_2}{1-\lambda_2}$, the previous inequality does not hold, so that $b^* \backslash \{i\}$ is not very weakly dominated as well.

\textbf{Step 3: }Further arguments can show that the intersection in (\ref{eq62}) is exactly the set demonstrated in (\ref{eq19}), and this set is non-empty: it is straightforward that the intersection in (\ref{eq62}) is a subset of the set in (\ref{eq19}), so it suffices to show the other direction of set inclusion. $\forall (x,y) \in \{(x,y) \in \mathbb{R}^2| y > \dfrac{\mu_1}{\lambda_1}x; y > \dfrac{1-\mu_2}{1-\lambda_2}x+\dfrac{\mu_2-\lambda_2}{1-\lambda_2}, x\geq \lambda_1; y \leq \mu_2\}, $
\begin{equation*} y>\dfrac{\mu_1}{\lambda_1} \cdot \lambda_1=\mu_1; x< \dfrac{1-\lambda_2}{1-\mu_2}(y-\dfrac{\mu_2-\lambda_2}{1-\lambda_2})=\lambda_2, \end{equation*} which fulfills the proof of ``(\ref{eq19}) $\subseteq$ (\ref{eq62})''. As for non-emptiness, since \begin{equation} \label{eq3362} \mu_2> \mu_1= \dfrac{\mu_1}{\lambda_1} \cdot \lambda_1; \mu_2(1-\lambda_2)>(1-\mu_2)\lambda_1+(\mu_2-\lambda_2), \end{equation} $(\lambda_1,\mu_2) \in \{(x,y) \in \mathbb{R}^2| y > \dfrac{\mu_1}{\lambda_1}x; y > \dfrac{1-\mu_2}{1-\lambda_2}x+\dfrac{\mu_2-\lambda_2}{1-\lambda_2}, x\geq \lambda_1; y \leq \mu_2\}$.\footnote{(\ref{eq3362}) holds because $\mu_2(1-\lambda_2)>(1-\mu_2)\lambda_1+(\mu_2-\lambda_2) \Leftrightarrow \mu_2(\lambda_1-\lambda_2)>\lambda_1-\lambda_2$, which holds as $\lambda_1<\lambda_2$ and $\mu_2<1$. }
\end{proof}

\section{Proofs for Other Applications}

\subsection{Proof of Theorem \ref{t4}}

\begin{proof}

I first show that every bundle that is not included in $\{\emptyset, b_{j_1}, \bigcup_{i=1}^2 \{b_{j_i}\}, ..., \bigcup_{i=1}^{n-1}\{b_{j_i}\}, b^*\}$ is very weakly dominated. In particular, $\bigcup_{i=1}^{n'} \{b_{\rho(j_i)}\}$ is very weakly dominated for all $n' \in \{1,...,n\}$ if $\{\rho(j_1),...,\rho(j_{n'})\}$ is a size $n'$ subset of $\{1,...,n\}$ but $\{\rho(j_1),...,\rho(j_{n'})\} \neq \{j_1,...,j_{n'}\}$. Consider such a $\bigcup_{i=1}^{n'} \{b_{\rho(j_i)}\}$, let \begin{equation*} i =\min \{\tilde{i} \in \{1,...,n'\}| j_{\tilde{i}} \notin \{\rho(j_1),...,\rho(j_{n'})\} . \end{equation*} Meanwhile, there exists $i' \geq n'+1>i$ such that $j_{i'} \in \{\rho(j_1),...,\rho(j_{n'})\}$. I proceed to show that $\bigcup_{i=1}^{n'} \{b_{\rho(j_i)}\}$ is very weakly dominated by a convex combination of $ \bigcup_{i=1}^{n'} \{b_{\rho(j_i)}\} \backslash \{b_{j_{i'}}\}$ and $\bigcup_{i=1}^{n'} \{b_{\rho(j_i)}\} \bigcup \{b_{j_i}\}$. By additivity,   \begin{equation*}  \begin{split} \phi(\bigcup_{i=1}^{n'} \{b_{\rho(j_i)}\}\backslash \{b_{j_{i'}} \},\bar{t}) \leq \phi(\bigcup_{i=1}^{n'} \{b_{\rho(j_i)}\},\bar{t})  \leq \phi(\bigcup_{i=1}^{n'} \{b_{\rho(j_i)}\} \bigcup \{b_{j_i}\},\bar{t}), \\
\phi(\bigcup_{i=1}^{n'} \{b_{\rho(j_i)}\}\backslash \{b_{j_{i'}} \},\underline{t}) \geq \phi(\bigcup_{i=1}^{n'} \{b_{\rho(j_i)}\},\underline{t})  \geq \phi(\bigcup_{i=1}^{n'} \{b_{\rho(j_i)}\} \bigcup \{b_{j_i}\},\underline{t}), 
\end{split}
\end{equation*}

which satisfies the preliminary conditions of Lemma \ref{l2}. Hence, by Lemma \ref{l2}, showing the very weakly dominance of $\bigcup_{i=1}^{n'} \{b_{\rho(j_i)}\}$ is equivalent to showing \begin{equation*} \dfrac{\bigcup_{i=1}^{n'} \{b_{\rho(j_i)}\} \bigcup \{b_{j_i}\},\bar{t})-\phi(\bigcup_{i=1}^{n'} \{b_{\rho(j_i)}\},\bar{t})}{\phi(\bigcup_{i=1}^{n'} \{b_{\rho(j_i)}\},\bar{t})-\phi(\bigcup_{i=1}^{n'} \{b_{\rho(j_i)}\} \backslash \{b_{j_{i'}}\},\bar{t})} \geq \dfrac{\phi(\bigcup_{i=1}^{n'} \{b_{\rho(j_i)}\} \bigcup \{b_{j_i}\},\underline{t})-\phi(\bigcup_{i=1}^{n'} \{b_{\rho(j_i)}\},\underline{t})}{\phi(\bigcup_{i=1}^{n'} \{b_{\rho(j_i)}\},\underline{t})-\phi(\bigcup_{i=1}^{n'} \{b_{\rho(j_i)}\} \backslash \{b_{j_{i'}}\},\underline{t})}, \end{equation*}
which is further equivalent to showing
\begin{equation*} \dfrac{\phi(b_{j_i},\bar{t})}{\phi(b_{j_{i'}},\bar{t})} \geq \dfrac{\phi(b_{j_i},\underline{t})}{\phi(b_{j_{i'}},\underline{t})}. \end{equation*}
By the strict order of the ratio and $i'>i$, $\dfrac{\phi(b_{j_i},\bar{t})}{\phi(b_{j_i},\underline{t})} < \dfrac{\phi(b_{j_{i'}},\bar{t})}{\phi(b_{j_{i'}},\underline{t})}$. Since $\phi(b_{j_i},\underline{t})<0, \phi(b_{j_{i'}},\bar{t})\geq 0$, this directly implies that $ \dfrac{\phi(b_{j_i},\bar{t})}{\phi(b_{j_{i'}},\bar{t})} \geq \dfrac{\phi(b_{j_i},\underline{t})}{\phi(b_{j_{i'}},\underline{t})}$. Thus, any bundle that is not included in the proposed menu is very weakly dominated. 

By Proposition \ref{p4}, the only thing left to show is every bundle that is included in the proposed menu is not very weakly dominated. To show this, by Proposition \ref{p4}, it suffices to show that every bundle in the menu is not very weakly dominated by any convex combinations of other bundles in the menu. As \begin{equation*} \begin{split} \phi(\emptyset, \bar{t}) \leq \phi(b_{j_1},\bar{t}) \leq ... \leq \phi(\bigcup_{i=1}^{n-1}\{b_{j_i}\},\bar{t}) \leq \phi(b^*,\bar{t}); \phi(\emptyset, \underline{t}) \geq \phi(b_{j_1},\underline{t}) \geq ... \geq \phi(\bigcup_{i=1}^{n-1}\{b_{j_i}\},\underline{t}) \geq \phi(b^*,\bar{t}), \end{split} \end{equation*}
by Algorithm \ref{al1}, it suffices to show that $\bigcup_{i=1}^{l}\{b_{j_i}\}$ is not very weakly dominated by any convex combinations of $\bigcup_{i=1}^{l-1}\{b_{j_i}\}$ and $\bigcup_{i=1}^{l+1}\{b_{j_i}\}$, for all $l \in \{1,...,n-1\}$.\footnote{Let $\bigcup_{i=1}^{0}\{b_{i}\}$ be $\emptyset$.} By Lemma \ref{l2}, it suffices to show that \begin{equation} \label{eq69} \dfrac{\phi(b_{j_{l+1}},\bar{t})}{\phi(b_{j_l},\bar{t})} < \dfrac{\phi(b_{j_{l+1}},\underline{t})}{\phi(b_{j_l},\underline{t})}, \forall l \in \{1,...,n-1\} \end{equation}
By the strict order of the ratio and $\phi(b_{j_l},\bar{t})>0, \phi(b_{j_{l+1}},\underline{t})<0$, (\ref{eq69}) holds.\footnote{Alternatively, one can also directly show that each bundle in the menu cannot be very weakly dominated by any convex combinations of other bundles. It can be regarded as a result of the following: if $a_i \geq 0, b_i<0, \forall i \in \{1,...,n\}$ and $\frac{a_1}{b_1}<...<\frac{a_n}{b_n}$, then $\frac{a_1+...+a_{n'}}{b_1+...+b_{n'}} < \frac{a_{n'+1}+...+a_n}{b_{n'+1}+...+b_n}, \forall n' \in \{1,...,n-1\}$. } As a result, by Theorem \ref{t2}, $\{\emptyset, b_{j_1}, \bigcup_{i=1}^2 \{b_{j_i}\}, ..., \bigcup_{i=1}^{n-1}\{b_{j_i}\}, b^*\}$ is the minimal optimal menu for problem \eqref{eq5}. 

\end{proof}

\subsection{Proof of Proposition \ref{lemma:quasiconcave}}

     \begin{proof}
        ``if'': If there does not exist $t<t'$ such that $\sign[z'(t)]<0, \sign[z'(t')]>0$, and $z'$ is not identically zero on any interval, then one of the following cases has to hold:
        \begin{itemize}
            \item for any $t, z'(t) \geq 0$, with the equality only can hold discretely,
            \item for any $t, z'(t) \leq 0$, with the equality only can hold discretely,
            \item there exists $t_0$ such that $z'(t)\geq0, t \in [\underline{t}, t_0); z'(t)=0, t=t_0; z'(t)\leq0, t \in (t_0, \bar{t}]$, with all equalities only can hold discretely.
        \end{itemize}
        The first two cases imply that $z$ is strictly monotonic, which is strictly quasi-concave. The third case implies that $z$ is single-peaked, with the peak being a unique point, and $z$ is strictly increasing (decreasing) on the left(right) side of the peak, so that $z$ is strictly quasi-concave. 

        ``only if'': If $z$ is strictly quasi-concave but there exists $t<t'$ such that $\sign[z'(t)]<0, \sign[z(t')]>0$, then by continuity of $z'$, there exists $t_* \in (t,t')$ such that $z'(t_*)=0$ and there exists $\delta>0$ such that $z'(t_*-\delta)<0, z'(t_*+\delta)>0$. In that case, $z(t_*)<\min \{z(t_*-\delta), z(t_*+\delta)\}$. However, $t_*=\dfrac{1}{2}(t_*-\delta)+\dfrac{1}{2}(t_*+\delta)$, a contradiction to quasi-concavity. Meanwhile, if $f$ is strictly quasi-concave but the zeros of $z'$ is not discrete, then there exists an interval $I \subseteq [\underline{t}, \bar{t}]$ such that for any $t \in I, z'(t)=0$, indicating that for any $t,t' \in I, z(t)=z(t')$, a contradiction to strict quasi-concavity. 
    \end{proof}

\subsection{Proof of Theorems \ref{t5} and \ref{t6}}

\begin{proof}
    \textbf{Proof of Theorem \ref{t5}: }For any $b$ such that $\phi(b,\underline{t}) \leq \phi(b^*,\underline{t})$, since $\{b^*\} = \argmax \limits_{\tilde{b} \in \mathcal{B}} v(\tilde{b},\bar{t})$, i.e., $v(b,\bar{t})=\phi(b,\bar{t}) < \phi(b^*,\bar{t})$, by Corollary \ref{c2}, $b$ is very weakly dominated by $b^*$, so that $b$ is excluded from the minimal optimal menu. 
    
    For any $b$ such that $\phi(b,\underline{t}) > \phi(b^*, \underline{t})$, it has to be the case that $\phi(\emptyset,\bar{t}) \leq \phi(b,\underline{t}) \leq \phi(b^*,\underline{t}); \phi(\emptyset, \underline{t}) \geq \phi(b,\underline{t}) \geq \phi(b^*,\underline{t})$. Following the notation in Lemma \ref{l5}, $Q(b)=1-F(t_{\emptyset b}); Q(b^*)=1-F(t_{\emptyset b^*})$. Suppose $t_{bb^*} > t_{\emptyset b}$, for any $t<t_{bb^*}, \phi(b^*,t)<\phi(b,t)$ implies that $\phi(b^*,t_{\emptyset b})<\phi(b, t_{\emptyset b})=0$. Since $\phi(b^*,t) <0$ if and only if $t< t_{\emptyset b^*}$, $t_{\emptyset b}<t_{\emptyset b^*}$, which further implies that $Q(b)>Q(b^*)$, a contradiction. Therefore, $t_{bb^*} \leq t_{\emptyset b}$. By Lemma \ref{l5}, $b$ is very weakly dominated by a convex combination of $\emptyset$ and $b^*$. Hence, every $b \notin \{\emptyset, b^*\}$ is very weakly dominated, and it is easy to check that $\emptyset$ and $b^*$ are not very weakly dominated, as $\{\emptyset\}=\argmax \limits_{b \in \mathcal{B}} \phi(b,\underline{t}); \{b^*\}=\argmax \limits_{b \in \mathcal{B}} \phi(b, \bar{t})$. By Theorem \ref{t2}, $\{\emptyset,b^*\}$ is the minimal optimal menu. 

    \end{proof}

  \begin{proof}   \textbf{Proof of Theorem \ref{t6}: } Suppose that there exist $b_1$ and $b_2$ in the minimal optimal menu such that neither $b_1 \subseteq b_2$ nor $b_2 \subseteq b_1$. Without loss of generality, let $Q(b_1) \leq Q(b_2)$. If $\phi(b_1, \underline{t}) \leq \phi(b_1 \cup b_2,\underline{t})$, since $\phi(b_1, \bar{t}) \leq \phi(b_1 \cup b_2, \bar{t})$,  $b_1$ is very weakly dominated by $b_1 \cup b_2$, a contradiction to $b_1$ belonging to the minimal optimal menu. If $\phi(b_1,\underline{t}) > \phi(b_1 \cup b_2,\underline{t})$, then it has to be the case that $\phi(\emptyset,\bar{t}) \leq \phi(b_1,\bar{t}) \leq \phi(b_1 \cup b_2,\bar{t}; \phi(\emptyset,\underline{t}) \geq \phi(b_1,\underline{t}) \geq \phi(b_1 \cup b_2,\underline{t})$. As $Q(b_1 \cup b_2) \geq Q(b_1)$, by the proof of Theorem \ref{t5}, $b_1$ is very weakly dominated by a convex combination of $\emptyset$ and $b_1 \cup b_2$, again a contradiction. Therefore, the minimal optimal menu is nested. 
 
 Next, consider the menu $\{\emptyset, b_1^\star, b_1^\star \cup b_2^\star,...,b^*\}$. By Proposition \ref{p4}, to show that it is an optimal menu, it suffices to show that every bundle not included in it is very weakly dominated by convex combinations of bundles included in the menu. By the proof of Proposition 9 in \cite{yang2023nested}, for any $b \notin \{\emptyset, b_1^\star, b_1^\star \cup b_2^\star,..., b^*\}$, there exists $j \in \{1,2,...,2^n\}$ such that $b=b_j^\star$, and $Q(b_j^\star) \leq Q(b_1^\star \cup b_2^\star \cup ... \cup b_j^\star)$. By the argument above, $b_j^\star$ is either very weakly dominated by $b_1^\star \cup b_2^\star \cup ... \cup b_j^\star$, or is very weakly dominated by a convex combination of $\emptyset$ and $b_1^\star \cup ... \cup b_j^\star$. The result thus follows. 

\end{proof}

\newpage

\pagenumbering{arabic}
\setcounter{page}{1}
\begin{center}
    {\LARGE \textbf{Online Appendix}}\\[1em]

\end{center}
\addcontentsline{toc}{section}{Online Appendix}

\section{Examples: Failure of Assumption \ref{a0} or \ref{a1}}
\label{s622}

I first show an example where Lemma \ref{l1} does not hold if only Assumption \ref{a1} but not both Assumptions \ref{a0} and \ref{a1} are satisfied:
\begin{example}
\label{e7}
 Consider the following example: $T=[0,2], \mathcal{B}=\{b,b'\}, v(b,t)=-t^2+2t+1, v(b',t)=0, F(t)=\dfrac{t}{2}$. This gives $\phi(b,t)=-t^2+2t+1-(2-t)(-2t+2)=-3t^2+8t-3, \phi(b',t)=0$. In that case, \begin{equation*} \phi(b,0)=-3<\phi(b',0); \phi(b,2)=1>\phi(b',1), \end{equation*} so that the condition specified in Lemma \ref{l1} is satisfied. Meanwhile, for any $a=\lambda_a b+(1-\lambda_a)b'$  and  $a'=\lambda_{a'} b+(1-\lambda_{a'})b'$, \begin{equation*} \begin{split} \phi(a,t)-\phi(a',t)&=(\lambda_a-\lambda_{a'})\phi(b,t) \\&=(\lambda_a-\lambda_{a'})(-3t^2+8t-3)\\&=-3(\lambda_a-\lambda_{a'})(t-\dfrac{4-\sqrt{7}}{3})(t-\dfrac{4+\sqrt{7}}{3}), \end{split} \end{equation*} which is indeed single-crossing in $t$ on $[0,2]$ for any $\lambda_a,\lambda_{a'} \in [0,1]$, satisfying Assumption \ref{a1}. 

 However, $\phi(b,t)-\phi(b',t)=-3t^2+8t-3$, which is not monotonic on $[0,2]$, a violation of Assumption \ref{a0}. Correspondingly, $v(b,t)-v(b',t)=-t^2+2t+1$ is indeed not increasing in $t$ on $[0,2]$. 
\end{example}

I then demonstrate several examples to show the role that Assumption \ref{a1} plays in obtaining the results discussed in Section \ref{s22}. Example \ref{e2} below demonstrates that a never strict best response bundle may not be very weakly dominated if Assumption \ref{a1} fails:

\begin{example}
    \label{e2}
    Consider the following example: $T=[0,1],\mathcal{B}=\{b,b',b''\}, \phi(b,t)=t, \phi(b',t)=1-t, \phi(b'',t)=  -\dfrac{1}{3}t+\dfrac{2}{3} \text{ if } t \leq \dfrac{1}{2}; =\dfrac{1}{3}t+\dfrac{1}{3} \text{ if } t > \dfrac{1}{2}. $

\begin{figure}[H]
    \centering
    \includegraphics[width=0.45\linewidth]{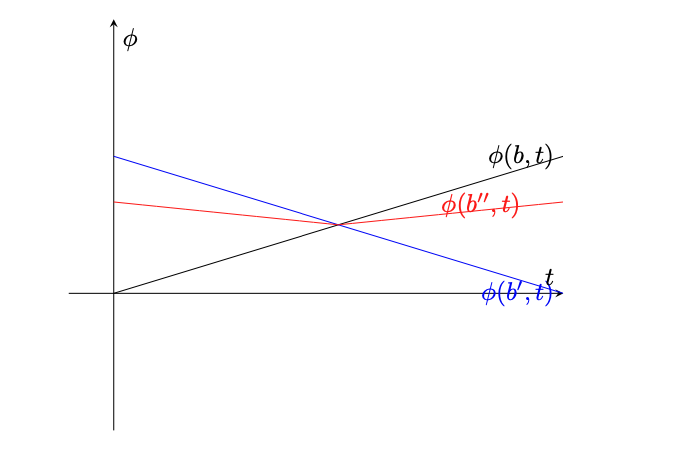}
    \caption{Failure of Proposition \ref{p1}}
    \label{fig:placeholder3}
\end{figure}
 
    $b''$ is a never strict best response, but $b''$ is not very weakly dominated: suppose $b''$ is very weakly dominated, then there exists $\lambda \in [0,1]$ such that $\phi(b'',t) \leq \lambda \phi(b,t)+(1-\lambda)\phi(b',t), \forall t \in T$. $\phi(b'',0) \leq \lambda \phi(b,0)+(1-\lambda)\phi(b',0)$ implies that $\lambda \leq \dfrac{1}{3}$; $\phi(b'',1) \leq \lambda \phi(b,1)+(1-\lambda)\phi(b',1)$ implies that $\lambda \geq \dfrac{2}{3}$, a contradiction to the existence of $\lambda$, which means $b''$ is not very weakly dominated. 

However, in this example, Assumption \ref{a1} does not hold. In particular, $\phi(b'',t)-\phi(\dfrac{b+b'}{2},t)$ is not single-crossing: $\sign[\phi(b'',0)-\phi(\dfrac{b+b'}{2},0)]=1, \sign[\phi(b'',\dfrac{1}{2})-\phi(\dfrac{b+b'}{2},\dfrac{1}{2})]=0,\sign[\phi(b'',1)-\phi(\dfrac{b+b'}{2},1)]=1 $. 
\end{example}

The following example shows that the violation of Assumption \ref{a1} leads to the failure of Proposition \ref{p4}. 

\begin{example}
\label{e5}

Consider the following example: $T=[0,1], \mathcal{B}=\{b,b',b'',b'''\}, \phi(b,t)=t, \phi(b',t)=1-t,$ \[ \phi(b'',t)= \begin{cases} -\dfrac{1}{3}t+\dfrac{2}{3} & \text{if } t \leq \dfrac{1}{2}, \\ \dfrac{1}{3}t+\dfrac{1}{3} & \text{if } t > \dfrac{1}{2}. \end{cases} ,  \phi(b''',t)= \begin{cases} -\dfrac{3}{4}t+\dfrac{7}{8} & \text{if } t \leq \dfrac{1}{2}, \\ \dfrac{3}{4}t+\dfrac{1}{8} & \text{if } t > \dfrac{1}{2}. \end{cases} \] 
\begin{figure}[H]
    \centering
    \includegraphics[width=0.45\linewidth]{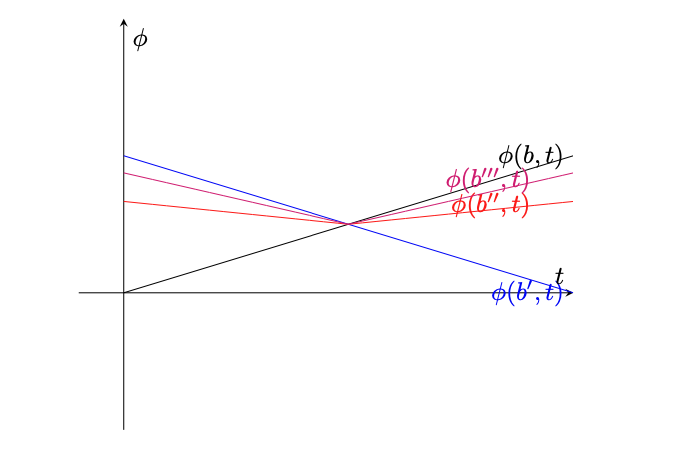}
    \caption{Failure of Proposition \ref{p4}}
    \label{fig:placeholder2}
\end{figure}
Consider the menu $\{b,b',b'''\}$. I first claim that every very weakly dominated bundle is very weakly dominated by a convex combination of $b,b',b'''$. This is because for any very weakly dominate bundle $\underline{a}$, there exist $\lambda_1,\lambda_2,\lambda_3 \in [0,1], \lambda_1+\lambda_2+\lambda_3 \leq 1$ such that $\phi(\underline{a},t) \leq \lambda_1 \phi(b,t)+\lambda_2 \phi(b',t)+\lambda_3 \phi(b'',t)+(1-\lambda_1-\lambda_2-\lambda_3)\phi(b''',t), \forall t$. Since $\phi(b'',t) \leq \phi(b''',t), \forall t$, this directly gives $\phi(\underline{a},t) \leq \lambda_1 \phi(b,t)+\lambda_2 \phi(b',t)+(1-\lambda_1-\lambda_2)\phi(b''',t)$. I next claim that $b,b'$ and $b'''$ are all not very weakly dominated. As $\{b\}=\argmax \limits_{\tilde{b} \in \mathcal{B}} \phi(\tilde{b},1), \{b'\}=\argmax \limits_{\tilde{b} \in \mathcal{B}} \phi(\tilde{b},0)$, $b$ and $b'$ are straightforwardly not very weakly dominated. Since $b''$ is very weakly dominated by $b'''$, if $b'''$ is very weakly dominated, then $b'''$ can only be very weakly dominated by convex combinations of $b$ and $b'$, which is impossible, for $b''$ is not very weakly dominated by any convex combination of $b$ and $b'$. Therefore, $b'''$ is not very weakly dominated. 

However, $\{b,b',b'''\}$ is not a minimal optimal menu, because $\{b,b'\} \subsetneq \{b,b',b'''\}$ is also an optimal menu, indicating that Proposition \ref{p4} does not hold without Assumption \ref{a1}.
    
\end{example}

Example \ref{e6} below argues that the uniqueness of the minimal optimal menu does not hold without Assumption \ref{a1}. It also shows that the existence of strict best response is not guaranteed without Assumption \ref{a1}. 

\begin{example} \label{e6} 
    Consider the following: $T=[0,1],\mathcal{B}=\{b,b',b''\}, \phi(b,t)=t, \phi(b',t)=1-t, \phi(b'',t)=  1-t  \text{ if } t \leq \dfrac{1}{2}; =t \text{ if } t > \dfrac{1}{2}. $
    \begin{figure}[H]
        \centering
        \includegraphics[width=0.45\linewidth]{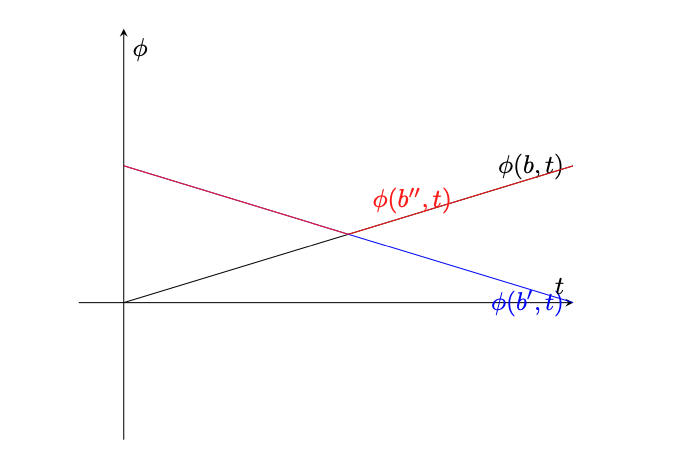}
        \caption{Multiple minimal optimal menus}
        \label{fig:placeholder1}
    \end{figure}
In this example, both $\{b,b'\}$ and $\{b''\}$ are minimal optimal menus, but this example does not satisfy Assumption \ref{a1}: $\phi(b'',t)$ and $\phi(\dfrac{b+b'}{2},t)$ are not single-crossing.\footnote{One can get $\phi(\frac{b+b'}{2},t)$ from $\frac{1}{2}\phi(b,t)+\frac{1}{2}\phi(b',t)$. } 

 As for the nonexistence of strict best response, note that for $t \in [0,\dfrac{1}{2}], \phi(b,t)<\phi(b',t)=\phi(b'',t)$, and for $t \in (\dfrac{1}{2},1], \phi(b',t)<\phi(b,t)=\phi(b'',t)$. Hence, there exist two best responses for any $t$, so that strict best bundles do not exist. 
    \end{example}

\section{Existence of Strict Best Response Bundles}
\label{sec:existence}

Under Assumption \ref{a1}, a strict best response bundle always exists. Formally, 
\begin{corollary}
\label{c1}
    Suppose Assumption \ref{a1} holds. There exists $b \in \mathcal{B}$ that is a strict best response: for some $t \in T$, $\phi(b,t) > \max_{b' \neq b} \phi(b',t)$.
\end{corollary}
Note that Corollary \ref{c1} cannot be interpreted as a corollary of the following result (\cite{bernheim1984}, \cite{pearce1984}, \cite{fudenberg1991}): the set of rationalizable strategies is nonempty and contains at least one pure strategy for each player. The main difference is that I focus on strict best response'' while the previous result is a characterization of ``best response''. In fact, as what has been shown in Example \ref{e6} in Online Appendix \ref{s622}, without Assumption \ref{a1}, the existence can fail.

\begin{proof}
    Suppose that there does not exist $b \in \mathcal{B}$ such that $b$ is a strict best response. By Proposition \ref{p1}, this implies that every $b \in \mathcal{B}$ is very weakly dominated. I proceed to show this is impossible by the following lemma.
\begin{lemma}
\label{l4}
    For any $n \in \mathbb{N}$ and distinctive $x_1,...,x_n \in \mathbb{R}$, there does not exist $\Lambda=(\lambda_{ij})_{n \times n}$ such that the followings hold:
    \begin{enumerate}
        \item $\forall i,j, \lambda_{ij} \in [0,1], \lambda_{ii} \neq 1, \sum \limits_{j=1}^n \lambda_{ij}=1,$
        \item $(x_1,...,x_n)' \leq \Lambda (x_1,...,x_n)'.$\footnote{A component-wise inequality.}
    \end{enumerate}
\end{lemma}
\begin{proof}
    When $n=2$, suppose that the lemma does not hold, i.e., there exist $x_1 \neq x_2$ and $\Lambda$ such that 
    \begin{equation*} \begin{split} x_1 \leq \lambda_{11} x_1 +(1-\lambda_{11}) x_2;  
     x_2 \leq (1-\lambda_{22})x_1+\lambda_{22}x_2. \end{split} \end{equation*}
    Since $\lambda_{11},\lambda_{22} \neq 1$, this directly gives $x_1=x_2$, a contradiction. 

    Assume that the lemma holds when $n=k-1$, consider the case where $n=k$. Suppose the lemma does not hold for $n=k$, then $x_i \leq \sum \limits_{j=1}^k \lambda_{ij}x_j, \forall i \in \{1,...,k\}$. In particular, $x_1 \leq \dfrac{\sum \limits_{j=2}^k \lambda_{1j}x_j}{1-\lambda_{11}}$. Plugging into the rest $k-1$ inequalities, \begin{equation*} x_i \leq \sum \limits_{j=2}^k (\lambda_{ij}+\lambda_{i1} \dfrac{\lambda_{1j}}{1-\lambda_{11}})x_j, \forall i \in \{2,...,k\}. \end{equation*}
    $\forall i \in \{2,...,k\}, \sum \limits_{j=2}^k \lambda_{ij}+\lambda_{i1} \dfrac{\lambda_{1j}}{1-\lambda_{11}}=\sum \limits_{j=2}^k \lambda_{ij}+\lambda_{i1} \dfrac{\sum \limits_{j=2}^k \lambda_{1j}}{1-\lambda_{11}}=\sum \limits_{j=1}^k \lambda_{ij}=1$. In addition, $\forall i,j \in \{2,...,k\}, \lambda_{ij}+\lambda_{i1} \dfrac{\lambda_{1j}}{1-\lambda_{11}} \geq 0$ and $\lambda_{ij}+\lambda_{i1} \dfrac{\lambda_{1j}}{1-\lambda_{11}} \leq \lambda_{ij}+\lambda_{i1} \leq 1$. Specifically, $\lambda_{ii}+\lambda_{i1} \leq 1$. Suppose the equality holds, then it holds only when $\lambda_{11}+\lambda_{1i}=\lambda_{ii}+\lambda_{i1}=1$, which leads to \begin{equation*} \begin{split} x_1 \leq \lambda_{11} x_1 +(1-\lambda_{11}) x_i;
     x_i \leq (1-\lambda_{ii})x_1+\lambda_{ii}x_i. \end{split} \end{equation*}
    By the discussion of the $n=2$ case, this implies that $x_1=x_i$, a contradiction to the distinctiveness. Hence, $\forall i \in \{2,...,k\}, \lambda_{ii}+\lambda_{i1}<1$. In that case, $x_2,...,x_n$ and $\Lambda'$, in which $\lambda'_{ij}=\lambda_{ij}+\lambda_{i1}\dfrac{\lambda_{1j}}{1-\lambda_{11}}, \forall i,j \in \{2,...,k\}$ satisfy the conditions in the lemma, a contradiction to the induction assumption that the lemma holds for $n=k-1$. Thus, the lemma holds for $n=k$. By induction, the lemma holds for any $n$.
\end{proof}
Let $\mathcal{B}:=\{b_1,...,b_{2^n}\}$. If all $b \in \mathcal{B}$ are very weakly dominated, then there exists $\Lambda=(\lambda_{ij})_{2^n \times 2^n}$ such that the followings hold: 
\begin{enumerate}
    \item $\forall i,j, \lambda_{ij} \in [0,1], \lambda_{ii} \neq 1, \sum \limits_{j=1}^{2^n} \lambda_{ij}=1$
    \item $\forall i \in \{1,...,2^n\}, t \in T, \phi(b_i,t) \leq \sum \limits_{j=1}^{2^n} \lambda_{ij} \phi(b_j,t)$
\end{enumerate}
By Lemma \ref{l4}, the existence of $\Lambda$ implies that $\exists i,j \in \{1,...,2^n\}, i \neq j, \phi(b_i,t)=\phi(b_j,t), \forall t \in T$, which is the case that is excluded by the preliminary assumption that no two different bundles have the same virtual value functions. As a result, it cannot be the case that all $b \in \mathcal{B}$ are very weakly dominated. 
\end{proof}

\section{Proofs and Discussions for Section \ref{s23}}

\subsection{Connections between Assumptions \ref{a1} and \ref{a2}}
\label{sec:a1a2}

\begin{proposition}
\label{p133}
    If $\phi$ satisfies 
    \begin{enumerate}
        \item Assumption \ref{a1}, and
        \item there exist $a_0,a_0' \in \Delta(\mathcal{B})$ and $c \in \mathbb{R} \backslash \{0\}$ such that for any $t$, $\phi(a_0,t)-\phi(a_0',t)=c$,
    \end{enumerate} 
    then $\phi$ satisfies Assumption \ref{a2}.
    
\end{proposition}


\begin{proof}
Suppose that there exist $a,a' \in \Delta(\mathcal{B})$ such that $\phi(a,t)-\phi(a',t)$ is not monotonic in $t$, then, without loss of generality, there exist $\underline{t} \leq t_1<t_2<t_3 \leq \bar{t}$ such that 
\begin{equation*} \begin{split} \phi(a,t_1)-\phi(a',t_1)>\phi(a,t_2)-\phi(a',t_2); \phi(a,t_3)-\phi(a',t_3)>\phi(a,t_2)-\phi(a',t_2). \end{split} \end{equation*}
Consider $a^1,a^2 \in \Delta(\mathcal{B})$ such that \[a^1=\lambda_1 a+\lambda_2 a'+ \lambda_3 a_0+ \lambda_4 a_0'; a^2= (\lambda_1+\varepsilon_1)a+(\lambda_2-\varepsilon_1)a'+(\lambda_3+\varepsilon_2)a_0+(\lambda_4-\varepsilon_2)a_0',\] in which $0 \leq \lambda_1,\lambda_1+\varepsilon_1, \lambda_2, \lambda_2-\varepsilon_1, \lambda_3, \lambda_3+\varepsilon_2, \lambda_4, \lambda_4 -\varepsilon_2 \leq 1$, $\varepsilon_1 \neq 0$,  and $\lambda_1+\lambda_2+\lambda_3+\lambda_4=1$. It is straightforward that for any $t$,
\begin{equation*} \begin{split} \phi(a^2,t)-\phi(a^1,t) &=\varepsilon_1[\phi(a,t)-\phi(a',t)]+\varepsilon_2[\phi(a_0,t)-\phi(a_0',t)] \\ &= \varepsilon_1[\phi(a,t)-\phi(a',t)+\dfrac{\varepsilon_2}{\varepsilon_1}(\phi(a_0,t)-\phi(a_0',t))]. \end{split} \end{equation*}

As $\dfrac{\varepsilon_2}{\varepsilon_1}$ can be anything in $(-\infty,+\infty)$ and $\phi(a_0,t)-\phi(a_0',t) \equiv c \neq 0$, there exist $\varepsilon_1,\varepsilon_2$ such that 
\begin{equation*} \max \{ \phi(a',t_1)-\phi(a,t_1), \phi(a',t_3)-\phi(a,t_3)\} < \dfrac{\varepsilon_2}{\varepsilon_1}c<\phi(a',t_2)-\phi(a,t_2). \end{equation*}
In that case, $\phi(a^2,t_1)-\phi(a^1,t_1)>0, \phi(a^2,t_2)-\phi(a^1,t_2)<0, \phi(a^2,t_3)-\phi(a^1,t_3)>0$, a contradiction to Assumption \ref{a1}: $\phi(a^2, \cdot)-\phi(a^1, \cdot)$ is not single-crossing. The proposition thus follows. 
\end{proof}

\subsection{Solving the ODE}
\label{d2}

Recall (\ref{eq12}), for any $b \in \mathcal{B}$,
$$ v(b,t)-\dfrac{1-F(t)}{f(t)}v_t(b,t)=g_1(b)h_1(t)+g_2(b)+h_2(t).$$
Given $b$ and $F$, this can be regarded as a linear ODE of order 1, for which the solutions are
\begin{equation*} v(b,t)=c(b)e^{\int \frac{f(t)}{1-F(t)}dt}-e^{\int \frac{f(t)}{1-F(t)}dt}\displaystyle{\int}\dfrac{f(t)}{1-F(t)}[g_1(b)h_1(t)+g_2(b)+h_2(t)]e^{-\int \frac{f(t)}{1-F(t)}dt} dt, \end{equation*}
for $t \in [\underline{t},\bar{t})$ and $v(b,\bar{t})=g_1(b)h_1(\bar{t})+g_2(b)+h_2(\bar{t})$, in which $c: \mathcal{B} \rightarrow \mathbb{R}$. 
The solutions for $t \in [\underline{t},\bar{t})$ can be further simplified into
\begin{equation*} v(b,t)=\dfrac{c(b)-\displaystyle{\int}f(t)[g_1(b)h_1(t)+g_2(b)+h_2(t)]dt}{1-F(t)}. \end{equation*}

Moreover, since $\phi(a,t)=\mathbb{E}_{b \sim a }[\phi(b,t)]$, it is not hard to see that $g_i(a)=\mathbb{E}_{b \sim a}[g_i(b)], \forall i \in \{1,2\}$.\footnote{$\phi(a,t)=g_1(a)h_1(t)+g_2(a)+h_2(t)=\mathbb{E}_{b \sim a}[g_1(b)]h_1(t)+\mathbb{E}_{b \sim a}[g_2(b)]+h_2(t)$. If there exists $t,t'$ such that $h_1(t) \neq h_1(t')$, then $g_i(a)=\mathbb{E}_{b \sim a}[g_i(b)], \forall i \in \{1,2\}$. } Hence, for $a \in \Delta(\mathcal{B}), v(a,t)=\sum \limits_{b \in \mathcal{B}} a_b v(b,t) \\ = \dfrac{\sum \limits_{b \in \mathcal{B}}a_b c(b) -\displaystyle{\int} f(t)[\sum \limits_{b \in \mathcal{B}} a_b g_1(b)h_1(t)+\sum \limits_{b \in \mathcal{B}} a_bg_2(b)+h_2(t)]dt}{1-F(t)} \\ = \dfrac{\sum \limits_{b \in \mathcal{B}}a_b c(b)- \displaystyle{\int}f(t)[g_1(a)h_1(t)+g_2(a)+h_2(t)]dt}{1-F(t)}$, which directly gives $$ \phi(a,t)=v(a,t)-\dfrac{1-F(t)}{f(t)}v_t(a,t)=g_1(a)h_1(t)+g_2(a)+h_2(t).$$
As a result, it suffices to focus on the expression for $v(b,t)$, where $h_1$ is monotonic, and $g_1$ and $g_2$ satisfy the expected utility property. Moreover, by continuity, $\lim \limits_{t \rightarrow \bar{t}} v(b,t)=v(b,\bar{t})$, indicating that 
\begin{equation*} \begin{split} c(b) & = \lim \limits_{t \rightarrow \bar{t}} \displaystyle{\int} f(t)[g_1(b)h_1(t)+g_2(b)+h_2(t)]dt \\ & = \displaystyle{\int} f(t)[g_1(b)h_1(t)+g_2(b)+h_2(t)]dt \bigg|_{t=\bar{t}},  \end{split} \end{equation*}
in which the last equality is by continuity of $f(t)[g_1(b)h_1(t)+g_2(b)+h_2(t)]$. Hence, $$ v(b,t)=\dfrac{\displaystyle{\int} f(t)[g_1(b)h_1(t)+g_2(b)+h_2(t)]dt \bigg|_{t=\bar{t}}-\displaystyle{\int} f(t)[g_1(b)h_1(t)+g_2(b)+h_2(t)]dt}{1-F(t)}.$$
This indeed gives $\lim \limits_{t \rightarrow \bar{t}} v(b,t)=v(b,\bar{t}):$ by L'H\^{o}pital's rule, $\lim \limits_{t \rightarrow \bar{t}} v(b,t)= \dfrac{\lim \limits_{t \rightarrow \bar{t}} f(t)[g_1(b)h_1(t)+g_2(b)+h(t)]}{\lim \limits_{t \rightarrow \bar{t}}f(t)}=g_1(b)h_1(\bar{t})+g_2(b)+h(\bar{t})=v(b,\bar{t})$.

\subsection{Quantile Transformation}
\label{sec:quantile}
It is standard that
\begin{equation*} \displaystyle{\int}_{\underline{t}}^{\bar{t}} \phi(b(t),t)dF(t)=\displaystyle{\int}_0^1 \tilde{\phi}(\tilde{b}(q),q)dq, \end{equation*}

where $\tilde{\phi}: \mathcal{B} \times [0,1]\rightarrow \mathbb{R}$ satisfies $\tilde{\phi}(\tilde{b}(q),q)=\phi(b \circ F^{-1}(q),F^{-1}(q))$ for any $q \in [0,1]$, in which $\tilde{b}(\cdot):= b \circ F^{-1}(\cdot)$. Hence, if $\phi$ satisfies Assumption \ref{a2}, then \begin{equation*} \begin{split} \tilde{\phi}(\tilde{b}(q),q)&=g_1(b \circ F^{-1}(q))h_1(F^{-1}(q))+g_2(b \circ F^{-1}(q))+h_2(F^{-1}(q)) \\ & = g_1(\tilde{b}(q)) h_1\circ F^{-1}(q)+g_2(\tilde{b}(q))+h_2 \circ F^{-1}(q). \end{split} \end{equation*} As $h_1$ is monotonic in $t$, $h_1 \circ F^{-1}$ is monotonic in $q$, indicating that for any $\tilde{a}, \tilde{a}' \in \Delta(\mathcal{B}), \tilde{\phi}(\tilde{a},q)-\tilde{\phi}(\tilde{a}',q)$ is monotonic in $q$, i.e., $\tilde{\phi}$ satisfies Assumption \ref{a2} on $\mathcal{B} \times [0,1]$.\footnote{Since $\tilde{b}:[0,1] \rightarrow \mathcal{B}$.}

The focus can thus be on the case where $T=[0,1]$ and $F \sim \mathcal{U}[0,1]$, the uniform distribution on $[0,1]$, which in turn gives:
 \begin{equation} \label{eq14} \begin{split} v(b,t) & =\dfrac{\displaystyle{\int} g_1(b)h_1(t)+g_2(b)+h_2(t)dt \bigg|_{t=1}-\displaystyle{\int}g_1(b)h_1(t)+g_2(b)+h_2(t)dt}{1-t}, t \in [0,1), \\ & v(b,1)=g_1(b)h_1(1)+g_2(b)+h_2(1), \end{split} \end{equation}
 for any $b \in \mathcal{B}$. Note that \eqref{eq14} can be easily rewritten as the ``$g_1(b)x(t)+g_2(b)+y(t)$'' form demonstrated in Section \ref{sec:discussionofassump}. 

 \subsection{Proof of Proposition \ref{prop:mdphiv}}

\begin{proof}
   Note that for any $b,b' \in \mathcal{B}$
   \begin{equation*} v(b,t)-v(b',t)=\dfrac{\displaystyle{\int}f(t)[\phi(b,t)-\phi(b',t)]dt\bigg|_{t=\bar{t}}-\displaystyle{\int}f(t)[\phi(b,t)-\phi(b',t)]dt}{1-F(t)}, \end{equation*} which implies that \small \begin{equation*} v_t(b,t)-v_t(b',t) \\ =\dfrac{f(t)[\displaystyle{\int}f(t)[\phi(b,t)-\phi(b',t)]dt\bigg|_{t=\bar{t}}-\displaystyle{\int}f(t)[\phi(b,t)-\phi(b',t)]dt-(1-F(t))(\phi(b,t)-\phi(b',t))]}{(1-F(t))^2}. \end{equation*} \normalsize
    Let \begin{equation*} \zeta(t)=\displaystyle{\int}f(t)[\phi(b,t)-\phi(b',t)]dt\bigg|_{t=\bar{t}}-\displaystyle{\int}f(t)[\phi(b,t)-\phi(b',t)]dt-(1-F(t))(\phi(b,t)-\phi(b',t)). \end{equation*} It is not hard to observe that $\zeta(\bar{t})=0$. Moreover, \begin{equation*} \begin{split} \zeta'(t) &=-f(t)[\phi(b,t)-\phi(b',t)]+f(t)[\phi(b,t)-\phi(b',t)]-(1-F(t))[\phi_t(b,t)-\phi_t(b',t)] \\ &=-(1-F(t))[\phi_t(b,t)-\phi_t(b',t)]. \end{split} \end{equation*} For any $t \in T $, $\phi_t(b,t)-\phi_t(b',t) \geq 0$ implies that for any $t \in T$, $ \zeta'(t) \leq 0$, which further implies that for any $t \in T, \zeta(t) \geq \zeta(\bar{t})=0$. As a result, for any $t \in T$, $v_t(b,t)-v_t(b',t)=\dfrac{f(t)\zeta(t)}{(1-F(t))^2} \geq 0$, i.e., $v(b,t)-v(b',t)$ is increasing in $t$. Similarly, if for any $t \in T, \phi_t(b,t)-\phi_t(b',t) \leq 0$, then $v(b,t)-v(b',t)$ is decreasing in $t$. As a result, the monotonicity of $\phi(b,t)-\phi(b',t)$ implies the monotonicity of $v(b,t)-v(b',t)$. 
\end{proof}

 \subsection{Proof of Proposition \ref{prop:equivalence}}

 \begin{proof}
     If $v$ satisfies \eqref{eq:multiadd}, then for any $a \in \Delta(\mathcal{B}), t \in T$, 
\begin{align*}
\phi(a,t)
&= \mathbb{E}_{b \sim a}\!\left[\phi(b,t)\right] \\
&= \mathbb{E}_{b \sim a}\!\left[
    g_1(b)\!\left(x(t)-\frac{1-F(t)}{f(t)}x'(t)\right)
    + g_2(b)
    + y(t)-\frac{1-F(t)}{f(t)}y'(t)
\right] \\
&= g_1(a)\!\left(x(t)-\frac{1-F(t)}{f(t)}x'(t)\right)
   + g_2(a)
   + y(t)-\frac{1-F(t)}{f(t)}y'(t).
\end{align*}
By \cite{kartik2023}, if $x(t)-\frac{1-F(t)}{f(t)}x'(t)$ is monotonic in $t$, then Assumption \ref{a2} holds. 
 \end{proof}

\section{Distributional Robustness}
\label{s522}

\cite{haghpanah2021when} studies distributional robustness in optimal bundling problems. In particular, they characterize when pure bundling is optimal for all type distributions. The first result of this section also answers the question ``when pure bundling is optimal''. 

\begin{proposition}
\label{p6}
     Assume that $v(b,t) \leq v(b^*,t), \forall b \in \mathcal{B}, t \in T$. If $\dfrac{v(b,t)}{v(b^*,t)}$ is nondecreasing in $t$ for all $b$, then pure bundling is optimal for all distributions under which $\phi$ satisfy Assumption \ref{a0}.
\end{proposition}

\begin{proof}
    For all $b \in \mathcal{B}$, let $c(b)=\displaystyle{\int}f(t)\phi(b,t)dt \bigg|_{t=\bar{t}}$. By the assumption that $\forall b \in \mathcal{B}, t \in T, v(b,t) \leq v(b^*,t)$,  \begin{equation} \label{eq3371} 0 \leq c(b)-\displaystyle{\int}f(t)\phi(b,t)dt \leq c(b^*)-\displaystyle{\int}f(t)\phi(b^*,t)dt, \forall b \in \mathcal{B}. \end{equation}
    In addition, $\dfrac{v(b,t)}{v(b^*,t)}$ is nondecreasing in $t$ for all $b \in \mathcal{B}$ implies that 
    \begin{equation*}  \label{eq3372} \begin{split} (c(b)-\displaystyle{\int} f(t)\phi(b,t)dt)'(c(b^*)-\displaystyle{\int} f(t)\phi(b^*,t)dt) \geq (c(b)-\displaystyle{\int} f(t)\phi(b,t)dt)(c(b^*)-\displaystyle{\int} f(t)\phi(b^*,t)dt)', \forall b \in \mathcal{B}, \end{split} \end{equation*}
    which can be simplified into 
    \begin{equation}
    \label{eq3373}
    \begin{split}
     \phi(b^*,t)(c(b)-\displaystyle{\int} f(t)\phi(b,t)dt) \geq  \phi(b,t)(c(b^*)-\displaystyle{\int} f(t)\phi(b^*,t)dt), \forall b \in \mathcal{B}.
     \end{split}
     \end{equation}
(\ref{eq3371}) and (\ref{eq3373}) together imply that when $\phi(b^*,t) \geq 0$
    \begin{equation*} \phi(b^*,t) \geq \phi(b,t), \forall b \in \mathcal{B}. \end{equation*}
     In addition, when $\phi(b^*,t)<0$, by (\ref{eq3373}), for $b \neq \emptyset$, $\phi(b,t)<0$. by $\{\emptyset\}=\argmax \limits_{\tilde{b}} \phi(\tilde{b}, \underline{t})$ and $v(\emptyset, \bar{t}) \leq v(b^*, \bar{t})$. Hence, when $\phi(b^*,t)<0, \phi(b,t)<\phi(\emptyset,t), \forall b \neq \emptyset$. As a result, a minimal optimal menu of \eqref{eq6} is $\{\emptyset, b^*\}$. Since $\phi$ satisfies Assumption \ref{a0}, $\{\emptyset, b^*\}$ is also a minimal optimal menu of \eqref{eq5}, i.e., pure bundling is optimal for all distributions under which $\phi$ satisfies Assumption \ref{a0}. 
\end{proof}

Proposition \ref{p6} is a weaker result than the ``if'' direction of Proposition 1 of \cite{haghpanah2021when}, as it uses the same conditions on the valuation functions to show the optimality of pure bundling for some but not all distributions. The role of Assumption \ref{a0} is to give Theorem \ref{t1}, which limits the discussions to the virtual values themselves and getting rid of discussions related to implementability. 

The partial converse of Proposition \ref{p6}, which corresponds to the ``only if'' direction of Proposition 1 of \cite{haghpanah2021when} will be shown in the following. To better study the partial converse, I first come up with two assumptions below: 

\begin{assumption}
\label{a4}
$\phi$ is increasing in $t$.
\end{assumption}

\begin{assumption}
\label{a5}
 $\forall b_1,b_2 \in \mathcal{B}$, if $\forall t, \phi(b_1,t) \geq \phi(b_2,t)$, then $\forall t, \phi_t(b_1,t) \leq \phi_t(b_2,t)$, i.e., $\phi(b_1,t)-\phi(b_2,t)$ is nonincreasing.    
\end{assumption}

Assumption \ref{a4} is arguably standard, although is redundant in all previous discussions, as Assumption \ref{a0} only requires $\phi$ to be monotonic but not necessarily increasing. Assumption \ref{a5} can be interpreted as given $b_2$ is very weakly dominated by $b_1$, the tendency of the difference between $\phi(b_1,t)$ and $\phi(b_2,t)$ is not very important, and in particular can be regarded as nonincreasing. 

With these two assumptions, the converse is stated as the following, in which a stronger optimality notion is needed. 

\begin{proposition}
\label{p7}
    If the minimal optimal menu of \eqref{eq5} is $\{b^*\}$ for a distribution under which $\phi$ satisfy Assumption \ref{a0}, \ref{a4} and \ref{a5}, then $\dfrac{v(b,t)}{v(b^*,t)}$ is nondecreasing in $t$ for all $b$.\footnote{Note that the converse does not need the assumption in Proposition \ref{p6} that $b^*$ has the largest valuation among all bundles for all types. }
\end{proposition}

\begin{proof}
    Let $F$ and $f$ be the cdf and the pdf of the distribution under which pure bundling is optimal and $\phi$ satisfy Assumption \ref{a0}, \ref{a4} and \ref{a5}. For any $b \in \mathcal{B}$, let $c(b)=\displaystyle{\int}f(t)\phi(b,t)dt \bigg|_{t=\bar{t}}$, so that $\dfrac{v(b,t)}{v(b^*,t)}=\dfrac{c(b)-\int f(t)\phi(b,t)dt}{c(b^*)-\int f(t)\phi(b^*,t)dt}$. To show that $\dfrac{v(b,t)}{v(b^*,t)}$ is nondecreasing in $t$ for all $b$, it suffices to show that (\ref{eq3373}) holds. 
    Consider \begin{equation*}
    \begin{split}
    \xi(t) &=\phi(b^*,t)[c(b)-\displaystyle{\int} f(t)\phi(b,t)dt]-\phi(b,t)[c(b^*)-\displaystyle{\int} f(t)\phi(b^*,t)dt], \\ &= \phi(b^*,t)[\displaystyle{\int} f(t)\phi(b,t)dt \bigg|_{t=\bar{t}}-\displaystyle{\int} f(t)\phi(b,t)dt]-\phi(b,t)[\displaystyle{\int} f(t)\phi(b^*,t)dt \bigg|_{t=\bar{t}}-\displaystyle{\int} f(t)\phi(b^*,t)dt],
    \end{split}
    \end{equation*}
\begin{equation*} \begin{split} \xi'(t) &=\phi_t(b^*,t)[\displaystyle{\int} f(t)\phi(b,t)dt \bigg|_{t=\bar{t}}-\displaystyle{\int} f(t)\phi(b,t)dt]-\phi(b^*,t)f(t)\phi(b,t) \\ &-\phi_t(b,t)[\displaystyle{\int} f(t)\phi(b^*,t)dt \bigg|_{t=\bar{t}}-\displaystyle{\int} f(t)\phi(b^*,t)dt]+\phi(b,t)f(t)\phi(b^*,t), \\ &= \phi_t(b^*,t)[\displaystyle{\int} f(t)\phi(b,t)dt \bigg|_{t=\bar{t}}-\displaystyle{\int} f(t)\phi(b,t)dt]- \phi_t(b,t)[\displaystyle{\int} f(t)\phi(b^*,t)dt \bigg|_{t=\bar{t}}-\displaystyle{\int} f(t)\phi(b^*,t)dt]. \end{split} \end{equation*} 
    
    $\{b^*\}$ is the minimal optimal menu of \eqref{eq5} under $F$ indicates that $\forall b \in \mathcal{B}, t \in T, \phi(b,t) \leq \phi(b^*,t)$, which further implies that $\forall b \in \mathcal{B}$, $\displaystyle{\int} f(t)[\phi(b^*,t)-\phi(b,t)]$ is increasing in $t$. Hence, 
    \begin{equation*} \begin{split} \displaystyle{\int} f(t)[\phi(b^*,t)-\phi(b,t)] &\leq \displaystyle{\int} f(t)[\phi(b^*,t)-\phi(b,t)] \bigg|_{t=\bar{t}} \\ \Rightarrow \displaystyle{\int} f(t)\phi(b,t)dt \bigg|_{t=\bar{t}}-\displaystyle{\int} f(t)\phi(b,t)dt &\leq \displaystyle{\int} f(t)\phi(b^*,t)dt \bigg|_{t=\bar{t}}-\displaystyle{\int} f(t)\phi(b^*,t)dt. \end{split} \end{equation*}

    By Assumption \ref{a4}, $\phi_t(b^*,t) \geq 0$, so that \begin{equation*} \xi'(t) \leq [\phi_t(b^*,t)-\phi_t(b,t)](f(t)\phi(b^*,t)dt \bigg|_{t=\bar{t}}-\displaystyle{\int} f(t)\phi(b^*,t)dt). \end{equation*} By Assumption \ref{a5}, $\forall t, \phi(b^*,t)-\phi(b,t) \geq 0 \Rightarrow \phi_t(b^*,t)-\phi_t(b,t) \leq 0$. Meanwhile, $\phi \geq 0$ implies that $\displaystyle{\int}f(t)\phi(b^*,t)$ is increasing in $t$, so that $f(t)\phi(b^*,t)dt \bigg|_{t=\bar{t}} \geq \displaystyle{\int} f(t)\phi(b^*,t)dt, $
which leads to $\xi'(t) \leq 0,\forall t \in T$. As $\xi(\bar{t})=0, \xi(t) \geq 0, \forall t \in T$, which directly gives (\ref{eq3373}). 

\end{proof}

Unlike the ``only if'' direction of Proposition 1 of \cite{haghpanah2021when}, this result only requires pure bundling to be optimal under a particular distribution instead of all distributions to obtain the nondecreasing valuation ratio property. However, the optimality of pure bundling is indeed stronger, as it requires $\{b^*\}$ instead of $\{\emptyset, b^*\}$ to be the minimal optimal menu of \eqref{eq5}. 

\cite{yang2023nested} has a characterization for two-tier menus, which is an immediate corollary of their characterization of robust nesting. With Assumption \ref{a4} and \ref{a5}, I am able to obtain a similar result that shares the fashion of both their two-tier menu characterization and their robust nesting characterization:

\begin{theorem}
\label{t3}
    Assume that $v(b,t) \leq v(b^*,t), \forall b \in \mathcal{B}, t \in T$. If there exists $b_1,...,b_l \in \mathcal{B}$ such that $v(b_1,\bar{t}) < v(b_2,\bar{t})<...<v(b_l,\bar{t})<v(b^*,\bar{t}) $, in which $b_1=\emptyset$, and
    \begin{itemize}
      \item  $\forall b \notin \{b_1,...,b_l\}, \dfrac{v(b,t)}{v(b^*,t)}$ is nondecreasing in $t$,
      \item $\forall i \in \{1,...,l\}, \dfrac{v(b_i,t)}{v(b_{i+1},t)}$is strictly decreasing in $t$,
      \item $\forall i \in \{2,...,l\}, \dfrac{v(b_{i-1},t)-v(b_{i+1},t)}{v(b_i,t)}$ is strictly increasing in $t$,
    \end{itemize}
    then $\{b_1,...,b_l,b^*\}$ is the minimal optimal menu for all distributions under which $\phi$ satisfy Assumption \ref{a0}, \ref{a1}, \ref{a4} and \ref{a5}.\footnote{Define $b_{l+1}$ as $b^*$. In addition, the reason why Assumption \ref{a1} is mentioned in Theorem \ref{t3} but not in Proposition \ref{p6} and \ref{p7} is because the proofs of \ref{p6} and \ref{p7} make ``global'' comparisons of $\phi$, whereas the proof of Theorem \ref{t3} only compares $\phi$ at $\underline{t}$ and $\bar{t}$. }
\end{theorem}

\begin{proof}

By proof of Proposition \ref{p6}, if $\forall b \notin \{b_1,..,b_l\}, \dfrac{v(b,t)}{v(b^*,t)}$ is nondecreasing in $t$, then $\forall b \in \{b_1,...,b_l\}$, $b$ is very weakly dominated by a convex combination of $\emptyset$ and $b^*$. Hence, by Proposition \ref{p4}, it suffices to show every bundle of $\{b_1,...,b_l,b^*\}$ is not very weakly dominated. I first show that $\phi(b_1,\underline{t})>\phi(b_2,\underline{t})>...>\phi(b_l,\underline{t})>\phi(b^*,\underline{t})$. Suppose that there exists $l' \in \{1,...,l\}$ such that $\phi(b_{l'},\underline{t}) \leq \phi(b_{l'+1},\underline{t})$, then since $\phi(b_{l'},\bar{t})=v(b_{l'},\bar{t})<v(b_{l'+1},\bar{t})=\phi(b_{l'+1},\bar{t})$, which, by Assumption \ref{a0}, further implies that $\phi(b_{l'},t) \leq \phi(b_{l'+1},t), \forall t$. By similar arguments as the ones in proof of Proposition \ref{p7}, this implies that $\dfrac{v(b_{l'},t)}{v(b_{l'+1},t)}$ is nondecreasing in $t$, a contradiction to the strictly decreasing condition. 

With $\phi(b_1,\underline{t})>\phi(b_2,\underline{t})>...>\phi(b_l,\underline{t})>\phi(b^*,\underline{t}); \phi(b_1,\bar{t})<\phi(b_2,\bar{t})<...<\phi(b_l,\bar{t})<\phi(b^*,\bar{t})$, by Algorithm \ref{al1} (or more precisely, the proof of Theorem \ref{t2}), it suffices to show that $b_i$ is not very weakly dominated by any convex combinations of $b_{i-1}$ and $b_{i+1}$, for all $i \in \{1,...,l\}$. Suppose there exists $i$ and $\lambda \in [0,1]$ such that $b_i$ is very weakly dominated by $\lambda b_{i-1}+(1-\lambda) b_{i+1}$, again by similar arguments as the ones in proof of Proposition \ref{p7}, this implies that $\dfrac{v(b_i,t)}{\lambda v(b_{i-1},t)+(1-\lambda) v(b_{i+1},t)}$ is nondecreasing in $t$. Consider \begin{equation*} \dfrac{\lambda v(b_{i-1},t)+(1-\lambda) v(b_{i+1},t)}{v(b_i,t)}=\lambda \dfrac{v(b_{i-1},t)-v(b_{i+1},t)}{v(b_i,t)}+ \dfrac{v(b_{i+1},t)}{v(b_i,t)}. \end{equation*} It is strictly increasing in $t$ because both $\dfrac{v(b_{i-1},t)-v(b_{i+1},t)}{v(b_i,t)}$ and $\dfrac{v(b_{i+1},t)}{v(b_i,t)}$ are strictly increasing in $t$, which implies that $\dfrac{v(b_i,t)}{\lambda v(b_{i-1},t)+(1-\lambda) v(b_{i+1},t)}$ is strictly decreasing in $t$, a contradiction to nondecreasing. As a result, every bundle of $\{b_1,...,b_l,b^*\}$ is not very weakly dominated. Hence, $\{b_1,...,b_l,b^*\}$ is the minimal optimal menu of \eqref{eq5}. 
    
\end{proof}

Similar to Proposition \ref{p6}, the first condition is used to demonstrate that every $b \notin \{b_1,...,b_l\}$ is very weakly dominated. The last two conditions, on the other hand, are helpful for showing every bundle in the menu is not very weakly dominated. In particular, when $l=1$, i.e., the minimal optimal menu is the two-tier $\{b_1,b^*\}$, the conditions reduce to $\dfrac{v(b_1,t)}{v(b^*,t)}$ is strictly decreasing in $t$ and $\forall b \neq b_1, \dfrac{v(b,t)}{v(b^*,t)}$ is nondecreasing in $t$, corresponding with Corollary 6 of \cite{yang2023nested}.

\section{Numerical Examples of Tree Menus}
\label{f4}

It is worth discussing and coming up with numerical examples of what $\phi$, and consequently, $v$ satisfy the conditions specified in Theorems \ref{t33} and \ref{t34} and Assumptions \ref{a0} and \ref{a1} at the same time. For better characterizations, as demonstrated in Section \ref{s23}, I directly let $\phi$ satisfy Assumption \ref{a2}, i.e., equation (\ref{eq9}) holds for $\phi$. For Theorem \ref{t33}, first focus on the second condition. It is equivalent to for any $\tilde{b}$, there exists $\lambda \in [0,1]$ and $\mathfrak{f}: [0,1] \rightarrow [0,1]$ such that \begin{equation*} \begin{split} \text{if } g_1(\tilde{b})h_1(\bar{t})+g_2(\tilde{b}) &=[\lambda g_1(b^*)+(1-\lambda) g_1(b)]h_1(\bar{t})+[\lambda g_2(b^*)+(1-\lambda) g_2(b)], \\ \text{then } g_1(\tilde{b})h_1(\underline{t})+g_2(\tilde{b}) &=[\mathfrak{f}(\lambda) g_1(b^*)+(1-\mathfrak{f}(\lambda)) g_1(b)]h_1(\underline{t})+[\mathfrak{f}(\lambda) g_2(b^*)+(1-\mathfrak{f}(\lambda)) g_2(b)]. \end{split} \end{equation*} Meanwhile, $b$ being the root bundle implies that $g_1(b^*)h_1(\underline{t})+g_2(b^*)<g_1(b)h_1(\underline{t})+g_2(b)$. Once $b, g_1(b), g_2(b), g_1(b^*), g_2(b^*), h_1(\bar{t}), h_1(\underline{t}), \lambda, \mathfrak{f}$ is chosen to satisfy the conditions in the theorem and the $b$ being the root bundle inequality, the design of $g_1$ and $g_2$ is equivalent to solving \begin{equation*} \begin{pmatrix}  h_1(\bar{t}) & 1 \\ h_1(\underline{t}) & 1\end{pmatrix} \begin{pmatrix} g_1(\tilde{b}) \\ g_2(\tilde{b}) \end{pmatrix}= \begin{pmatrix} [\lambda g_1(b^*)+(1-\lambda) g_1(b)]h_1(\bar{t})+[\lambda g_2(b^*)+(1-\lambda) g_2(b)]\\ [\mathfrak{f}(\lambda) g_1(b^*)+(1-\mathfrak{f}(\lambda)) g_1(b)]h_1(\underline{t})+[\mathfrak{f}(\lambda) g_2(b^*)+(1-\mathfrak{f}(\lambda)) g_2(b)]\end{pmatrix}, \end{equation*} in which $g_1(\tilde{b})$ and $g_2(\tilde{b})$ are the unknowns, and clearly when $h_1(\bar{t}) \neq h_1(\underline{t})$, the existence of the solution is guaranteed for any $\lambda$. Example \ref{ex:tree} is exactly an example that makes $\{\{1\}, \{1,2\}, \{1,3\}, \{1,2,3\}\}$ minimally optimal when the consumer's type is uniformly distributed on $[0,1]$. 

For Theorem \ref{t34}, when selling goods $\{1,2,3,4\}$, an example that makes $\{\{1\}, \{1,4\}, \{1,2,3\}, \{1,2,3,4\}\}$ a subset of the minimal optimal menu when the consumer's type is uniformly distributed on $[0,1]$, and the monopolist has four goods to sell is: $h_1(t)=t, h_2(t)=0, g_1, g_2$ satisfy the following:

\begin{center}
\begin{tabular}{lccccccccc}
$b$ & $\{1\}$ & $\{1,4\}$ & $\{1,3\}$ & $\{1,2\}$ &
$\{1,2,4\}$ & $\{1,3,4\}$ & $\{1,2,3\}$ &
$\{1,2,3,4\}$ & others \\
\hline
$g_1(b)$ &
$1$ &
$\frac{13}{6}$ &
$\frac{8}{3}$ &
$\frac{29}{12}$ &
$3$ &
$\frac{23}{8}$ &
$\frac{43}{12}$ &
$6$ &
$1$ \\
$g_2(b)$ &
$4$ &
$\frac{7}{2}$ &
$3$ &
$\frac{13}{4}$ &
$3$ &
$\frac{25}{8}$ &
$\frac{11}{4}$ &
$1$ &
$<4$
\end{tabular}
\end{center}

A corresponding graph can be found here: \href{https://www.geogebra.org/m/hungg7gm}{https://www.geogebra.org/m/hungg7gm}. From the graph, $\{\{1\},\{1,4\},\{1,2,3\},\{1,2,3,4\}\}$ is not only a subset but is exactly the minimal optimal menu.\footnote{This is not generally implied by Theorem \ref{t34}: Theorem \ref{t34} only says that the minimal optimal menu includes $\{b^{root}, b^{root} \cup \{i\}, b^* \backslash \{i\}, b^*\}$.}

\section{Supplementary Discussions}

\subsection{Monotone Comparative Statics}

This subsection mainly focuses on the relationship between Theorem \ref{t2} and certain monotone comparative statics (MCS) theorems, particularly the ones in \cite{kartik2023} and \cite{milgrom1994}. There exists a superficial discrepancy that Theorem \ref{t2} requires monotonic differences, whereas all these MCS theorems only require single-crossing differences. I am going to show that Theorem \ref{t2} is not obtained because the MCS theorems naturally hold under a stronger condition. In other words, if I only require $\phi$ to have single-crossing differences, then those MCS theorems cannot be applied, and MCS is not guaranteed. 

I first formalize my problem in the language of a MCS problem. Specifically, I will define a partial order on the set of bundles $\mathcal{B}$. Let $\succeq$ be a partial order on $\mathcal{B}$ such that for any $b,b' \in \mathcal{B}$, \begin{equation*} b \succeq b' \Leftrightarrow v(b,t)-v(b',t) \geq v(b,t')-v(b',t'), \forall t \geq t'. \end{equation*}
In words, $b \succeq b'$ if and only if $v(b,t)-v(b',t)$ is increasing in $t$. It is straightforward to see that this order is induced by the IC constraints. To put it in another way, if IC constraints are satisfied, then the bundles in the minimal optimal menu shall be monotonically ordered, or shall satisfy MCS induced by the order above. 

I proceed to show that this is correct under the definition of MCS in \cite{kartik2023}. Formally, let the minimal optimal menu be $\{b_1,...,b_m\}$, under Assumptions \ref{a0} and \ref{a1}, for any $B \subseteq \{b_1,...,b_m\}, t \leq t'$, \begin{equation*} \argmax \limits_{b \in B} \phi(b, t') \succeq_{SSO} \argmax \limits_{b \in B} \phi(b,t), \end{equation*}
where $\succeq_{SSO}$ is the strong set order. The proof is not complicated: as $(\{b_1,...,b_m\}, \succeq)$ is a totally ordered set (by Lemma \ref{l1}), the only cases I need to contradict are $\exists B \subseteq \{b_1,...,b_m\}, b^h,b^l \in B, b^h \succ b^l, b^h \in \argmax \limits_{b \in B} \phi(b,t), b^l \in \argmax \limits_{b \in B} \phi(b,t')$ and i) $b^l \notin \argmax \limits_{b \in B} \phi(b,t)$ or ii) $b^h \notin \argmax \limits_{b \in B} \phi(b,t')$. Suppose i) is correct, then $\phi(b^h,t)>\phi(b^l,t); \phi(b^h,t') \leq \phi(b^l,t')$. By the proof of Lemma \ref{l1}, this indicates that $v(b^l,t)-v(b^h,t)$ is increasing in $t$, i.e., $b^l \succeq b^h$, a contradiction to $b^h \succ b^l$. Similar arguments can be used to show that ii) is also incorrect, which gives $\argmax \limits_{b \in B} \phi(b, t') \succeq_{SSO} \argmax \limits_{b \in B} \phi(b,t)$.

Theorem 4 of \cite{kartik2023} says that if the above results hold, then $\phi$ has single-crossing differences on $\{b_1,...,b_m\} \times T$, and $\succeq$ is a refinement of $\succeq_{SCD}$, in which $\succeq_{SCD}$ is defined as $b \succeq_{SCD} b'$ if and only if $\sign[\phi(b,t)-\phi(b',t)]$ is increasing on $T$. At a surface level, $\phi$ only needs to have single-crossing differences. However, since the specific $\succeq$ is $\phi-$ dependent, whether $\succeq$ is a refinement of $\succeq_{SCD}$ depends on $\phi$ as well, which possibly brings extra structures to $\phi$. More explicitly, if I only let $\phi$ have single-crossing differences, then $\succeq$ may not be a refinement of $\succeq_{SCD}$. Example \ref{e7} offers a supporting example: $\{b,b'\}$ is the minimal optimal menu for problem \eqref{eq6}, $b \succeq_{SCD} b'$, but $b \nsucceq b'$ because $v(b,t)-v(b',t)$ is not increasing in $t$. On the other hand, if $\phi$ has monotonic differences, then $\succeq$ is indeed a refinement of $\succeq_{SCD}$: $b \succeq_{SCD} b' \Rightarrow \phi(b,t)-\phi(b',t)$ increases in $t$, which, by the proof of Lemma \ref{l1}, further implies that $v(b,t)-v(b',t)$ increases in $t$, which is exactly $b \succeq b'$. 

Similar issues exist when applying the monotonicity theorem (Theorem 4) of \cite{milgrom1994} if $\phi$ is only assumed to have single-crossing differences. In particular, if $\phi$ has single-crossing differences but not monotonic differences, then $(\{b_1,...,b_m\},\succeq)$ may not be a lattice: in Example \ref{e7}, since $b \nsucceq b'$ and $b' \nsucceq b$, neither $b \vee b'$ nor $b \wedge b'$ exists, thus not an element of $\{b,b'\}$.On the other hand, if $\phi$ has monotonic differences, then by Lemma \ref{l1}, $(\{b_1,...,b_m\},\succeq)$ is totally ordered, which is naturally a lattice. \cite{milgrom1994}'s monotonicity theorem also naturally holds, as $\phi$ is trivially quasisupermodular in $b$ when $(\{b_1,...,b_m\},\succeq)$ is totally ordered. 

It is worth mentioning that $\phi$ only has MCS on $(\{b_1,...,b_m\},\succeq)$ but not on $(\mathcal{B}, \succeq)$, so that neither \cite{kartik2023}'s theorem nor \cite{milgrom1994}'s theorem can be directly applied to the original optimal bundling problem.

\subsection{Optimality of Singleton-Bundle Menus}

The question that ``when the menu $\{\emptyset, \{1\}, ..., \{n\}\}$ is minimally optimal'' has not been answered much in the literature. Algorithm \ref{al1} shed light on structured and characterized solutions to the question. To begin with, recall that for any $b \in \mathcal{B}, \phi(b,t)=g_1(b)h_1(t)+g_2(b)+h_2(t)$, in which $h_1$ is monotonic, and $g_1$, $g_2$, and $h_2$ are arbitrary. The ``sandwich'' condition $\phi(b,\bar{t})>\phi(b',\bar{t})>\phi(b'',\bar{t}); \phi(b,\underline{t})<\phi(b',\underline{t})<\phi(b'',\underline{t})$ can be translated into \begin{equation*} \begin{split} [g_1(b)-g_1(b')]h_1(\bar{t}) > g_2(b')-g_2(b); [g_1(b')-g_1(b'')]h_1(\bar{t}) > g_2(b'')-g_2(b'), \\
 [g_1(b)-g_1(b')]h_1(\underline{t}) < g_2(b')-g_2(b); [g_1(b')-g_1(b'')]h_1(\underline{t}) < g_2(b'')-g_2(b'),\end{split} \end{equation*}
and the condition (Lemma \ref{l2}) for $b'$ to be very weakly dominated by a convex combination of $b$ and $b''$ can be translated into
\begin{equation*} \dfrac{[(g_1(b)-g_1(b'))h_1(\bar{t})]+[g_2(b)-g_2(b')]}{[(g_1(b')-g_1(b''))h_1(\bar{t})]+[g_2(b')-g_2(b'')]} \geq \dfrac{[(g_1(b)-g_1(b'))h_1(\underline{t})]+[g_2(b)-g_2(b')]}{[(g_1(b')-g_1(b''))h_1(\underline{t})]+[g_2(b')-g_2(b'')]}. \end{equation*}
To simplify notations, with an abuse of notation ``$b$'', 
\begin{table}[H]
\centering
\begin{tabular}{|c|c|}
\hline
$g_1(b)-g_1(b')$& $a$ \\ \hline
$g_2(b')-g_2(b)$& $b$ \\ \hline
$g_1(b')-g_1(b'')$& $c$ \\ \hline
$g_2(b'')-g_2(b')$& $d$ \\ \hline
$h_1(\bar{t})$& $x$ \\ \hline
$h_1(\underline{t})$& $y$ \\ \hline

\end{tabular}
\end{table}

The conditions required are 
\begin{equation*} ax > b; cx > d; ay < b; cy < d,\end{equation*} and 
\begin{equation} \label{eq85} \dfrac{ax-b}{cx-d} \geq \dfrac{ay-b}{cy-d} \Leftrightarrow (ad-bc)(x-y) \geq 0. \end{equation}
Without loss of generality, assume that $x > y$ ($h_1(\bar{t}) > h_1(\underline{t})$), then (\ref{eq85}) is equivalent to $ad \geq bc$.\footnote{The $x=y$ case is simple: the minimal optimal menu is $\argmax \limits_{b \in \mathcal{B}} g_1(b)h_1(\bar{t})+g_2(b)+h_2(\bar{t}) \cup \argmax \limits_{b \in \mathcal{B}} g_1(b)h_1(\underline{t})+g_2(b)+h_2(\underline{t}) $. } In conclusion, $b'$ is neither very weakly dominated by $b$ nor very weakly dominated by $b''$ but is very weakly dominated by a convex combination of them if and only if $x>y;ax>b;cx>d;ay<b;cy<d;ad \geq bc$.\footnote{``If and only if'' when $h_1$ is restricted to monotone increasing.}


I proceed to characterize an equivalent condition for the inequalities above to hold by  the following proposition.

\begin{proposition}
\label{p9}
    There exist $x$ and $y$ satisfying the above inequalities if and only if one of the following conditions hold:
    \begin{enumerate}
        \item \label{cond1} $a,b,c,d > 0; ad \geq bc$
        \item \label{cond2} $a,c>0; b,d<0; ad \geq bc$
        \item \label{cond3} $a,c>0; b\leq 0; d \geq 0$

    \end{enumerate}
\end{proposition}

\begin{proof}
    ``if'': The ``if'' direction can be shown by checking the four conditions one-by-one. 
    \begin{itemize}
        \item Under condition \ref{cond1}, any $x$ and $y$ satisfying $x>\dfrac{d}{c}$ and $y<\dfrac{b}{a}$ can satisfy the above inequalities.
        \item Under condition \ref{cond2}, any $x$ and $y$ satisfying $x > \dfrac{d}{c}$ and $y< \dfrac{b}{a}$ can satisfy the above inequalities.
        \item Under condition \ref{cond3}, $ad \geq 0 \geq bc$, and any $x$ and $y$ satisfying $x> \dfrac{d}{c}$ and $y<\dfrac{b}{a}$ can satisfy the above inequalities. 
    \end{itemize}

    ``only if'': I first show that the above inequalities imply $a,c>0$. If $a<0$, then $x<\dfrac{b}{a}<y$, a contradiction to $x>y$; if $a=0$, then $0<b<0$, again a contradiction. The same arguments for $c$. 

    Next consider the situation where $b,d \neq 0$. If $d<0, b>0$, then $ad<0<bc$, a contradiction. If $b,d>0$, then an additional $ad \geq bc$ is required, which is exactly condition \ref{cond1}, and any $x > \dfrac{d}{c}, y<\dfrac{b}{a}$ can satisfy the above inequalities. Similarly, if $b,d <0$, then the additional condition is needed, which is exactly condition \ref{cond2}. If $b<0,d>0$, then $ad>0>bc$, and still $x > \dfrac{d}{c}$ and $y<\dfrac{b}{a}$ can satisfy the above inequalities. 

    Now consider $b=0$ or $d=0$. If $b=0$, then $ad \geq bc$ implies that $d \geq 0$, and any $x > \dfrac{d}{c}$ and $y<0$ can satisfy the above inequalities. If $d=0$, then similarly, $b \leq 0$, and any $x>0$ and $y < \dfrac{b}{a}$ can satisfy the above inequalities. Combining with the previous discussion regarding $b<0, d>0$, condition \ref{cond3} is obtained. 
\end{proof}

Let the minimal optimal menu be $\{b_1,b_2,...,b_m\}$, in which $\phi(b_m,\bar{t})>...>\phi(b_1,\bar{t}); \phi(b_m,\underline{t})<...<\phi(b_1,\underline{t})$. Assume that for any $i \in \{1,2,...,m-1\}$, there exists bundle $b \in \mathcal{B} \backslash \{b_1,...,b_m\}$ such that $b$ is very weakly dominated by a convex combination of $b_i$ and $b_{i+1}$ but is not very weakly dominated by either $b_i$ or $b_{i+1}$. By Proposition \ref{p9}, $a,c>0$ implies that $g_1(b_m)>...>g_1(b_1)$. A condition that can ensure $\forall i, b_i$ is not very weakly dominated is $g_2(b_m)>...>g_2(b_1)$ and $g_1(b_i)=g_2(b_i)^2, \forall i$. Under such a condition, $a,c>0; b,d<0$, yet \begin{equation*} \begin{split} ad &=[g_2(b_{i+1})+g_2(b_i)][g_2(b_{i+1})-g_2(b_i)][g_2(b_{i-1})-g_2(b_i)] \\ &< [g_2(b_i)+g_2(b_{i-1})][g_2(b_i)-g_2(b_{i-1})][g_2(b_i)-g_2(b_{i+1})]=bc. \end{split} \end{equation*}

For any $i$, what are the conditions that the bundles ``between'' $b_{i-1}$ and $b_i$ ($b^{(i-1,i)}$) should satisfy, assuming that every bundle in between is only very weakly dominated by a non-degenerated convex combination? It is not hard to observe that among the three conditions in Proposition \ref{p9}, condition \ref{cond3} is more ``handy'' in the sense that $ad \geq bc$ is naturally satisfied. For condition \ref{cond3} to be satisfied,
\begin{equation*} \begin{split} g_1(b_i) > &g_1(b^{(i-1,i)}) > g_1(b_{i-1}), \\ 
  g_2(b_{i-1}) \geq g_2(b^{(i-1,i)}) &; g_2(b_i) \geq g_2(b^{(i-1,i)}). \end{split} \end{equation*}
The inequality regarding $g_2$ gives potential to $g_2(b_i) > g_2(b_{i-1})$, which is condition discussed above for $b_i$ to be not very weakly dominated. 

The characterization of $\{\emptyset, \{1\}, ...., \{n\}\}$ being the minimal optimal menu follows from the above discussion with more specifications. Assume that $\phi(\emptyset, \bar{t})< \phi(\{1\}, \bar{t})<...<\phi(\{n\},\bar{t}); \phi(\emptyset, \underline{t}) > \phi(\{1\}, \underline{t})>...>\phi(\{n\},\underline{t})$. Given $0<g_1(\{1\})<...<g_1(\{n\}), \forall i \in \{1,...,n\}$, for any $\{x_1,...,x_m\} \subseteq \{1,...,n\}$, in which $x_1<...<x_m$, it is not hard to see $g_1(\{x_1,...,x_m\})=\dfrac{g_1(\{x_1\})+...+g_1(\{x_m\})}{m}=\dfrac{g_2(\{x_1\})^2+...+g_2(\{x_m\})^2}{m}; g_2(\{x_1,...,x_m\})=\dfrac{g_2(\{x_1\})+...+g_2(\{x_m\})}{m}$ can make the singleton bundle menu minimal optimal.\footnote{$x_1<...<x_m$ implies that $m \geq 2$.} In particular, such $g_1$ and $g_2$ satisfy condition \ref{cond2} in Proposition \ref{p9}. However, under such $g_1$ and $g_2$, the optimality of the singleton-bundle menu can be argued through the standard Myersonian approach by comparing the virtual value pointwise, as for any $t, \phi(\{x_1,...,x_m\},t) \leq \max \limits_{i \in \{1,...,m\}} \phi(\{x_i\},t)$.

Instead of focusing on condition \ref{cond2} (or \ref{cond1}), if following condition \ref{cond3} in Proposition \ref{p9}, the conditions needed for $\{\emptyset, \{1\}, ...., \{n\}\}$ to be minimal optimal are

\begin{itemize}
    \item $g_1(\{x_m\})>g_1(\{x_1,...,x_m\})>g_1(\{x_1\})$
    \item $g_2(\{x_m\})>g_2(\{x_1,...,x_m\}); g_2(\{x_1\})>g_2(\{x_1,...,x_m\})$
    \item $[g_1(\{i+1\})-g_1(\{i\})][g_2(\{i-1\})-g_2(\{i\})]<[g_2(\{i\})-g_2(\{i+1\})][g_1(\{i\})-g_1(\{i-1\})], \forall i$
\end{itemize}

Here is a concrete examples that both satisfy all the above conditions and does not satisfy $\phi(\{x_1,...,x_m\},t) \leq \dfrac{1}{m}(\phi(\{x_1\},t)+...+\phi(\{x_m\},t))$: if the inequality holds, the minimal optimal menu can again be easily obtained through direct pointwise comparisons of virtual values. To avoid this, since $g_2(\{x_1,...,x_m\})<g_2(\{x_1\})<\dfrac{1}{m}(g_2(\{x_1\})+...+g_2(\{x_m\}))$, it has to be the case that $g_1(\{x_1,...,x_m\}) \neq \dfrac{1}{m}(g_1(\{x_1\})+...+g_1(\{x_m\})$. An example can be $g_1(\{x_1,...,x_m\})=\dfrac{g_2(\{x_1\})^2+...+g_2(\{x_m\})^2}{m-1+\frac{g_2(\{x_1\})^2}{g_2(\{x_m\})^2}}$, and it is not hard to check that $g_1(\{x_1\})<g_1(\{x_1,...,x_m\})<g_1(\{x_m\})$.The corresponding valuation function can be \begin{equation*} v(\{x_1,...,x_m\},t)= rt \dfrac{\sum \limits_{i=1}^m g_2(\{x_i\})^2}{m-1+\frac{g_2(\{x_1\})^2}{g_2(\{x_m\})^2}}+  \dfrac{\sum \limits_{i=1}^m g_2(\{x_i\})}{m\frac{g_2(\{x_m\})}{g_2(\{x_1\})}}, \end{equation*} where $r$ is a constant determined by $h_1(\bar{t})$ and $h_1(\underline{t})$. Under such a valuation function, the standard approach is not able to generate a solution without tedious and redundant comparisons of virtual values on each $t$, whereas the discussion in this section directly pins down $\{\emptyset, \{1\}, ..., \{n\}\}$ to be optimal. 

\subsection{Characterization of Minimal Optimal Menu: A ``Dominance-Free'' Approach}

The notion of dominance seems to mostly play an intermediary role: it bridges the gap between conditions needed for $\phi$ and the corresponding solution to the optimal bundling problem. A natural question is whether there are more straightforward characterizations that directly use the properties of $\phi$ rather ran going over the dominance notion. The answer is positive, but as will be shown below, completely getting rid of the dominance notion, or more precisely, completely getting rid of the equivalence between never strict best response and very weakly dominance will cost the loss of certain properties that only exist under the aforementioned equivalence, making the characterization rather tedious. Once bringing back the equivalence, although the characterization can be stated in a way that has nothing to do with the dominance notion, the underlying intuition and the proof does need to rely on the idea of dominance. In that case, the role that the notion of dominance plays is not only intermediary but fairly fundamental. Another angle to interpret the fundamental role is that all the results below can be regarded as corollaries of previous results that are build up from the dominance notion. 

Corresponding to a previous remark that the condition $\forall b,b' \in \mathcal{B}, \phi(b,t)-\phi(b',t)$ is monotonic in $t$ can be a sufficient condition for Lemma \ref{l1}, a characterization of minimal optimal menus without the introduction of Algorithm \ref{al1} and the notion of dominance is demonstrated by the following corollary.\footnote{I say ``minimal optimal menus'' because, as shown by Example \ref{e6}, minimal optimal menu may not be unique when Assumption \ref{a1} does not hold.}
\begin{corollary}
\label{c5}
    Under Assumption \ref{a0}, $\{b_1,...,b_m\}$ is a minimal optimal menu if and only if   
    \begin{itemize}
    \item $ \{t| \exists b_i \in \{b_1,...,b_m\}, b_i \in \argmax \limits_{b \in \mathcal{B}} \phi(b,t) \} =T$

    \item $\forall i \in \{1,...,m\}, \{t| b_i \in \argmax \limits_{b \in \mathcal{B}} \phi(b,t) \} \neq \emptyset$

    \item $\forall i \in \{1,...,m\}, \bigcup \limits_{j \neq i} [\{t|b_i \in \argmax \limits_{b \in \mathcal{B}} \phi(b,t) \} \cap \{t| b_j \in \argmax \limits_{b \in \mathcal{B}} \phi(b,t) \} ]\neq \{t| b_i \in \argmax \limits_{b \in \mathcal{B}} \phi(b,t) \}$

    \end{itemize}
\end{corollary}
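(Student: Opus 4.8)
The plan is to reduce the corollary to a purely combinatorial statement about ``best-response regions'' and then to transport that statement to the original problem via the implementability construction behind Theorem~\ref{t2}. For $b \in \mathcal{B}$, write $R_b := \{t \in T : b \in \argmax_{b' \in \mathcal{B}} \phi(b',t)\}$, so that the three displayed conditions read, respectively, $\bigcup_{i} R_{b_i} = T$; $R_{b_i} \neq \emptyset$ for every $i$; and $R_{b_i} \not\subseteq \bigcup_{j \neq i} R_{b_j}$ for every $i$ (the last because $\bigcup_{j\neq i}(R_{b_i}\cap R_{b_j}) = R_{b_i}\cap\bigcup_{j\neq i}R_{b_j}$). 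First I would record two elementary facts: continuity of $\phi(\cdot,t)$ in $t$ makes each $R_b$ closed and lets the finite maximizer correspondence $t \mapsto \argmax_{b'}\phi(b',t)$ admit a measurable selection; and pairwise monotonic differences over $\mathcal{B}$ make each $R_b$ an interval (the quasi-partition picture from the proof of Theorem~\ref{t2}, though only the first fact is logically needed below). It then follows from the definitions that a menu $B \subseteq \mathcal{B}$ is an optimal menu for problem~(*) if and only if $\bigcup_{b \in B} R_b = T$: coverage lets one pick a measurable allocation valued in $B \cap \argmax$ that attains $\mathbb{E}[\max_{b'}\phi(b',t)]$, and conversely any solution of~(*) with allocations in $B$ exhibits coverage.

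Next I would establish the equivalence \emph{for problem~(*)}. For ``only if'', let $\{b_1,\dots,b_m\}$ be a minimal optimal menu for~(*); optimality is the first condition. If the third condition failed for some $i$, i.e. $R_{b_i} \subseteq \bigcup_{j\neq i}R_{b_j}$, then $\{b_j : j\neq i\}$ would still cover $T$, hence be optimal for~(*), contradicting minimality; this gives the third condition, and the second follows since $R_{b_i}=\emptyset$ would make both sides of the third condition equal to $\emptyset$. For ``if'', assume all three conditions; the first makes $\{b_1,\dots,b_m\}$ optimal for~(*). If some $B' \subsetneq \{b_1,\dots,b_m\}$ were also optimal, pick $b_i \notin B'$; by the third condition there is a type $t^\ast \in R_{b_i}$ with $t^\ast \notin R_{b_j}$ for all $j\neq i$, so among the bundles of the menu only $b_i$ is a maximizer at $t^\ast$, and since $B' \subseteq \{b_j : j\neq i\}$ the menu $B'$ contains no maximizer at $t^\ast$ --- contradicting its optimality. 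Hence $\{b_1,\dots,b_m\}$ is minimal optimal for~(*).

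Finally I would transfer this to the original problem. Since the pairwise monotonic-differences hypothesis is exactly what Proposition~\ref{p5} needs (increasing differences of $v$ on the menu), the construction in the proof of Theorem~\ref{t2} --- the interval structure of the optimal allocation together with the price scheme making marginal consumers indifferent, plus the relevant individual-rationality assumption (Assumption~\ref{a3}, or~\textbf{3'}) --- shows that every minimal optimal menu for problem~(*) is implementable, hence optimal for the original problem; conversely, since the value of~(*) is an upper bound on attainable revenue, any menu optimal for the original problem must attain the pointwise maximum and is therefore optimal for~(*). Minimality transfers in both directions by the same upper-bound comparison. Combining this with the equivalence of the previous paragraph proves the corollary.

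The hard part will be the last transfer step. The earlier results it relies on (Theorem~\ref{t2}, and through it Theorem~\ref{t1} and Corollary~\ref{c4}) are stated under Assumption~\ref{a1} (and~\ref{a2}), which is strictly stronger than pairwise monotonic differences of $\phi$ over $\mathcal{B}$ --- indeed Example~\ref{e6} satisfies the latter but not the former and admits two distinct minimal optimal menus --- so I would need to isolate precisely which steps of those proofs survive: the quasi-partition structure of best responses, which uses only pairwise single-crossing and continuity, and the IC/IR verification, which uses only Proposition~\ref{p5}; and I would have to re-derive minimality from the revenue upper bound directly rather than from the uniqueness assertion of Corollary~\ref{c4}, since uniqueness genuinely fails under the weaker hypothesis. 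The combinatorial core in the middle two paragraphs is routine.
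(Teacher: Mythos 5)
Your proposal is correct and follows essentially the same route as the paper: it establishes the three conditions as a characterization of minimal optimal menus for the relaxed problem (*) via coverage of best-response regions, and then transfers to the original problem through Proposition \ref{p5} (which only needs pairwise monotonic differences) and the price construction from Theorem \ref{t2}, together with the relaxed-problem upper bound. Your explicit caution about not invoking Corollary \ref{c4}'s uniqueness (which fails here, cf.\ Example \ref{e6}) and about which pieces of Theorems \ref{t1}--\ref{t2} survive matches the paper's implicit reasoning; just note that, as in the paper's ``if'' step, applying Proposition \ref{p5} to the menu requires first deducing the strict, oppositely ordered boundary values $\phi(b_1,\underline{t})<\dots<\phi(b_m,\underline{t})$ and $\phi(b_1,\bar{t})>\dots>\phi(b_m,\bar{t})$ from conditions two and three plus monotonicity.
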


\begin{proof}

    ``if'': First claim that $\forall i,j \in \{1,...,m\}, i \neq j, \phi(b_i,\bar{t}) \neq \phi(b_j,\bar{t})$. If there exists $i \neq j$ such that $\phi(b_i,\bar{t})=\phi(b_j,\bar{t})$, then by monotonicity of $\phi(b_i,t)-\phi(b_j,t)$, it is either $\phi(b_i,t) \leq \phi(b_j,t), \forall t$, which further implies that $\{t|b_i \in \argmax \limits_{b \in \mathcal{B}} \phi(b,t)\} \subseteq \{t|b_j \in \argmax \limits_{b \in \mathcal{B}} \phi(b,t)\}$, a contradiction to $\bigcup \limits_{j \neq i} [\{t|b_i \in \argmax \limits_{b \in \mathcal{B}} \phi(b,t) \} \cap \{t| b_j \in \argmax \limits_{b \in \mathcal{B}} \phi(b,t) \} ]\neq \{t| b_i \in \argmax \limits_{b \in \mathcal{B}} \phi(b,t) \}$, or $\phi(b_j,t) \leq \phi(b_i,t), \forall t$, which further implies that $\{t|b_j \in \argmax \limits_{b \in \mathcal{B}} \phi(b,t)\} \subseteq \{t|b_i \in \argmax \limits_{b \in \mathcal{B}} \phi(b,t)\}$, a contradiction to $\bigcup \limits_{i \neq j} [\{t|b_j \in \argmax \limits_{b \in \mathcal{B}} \phi(b,t) \} \cap \{t| b_i \in \argmax \limits_{b \in \mathcal{B}} \phi(b,t) \} ]\neq \{t| b_j \in \argmax \limits_{b \in \mathcal{B}} \phi(b,t) \}$.
    
 With this claim, it is without loss to let $\phi(b_1,\underline{t})<\phi(b_2,\underline{t})<...<\phi(b_m,\underline{t})$. Next claim that for any $i<j$, it has to be the case that $\phi(b_i,\bar{t})>\phi(b_j,\bar{t})$. Suppose not, then by the same logic above, it cannot be the case that $\phi(b_i,\bar{t})=\phi(b_j,\bar{t})$. Consider $\phi(b_i,\bar{t})<\phi(b_j,\bar{t})$, then by monotonicity, $\forall t, \phi(b_i,t)<\phi(b_j,t)$, a contradiction to $\{t| b_i \in \argmax \limits_{b \in \mathcal{B}} \phi(b,t) \} \neq \emptyset$. In that case, $\phi(b_1,\bar{t})>...>\phi(b_m,\bar{t})$. By Lemma \ref{l1}, $\{b_1,...,b_m\}$ is a feasible menu. Moreover, $ \{t| \exists b_i \in \{b_1,...,b_m\}, b_i \in \argmax \limits_{b \in \mathcal{B}} \phi(b,t) \} =T$ suggests that $\{b_1,..,b_m\}$ is an optimal menu for problem \eqref{eq6}, which, combined with feasibility, further suggests that $\{b_1,..,b_m\}$ is an optimal menu. Suppose that there exists $B \subsetneq \{b_1,...,b_m\}$ such that $B$ is also an optimal menu. $\{b_1,..,b_m\}$ is an optimal menu for problem \eqref{eq6} indicates that $B$ is also an optimal menu for problem \eqref{eq6}, which is a contradiction to $\forall i,j \in \{1,...,m\}, i \neq j, \{t|  b_i \in \argmax \limits_{b \in \mathcal{B}} \phi(b,t) \} \cap \{t| b_j \in \argmax \limits_{b \in \mathcal{B}} \phi(b,t) \}= \emptyset$. As a result, $\{b_1,...,b_m\}$ is an minimal optimal menu. 

 ``only if'': Let $\{b_1,...,b_m\}$ be a minimal optimal menu for problem \eqref{eq6}. Optimality implies that $ \{t| \exists b_i \in \{b_1,...,b_m\}, b_i \in \argmax \limits_{b \in \mathcal{B}} \phi(b,t) \} =T$. Suppose that there exists $i \in \{1,...,m\}$ such that $\{t| b_i \in \argmax \limits_{b \in \mathcal{B}} \phi(b,t) \} = \emptyset$, then $\{b_1,...,b_m\} \backslash \{b_i\}$ is also an optimal menu for problem \eqref{eq6}, a contradiction to the minimality. Similarly, suppose that there exists $i \in \{1,...,m\}, $ such that $\bigcup \limits_{j \neq i} [\{t|b_i \in \argmax \limits_{b \in \mathcal{B}} \phi(b,t) \} \cap \{t| b_j \in \argmax \limits_{b \in \mathcal{B}} \phi(b,t) \} ]= \{t| b_i \in \argmax \limits_{b \in \mathcal{B}} \phi(b,t) \}$, then $\{b_1,...,b_m\} \backslash \{b_i\}$ is also an optimal menu for problem \eqref{eq6}, again a contradiction to the minimality. Hence, $\{b_1,...,b_m\}$ satisfies the three conditions in Corollary \ref{c5}, which, by the ``if'' direction, indicates that $\{b_1,...,b_m\}$ is also a minimal optimal menu. Therefore, for any minimal optimal menu $B$, it has to be a minimal optimal menu for problem \eqref{eq6} as well, which in turn satisfies the three conditions by the argument above. 

 \end{proof}
 
The first condition addresses the fact that $\{b_1,...,b_m\}$ is an optimal menu, whereas the second and third condition focus on the ``minimality'' : every bundle has to be a best response to certain types and cannot be replaced by other bundles in the menu. Note that this is different from saying every bundle in the menu has to be a strict best response to certain types, which is not a correct characterization of minimal optimal menus under the condition that $\forall b,b' \in \mathcal{B}, \phi(b,t)-\phi(b',t)$ is monotonic in $t$: Example \ref{e6} satisfies this condition, and $\{b,b'\}$ is a minimal optimal menu, but both $b$ and $b'$ are never strict best responses.

It is true that the above corollary focuses solely on $\phi$ and has nothing to do with the dominance notion. However, this costs the loss of uniqueness of the minimal optimal menu, leading to a less straightforward characterization. The corollary below brings back Assumption \ref{a1}, coming up with a cleaner characterization. However, the proof (and the uniqueness of the minimal optimal menu) largely relies on the equivalence between very weakly dominance and never strict best response. Moreover, both corollaries are only characterizations but cannot concretely produce a minimal optimal menu like Algorithm \ref{al1} does, which is why discussing the notion of dominance is more worthwhile than only focusing on assumptions on $\phi$. 
\begin{corollary}
\label{c6}
    Under Assumptions \ref{a0} and \ref{a1}, $\{b_1,...,b_m\}$ is the minimal optimal menu if and only if 

    \begin{itemize}

    \item $ \{t| \exists b_i \in \{b_1,...,b_m\}, b_i \in \argmax \limits_{b \in \mathcal{B}} \phi(b,t) \} =T$
    
   \item $\forall i \in \{1,...,m\}, \exists \tilde{t} \in T$ such that $\phi(b_i,\tilde{t})> \max \limits_{b \neq b_i} \phi(b,\tilde{t})$

    \end{itemize}
\end{corollary}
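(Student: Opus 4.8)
The plan is to deduce this from three ingredients already in hand: the equivalence of Proposition \ref{p1} (never strict best response $\Leftrightarrow$ very weakly dominated), the characterization of minimal optimal menus for problem (*) in Corollary \ref{c3}, and the identification of the solution of problem (*) with the solution of the original problem via Theorem \ref{t2} (plus uniqueness from Corollary \ref{c4}). First I would reinterpret the two bullets. The condition $\{t\mid \exists b_i\in\{b_1,\dots,b_m\},\ b_i\in\argmax_{b\in\mathcal B}\phi(b,t)\}=T$ is precisely the statement that $\{b_1,\dots,b_m\}$ is an optimal menu for problem (*): an optimal mechanism for (*) must, at (a.e., hence by continuity of $\phi$ every) type $t$, allocate a bundle in $\argmax_{b}\phi(b,t)$, so it can be made to take values in $\{b_1,\dots,b_m\}$ exactly when that $\argmax$ meets the menu at every $t$ (a measurable selection exists because $\phi(\cdot,t)$ has finite range); this is the equivalence already used inside the proof of Corollary \ref{c3}. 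The second bullet, in the terminology of Corollary \ref{c1}, says that each $b_i$ is a \emph{strict best response}, which by Proposition \ref{p1} is equivalent to each $b_i$ being \emph{not} very weakly dominated.

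For the ``only if'' direction, assume $\{b_1,\dots,b_m\}$ is the minimal optimal menu. By Theorem \ref{t2} it is also the minimal optimal menu for problem (*), hence in particular an optimal menu for (*), which is the first bullet; and by Corollary \ref{c3} every $b_i$ is not very weakly dominated, which by Proposition \ref{p1} is the second bullet. For the ``if'' direction, suppose both bullets hold. The second bullet and Proposition \ref{p1} give that each $b_i$ is not very weakly dominated. I then claim the first bullet forces every $b\notin\{b_1,\dots,b_m\}$ to be very weakly dominated: if some such $b$ were a strict best response, there would be a type $\tilde t$ at which $\phi(b,\tilde t)>\phi(b',\tilde t)$ for all $b'\neq b$, so no $b_i$ could lie in $\argmax_{b'}\phi(b',\tilde t)$, contradicting the first bullet; hence every $b\notin\{b_1,\dots,b_m\}$ is a never strict best response and, by Proposition \ref{p1}, very weakly dominated. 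Now both hypotheses of Corollary \ref{c3} are satisfied, so $\{b_1,\dots,b_m\}$ is a minimal optimal menu for problem (*); by Corollary \ref{c4} it is the unique such menu, and by Theorem \ref{t2} it coincides with the minimal optimal menu of the original problem. Note that Assumption \ref{a2} is used only through Theorem \ref{t2} (to pass between problem (*) and the original problem), while Assumption \ref{a1} is what underlies Proposition \ref{p1} and Corollary \ref{c4}.

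The main obstacle — really the only step that is not pure bookkeeping — is making rigorous the claim that the first bullet is exactly ``$\{b_1,\dots,b_m\}$ is an optimal menu for problem (*)'': one must argue that the pointwise description $\{t\mid\exists b_i\in\argmax\}=T$ is the correct notion of optimality for (*) rather than an almost-everywhere variant, and that a measurably-selecting mechanism into the menu exists. This is handled implicitly in the proof of Corollary \ref{c3}, and I would simply invoke that argument (together with the finiteness of $\mathcal B$ and continuity of each $\phi(b,\cdot)$). Everything else is a direct assembly of Proposition \ref{p1}, Corollary \ref{c3}, Corollary \ref{c4}, and Theorem \ref{t2}, so the corollary can be proved with no new computation.
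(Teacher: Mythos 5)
Your proposal is correct and follows essentially the same route as the paper: reduce to problem (*) via Theorem \ref{t2}, identify the first bullet with optimality of the menu for (*), and translate the second bullet through the equivalence of Proposition \ref{p1} into "each $b_i$ is not very weakly dominated", then conclude via Proposition \ref{p4}/Corollary \ref{c3} (the paper cites Proposition \ref{p4} where you cite Corollary \ref{c3} and add Corollary \ref{c4} for uniqueness, but these are the same ingredients). The extra care you take in the "if" direction—deriving from the first bullet that every outside bundle is a never strict best response—is just an unpacking of the argument already inside the proof of Corollary \ref{c3}, so no new content is needed.
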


 \begin{proof}

 Under Assumption \ref{a0} and \ref{a1}, by Theorem \ref{t2}, $\{b_1,...,b_m\}$ is the minimal optimal menu if and only if $\{b_1,...,b_m\}$ is the minimal optimal menu of problem \eqref{eq6}. By Proposition \ref{p4}, $\{b_1,...,b_m\}$ is the minimal optimal menu of problem \eqref{eq6} if and only if $\{b_1,...,b_m\}$ is an optimal menu of problem \eqref{eq6} and $\forall i \in \{1,...,m\}, b_i$ is not very weakly dominated. Note that $\{b_1,...,b_m\}$ is an optimal menu of problem \eqref{eq6} if and only if $ \{t| \exists b_i \in \{b_1,...,b_m\}, b_i \in \argmax \limits_{b \in \mathcal{B}} \phi(b,t) \} =T$, and by Proposition \ref{p1}, $\forall i \in \{1,...,m\}, b_i$ is not very weakly dominated if and only if $\forall i \in \{1,...,m\}, \exists \tilde{t} \in T$ such that $\phi(b_i,\tilde{t})> \max \limits_{b \neq b_i} \phi(b,\tilde{t})$. The result thus follows.
 
\end{proof}

Corollary \ref{c6} resembles the characterization that Proposition \ref{p4} introduces: the first condition is the same as the one in Corollary \ref{c5}, which still implies that $\{b_1,...,b_m\}$ is an optimal menu. The second condition says that every $b_i$ is a strict best response, which by the equivalence between strict best response and very weakly dominance, says that every $b_i$ is not very weakly dominated. 
 
This further implies that, under Assumption \ref{a0} and \ref{a1}, the three conditions raised in Corollary \ref{c5} is equivalent to the two conditions raised in Corollary \ref{c6}. Note that this is something incorrect in general, in particular, $\forall i \in \{1,...,m\}, \{t| b_i \in \argmax \limits_{b \in \mathcal{B}} \phi(b,t) \} \neq \emptyset$ and $\forall i \in \{1,...,m\}, \bigcup \limits_{j \neq i} [\{t|b_i \in \argmax \limits_{b \in \mathcal{B}} \phi(b,t) \} \cap \{t| b_j \in \argmax \limits_{b \in \mathcal{B}} \phi(b,t) \} ]\neq \{t| b_i \in \argmax \limits_{b \in \mathcal{B}} \phi(b,t) \}$ do not imply $\forall i \in \{1,...,m\}, \exists \tilde{t} \in T$ such that $\phi(b_i,\tilde{t})> \max \limits_{b \neq b_i} \phi(b,\tilde{t})$ even when $\forall b, b' \in \mathcal{B}, \phi(b,t)-\phi(b',t)$ is monotonic in $t$. For instance, in Example \ref{e6}, it is not hard to check that $\forall b, b' \in \mathcal{B}, \phi(b,t)-\phi(b',t)$ is monotonic in $t$. However, $\{b,b'\}$ is a menu that satisfies the last conditions (in fact, all three) in Corollary \ref{c5}, but it does not satisfy the second condition in Corollary \ref{c6}, i.e., neither of $b$ or $b'$ is a strict best response. The reason is again the violation of Assumption \ref{a1}. 

Having Assumption \ref{a1} not only brings back the equivalence between never strict best response and very weakly dominance, but also brings back Algorithm \ref{al1}, which helps develop a more detailed characterization of the minimal optimal menu. See the corollary below.

\begin{corollary}
\label{c7}
Under Assumption \ref{a0} and \ref{a1}, $\{b_1,...,b_m\}$ is the minimal optimal menu if and only if
\begin{itemize}
    \item $ \{t| \exists b_i \in \{b_1,...,b_m\}, b_i \in \argmax \limits_{b \in \mathcal{B}} \phi(b,t) \} =T$
    \item $\exists \rho: \{1,...,m\} \rightarrow \{1,...,m\}$ bijective such that $\phi(b_{\rho(1)},\underline{t})<...<\phi(b_{\rho(m)},\underline{t}), \phi(b_{\rho(1)},\bar{t})>...>\phi(b_{\rho(m)},\bar{t})$, and $\forall i \in \{2,...,m-1\}$, \begin{equation} \label{eq89} \dfrac{\phi(b_{\rho({i-1})},\bar{t})-\phi(b_{\rho(i)},\bar{t})}{\phi(b_{\rho(i)},\bar{t})-\phi(b_{\rho({i+1})},\bar{t})} < \dfrac{\phi(b_{\rho({i-1})},\underline{t})-\phi(b_{\rho(i)},\underline{t})}{\phi(b_{\rho(i)},\underline{t})-\phi(b_{\rho({i+1})},\underline{t})}. \end{equation}
\end{itemize}
\end{corollary}

\begin{proof}
    Want to show that the two conditions raised in Corollary \ref{c7} hold if and only if $\forall b \in \mathcal{B} \backslash \{b_1,...,b_m\}, b$ is very weakly dominated, and $\forall i \in \{1,...,m\}$, $b_i$ is not very weakly dominated. The corollary then follows Proposition \ref{p4}.

    ``if'': If $\forall b \in \mathcal{B} \backslash \{b_1,...,b_m\}, b$ is very weakly dominated, and $\forall i \in \{1,...,m\}$, $b_i$ is not very weakly dominated, then by Proposition \ref{p4}, $\{b_1,...,b_m\}$ is the minimal optimal menu for problem \eqref{eq6}, which immediately gives $\{t| \exists b_i \in \{b_1,...,b_m\}, b_i \in \argmax \limits_{b \in \mathcal{B}} \phi(b,t)\}=T$. Let the bijection $\rho: \{1,...,m\} \rightarrow \{1,...,m\}$ satisfy $\phi(b_{\rho(1)},\underline{t}) \leq ... \leq \phi(b_{\rho(m)},\underline{t})$. If there exists $i \in \{1,...,m-1\}$ such that $\phi(b_{\rho(i)},\underline{t})=\phi(b_{\rho(i+1)},\underline{t})$, then by Corollary \ref{c2}, one of $b_{\rho(i)}$ and $b_{\rho(i+1)}$ is very weakly dominated by the other, a contradiction. In that case, $\phi(b_{\rho(1)},\underline{t}) < ... < \phi(b_{\rho(m)},\underline{t})$, which in turn gives $\phi(b_{\rho(1)},\bar{t}) > ... > \phi(b_{\rho(m)},\bar{t})$. Meanwhile, by Theorem \ref{t1}, $\{b_1,...,b_m\}$ is the output of Algorithm \ref{al1}, which directly leads to (\ref{eq89}).

    ``only if'': $\{t| \exists b_i \in \{b_1,...,b_m\}, b_i \in \argmax \limits_{b \in \mathcal{B}} \phi(b,t)\}=T$ implies that $\forall b \in \mathcal{B} \backslash \{b_1,...,b_m\}$, $b$ is a never strict best response, which, by Proposition \ref{p1}, further implies that $b$ is very weakly dominated. Consider $\mathcal{B}'=\{b_1,...,b_m\}$. Run Algorithm \ref{al1} on $\mathcal{B}$', and the output is exactly $\mathcal{B}$', which implies that $b_i$ is not very weakly dominated by any convex combination of $b_1,...,b_m$, $\forall i \in \{1,..,m\}$. This, together with every bundle not included in $\{b_1,...,b_m\}$ is very weakly dominated,  further implies that $b_i$ is not very weakly dominated for all $i \in \{1,...,m\}$.

\end{proof}

The second condition says that $\forall i \in \{1,...,m\}, b_i$ is not very weakly dominated by any convex combination of $b_1,...,b_m$, which is justified by Algorithm \ref{al1} and Theorem \ref{t2}. As the first condition also implies that every bundle not included in $\{b_1,...,b_m\}$ is not very weakly dominated, the second condition further implies that every $b_i$ is not very weakly dominated. 

If there does not exist very weakly dominated bundles, by Proposition \ref{p4}, $\mathcal{B}$ is an optimal menu for problem \eqref{eq6}. Suppose that there exists $B \subsetneq \mathcal{B}$ such that $B$ is also an optimal menu for problem \eqref{eq6}, consider $b \in \mathcal{B} \backslash B$. Since $b$ is not very weakly dominated, by Proposition \ref{p1}, $b$ is a strict best response, i.e., $\exists t \in T$ such that $\phi(b,t)>\max \limits_{b' \neq b} \phi(b',t)$. In particular, $\phi(b,t)>\max \limits_{b' \in B} \phi(b',t)$, a contradiction to $B$ being an optimal menu. Hence, $\mathcal{B}$ is a minimal optimal menu.

If there exists very weakly dominated bundles:

    ``only if'': If $\{b_1,..,b_m\}$ is a minimal optimal menu, then by Proposition \ref{p4}, every weakly dominated bundle is very weakly dominated by a convex combination of $b_1,...,b_m$. In addition, if there exists $b_i \in \{b_1,...,b_m\}$ such that $b_i$ is very weakly dominated, then by Proposition \ref{p1}, it is a never strict best response, so that $\{b_1,...,b_{i-1},b_{i+1},...,b_m\}$ is also an optimal menu, a contradiction to $\{b_1,...,b_m\}$ being minimal. Hence, $\forall i \in \{1,...,m\}, b_i$ is not very weakly dominated.   

    ``if'': If every weakly dominated bundle is very weakly dominated by a convex combination of $b_1,...,b_m$, then by Proposition \ref{p4}, $\{b_1,...,b_m\}$ is an optimal menu for problem \eqref{eq6}. Suppose $\{b_1,...,b_m\}$ is not a minimal optimal menu, then there exists $B \subsetneq \{b_1,...,b_m\}$ such that $B$ is an optimal menu for problem \eqref{eq6}. Consider $b' \in B \backslash \{b_1,...,b_m\}$, as $b'$ is not very weakly dominated, by Proposition \ref{p1}, $b'$ is a strict best response, i.e., for any $b \in B$, there exists $t \in T$ (does not depend on $b$) such that $\phi(b',t) > \phi(b,t)$, a contradiction to $B$ being optimal. Thus, by contradiction, $\{b_1,...,b_m\}$ is a minimal optimal menu for problem \eqref{eq6}. 

\subsection{Assumption \ref{a1} and its Alternatives}  
\label{ae4}
I first discuss the connection between Assumption \ref{a1} and the following condition:
\begin{center}
    for any $b \in \mathcal{B}, a \in \Delta(\mathcal{B}), \phi(b,t)-\phi(a,t)$ is single-crossing in $t$
\end{center}

See below an example where for any $b \in \mathcal{B}, a \in \Delta(\mathcal{B}), \phi(b,t)-\phi(a,t)$ is single-crossing in $t$, but Assumption \ref{a1} fails to hold.\footnote{I thank Daniel Rappoport for a similar example.}

\begin{example} 
Consider the following example: $T=[-1,1],\mathcal{B}=\{b,b',b'',b'''\}, \phi(b,t)=5, \phi(b',t)=-5, \phi(b'',t)=-t$ \[ \phi(b''',t)= \begin{cases} -t^2 & \text{if } t \leq 0, \\ 0 & \text{if } t > 0. \end{cases} \] \

I first show that Assumption \ref{a1} does not hold in this example. Consider $a=\dfrac{49}{100}b+\dfrac{51}{100}b'$ and $a'=\dfrac{1}{3}b''+\dfrac{2}{3}b'''$. $\phi(a,-1)-\phi(a',-1)=-\dfrac{1}{10}-(-\dfrac{1}{3})>0; \phi(a,0)-\phi(a',0)=-\dfrac{1}{10}<0; \phi(a,1)-\phi(a',1)=-\dfrac{1}{10}-(-\dfrac{1}{3})>0$, a violation of Assumption \ref{a1}.

    \begin{figure}[H]
        \centering
        \includegraphics[width=0.4\linewidth]{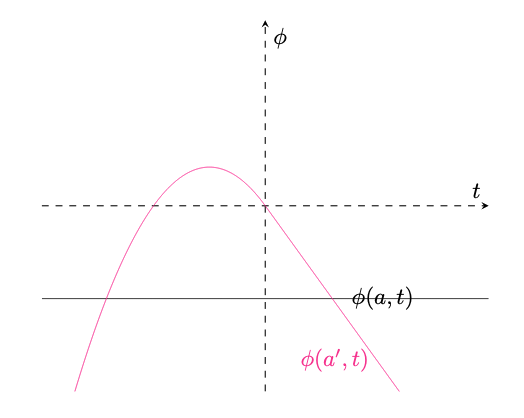}
        \caption{$\phi(a',t)-\phi(a,t)$ is not single-crossing in $t$}
    \end{figure}

I proceed to show that for any $b \in \mathcal{B}, a \in \Delta(\mathcal{B}), \phi(b,t)-\phi(a,t)$ is single-crossing in $t$ is indeed satisfied. Since for any $t \in T, a \in \Delta(\mathcal{B}), \phi(b',t) \leq \phi(a,t) \leq \phi(b,t)$, $\phi(b,t)-\phi(a,t)$ and $\phi(b',t)-\phi(a,t)$ are single-crossing in $t$ for any $a \in \Delta(\mathcal{B})$. Meanwhile, for any $t \in T,$ for any $ a \in \Delta(\mathcal{B}), \phi_t(b'',t) \leq \phi_t(a,t) \leq \phi_t(b''',t), $, indicating that $\phi(b'',t)-\phi(a,t)$ and $\phi(b''',t)-\phi(a,t)$ are single-crossing in $t$ for any $a \in \Delta(\mathcal{B})$.

This example also shows that even for any $b \in \mathcal{B}, a \in \Delta(\mathcal{B}), \phi(b,t)-\phi(a,t)$ is single-crossing in $t$ and for any $b,b' \in \mathcal{B}, \phi(b,t)-\phi(b',t)$ is monotonic in $t$ together is not a sufficient condition for Assumption \ref{a1}, for the monotonicity of the difference of virtual values among deterministic bundles is also satisfied in the example. 
    
\end{example}

However, the condition above is a sufficient condition of Proposition \ref{p1}. The result is formalized by the following proposition. 
\begin{proposition}
\label{p10}
    If for any $b \in \mathcal{B}, a \in \Delta(\mathcal{B}), \phi(b,t)-\phi(a,t)$ is single-crossing in $t$, then $b \in \mathcal{B}$ is a never strict best response if and only if $b$ is very weakly dominated.
\end{proposition}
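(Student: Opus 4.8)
The strategy is to recycle the proof of Proposition \ref{p1}, noting that the weaker hypothesis here is already enough for every step except the one that genuinely needs full SCD$^\star$ — and that step can simply be dropped. The ``only if'' direction of Proposition \ref{p1} proceeds in two stages: (i) a minmax argument (via Lemma \ref{l1}, the multiplier $f_1$, and MD$^\star$) producing \emph{some} $a^\ast\in\Delta(\mathcal B)$ with $\phi(a^\ast,t)\ge\phi(b,t)$ for all $t$; and (ii) a combinatorial argument (the part beginning ``Consider $\mathcal B^b:=\{b'\in\mathcal B\mid\phi(b',t)<\phi(b,t)\ \forall t\}$'') that upgrades this to a \emph{distinct} dominating bundle. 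Stage (i) is the only place that invokes single-crossingness of $\phi(a,\cdot)-\phi(a',\cdot)$ for two genuinely stochastic $a,a'$, and under the present hypothesis $f_1$ need not exist. But stage (i) is logically superfluous: stage (ii) is a self-contained proof by contraposition.

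\textbf{Execution.} The ``if'' direction is verbatim the ``if'' direction of Proposition \ref{p1}, which uses no single-crossing assumption whatsoever. For ``only if'', suppose $b\in\mathcal B$ is a never strict best response but is \emph{not} very weakly dominated. Writing $u^b(a,t):=\phi(b,t)-\phi(a,t)$, the latter says exactly that $\max_t u^b(a,t)>0$ for every $a\in\Delta(\mathcal B)$ with $a\neq b$. Now run stage (ii) of the proof of Proposition \ref{p1} with this $u^b$ (dispensing with $f_1$; since $f_1>0$, ``$\max_t u^b(a,t)>0$'' is unaffected by the multiplier): that argument derives a contradiction from the two facts ``$b$ is a never strict best response'' and ``$\max_t u^b(a,t)>0$ for all $a\neq b$''. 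Every appeal to single-crossing inside it is either to single-crossingness of $\phi(b,\cdot)-\phi(\tilde b,\cdot)$ for a deterministic $\tilde b$, or to that of $\phi(b,\cdot)-\phi(a,\cdot)$ for a lottery $a$ supported on two deterministic bundles $\tilde b_1,\tilde b_2$; both are instances of ``$\phi(b,\cdot)-\phi(a,\cdot)$ single-crossing with $b\in\mathcal B$, $a\in\Delta(\mathcal B)$'', hence hold by hypothesis. The standing assumption that distinct bundles induce distinct virtual-value functions is used precisely as in Proposition \ref{p1}. The resulting contradiction forces $\max_t u^b(a,t)\le0$ for some $a\neq b$, i.e.\ $\phi(a,t)\ge\phi(b,t)$ for all $t$, so $b$ is very weakly dominated.

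\textbf{Main obstacle.} There is no new idea needed; the work is bookkeeping. One must re-read stage (ii) of the proof of Proposition \ref{p1} line by line and check that (a) it never compares two \emph{stochastic} bundles, and (b) it is genuinely self-contained — it takes ``$\max_t u^b(a,t)>0$ for all $a\neq b$'' as an input rather than relying on the $a^\ast$ delivered by stage (i). This is exactly the point anticipated in the paper's footnote to Assumption \ref{a1}: obtaining a dominating lottery that is distinct from $b$ is the only real difficulty, and its resolution never requires single-crossingness between two lotteries. Along the way one should also state stage (ii)'s ``$\mathcal B\setminus\mathcal B^b=\emptyset$'' case more cleanly (``every $b'\neq b$ is strictly dominated by $b$, so $b$ is a strict best response, contradicting the hypothesis''), which is where the argument's only slightly awkward phrasing sits.
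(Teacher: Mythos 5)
Your proposal is correct and follows essentially the same route as the paper: the paper's own proof of Proposition \ref{p10} likewise drops the minmax/Sion step (observing only that $\min_{a}\max_{t}u^b(a,t)\le 0$ trivially via $a=b$) and defers the key step $\argmin_{a}\max_{t}u^b(a,t)\neq\{b\}$ to the second half of the proof of Proposition \ref{p1}, which—as you verify—only ever invokes single-crossing of $\phi(b,\cdot)-\phi(a,\cdot)$ for deterministic $b$ against (two-point) lotteries $a$. Your contrapositive phrasing, starting from ``$b$ not very weakly dominated $\Leftrightarrow\max_t u^b(a,t)>0$ for all $a\neq b$,'' is just a restatement of that argmin condition, so the two arguments coincide in substance.
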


\begin{proof}
    For any $b \in \mathcal{B}, a \in \Delta(\mathcal{B})$, let $u^b(a,t)=\phi(b,t)-\phi(a,t)$. I first claim that $\forall b \in \mathcal{B}, \min \limits_{a \in \Delta(\mathcal{B})} \max \limits_{t \in T} u^b(a,t) \leq 0$.\footnote{$\min$ and $\max$ are well-defined because $\Delta(\mathcal{B})$ and $T$ are compact, and $u^b$ is continuous in both components.} This is because for $a=b, \max \limits_{t \in T} u^b(b,t)=0$, so that $\min \limits_{a \in \Delta(\mathcal{B})} \max \limits_{t \in T} u^b(a,t) \leq 0$. In that case, for $a^* \in \argmin \limits_{a \in \Delta(\mathcal{B})} \max \limits_{t \in T} u^b(a,t), \phi(b,t) \leq \phi(a^*,t), \forall t$.

    The only thing left to show is if $b$ is a never strict best response, and for any $b \in \mathcal{B}, a \in \Delta(\mathcal{B}), \phi(b,t)-\phi(a,t)$ is single-crossing in $t$, then $\argmin \limits_{a \in \Delta(\mathcal{B})} \max \limits_{t \in T} u^b(a,t) \neq \{b\}$. This result follows from the proof of Proposition \ref{p1}.
\end{proof}

The two arguments above together imply that for any $b \in \mathcal{B}, a \in \Delta(\mathcal{B}), \phi(b,t)-\phi(a,t)$ is single-crossing in $t$ is a weaker sufficient condition for Proposition \ref{p1} than Assumption \ref{a1}, yet it is less structured in the sense that there does not exist a equivalent characterization on $\phi$. Moreover, this condition, unlike Assumption \ref{a1}, cannot be a good starting point to build connections between single-crossing differences and monotonic differences. 

In addition, it is unclear whether the equivalence between ``strictly dominated deterministic bundle'' and ``never best response deterministic bundle'' (analogous to Definition \ref{def1} and \ref{def2}) only requires the deterministic - lottery single-crossing difference but not Assumption \ref{a1}. The argument for this equivalence would be very similar to the proof of Lemma 3 of \cite{pearce1984}, which does need minmax theorem, thus Assumption \ref{a1}, to play a significant role. 

It is also worth mentioning that Example \ref{e2} can be used to demonstrate that for any $b,b' \in \mathcal{B}, \phi(b,t)-\phi(b',t)$ is single-crossing in $t$ does not imply that for any $b \in \mathcal{B}, a \in \Delta(\mathcal{B}), \phi(b,t)-\phi(a,t)$ is single-crossing in $t$.

Next I consider the following condition:
\begin{center}
    $\text{ for any }b \in \mathcal{B}, a \in \Delta(\mathcal{B}), \phi(b,t)-\phi(a,t)$ is monotonic in $t$
\end{center}

It is not hard to see that Assumption \ref{a1} does not imply the condition above. For example, when $|\mathcal{B}|=2$, if $\phi(b_1,t)-\phi(b_2,t)$ is single-crossing but not monotonic in $t$, then for any $a,a' \in \Delta(\{b_1,b_2\}), \phi(a,t)-\phi(a',t)$ is single-crossing in $t$, but for any $b \in \mathcal{B}, a \in \Delta(\mathcal{B}), \phi(b,t)-\phi(a,t)$ is only single-crossing yet not monotonic in $t$. Example \ref{e3} below provides a less trivial illustration. 

\begin{example}
\label{e3}
For any $a=a_1b+a_2b'+(1-a_1-a_2)b'' \in \Delta(\mathcal{B}), \phi(a,t)=5a_1+(3-t)a_2+(1-a_1-a_2)(1-\dfrac{1}{4}(1-t)^2)=a_1(4+\dfrac{1}{4}(1-t)^2)+a_2(2-t+\dfrac{1}{4}(1-t)^2)+(1-\dfrac{1}{4}(1-t)^2)$. Consider $\dfrac{4+\dfrac{1}{4}(1-t)^2}{2-t+\dfrac{1}{4}(1-t)^2}=\dfrac{t^2-2t+17}{(t-3)^2}$. It is straightforward that $t^2-2t+17>0, (t-3)^2>0$.\footnote{This also implies that both $4+\dfrac{1}{4}(1-t)^2$ and $2-t+\dfrac{1}{4}(1-t)^2$ are single-crossing.} Moreover, $(\dfrac{t^2-2t+17}{(t-3)^2})'=\dfrac{(2t-2)(t-3)^2-2(t-3)(t^2-2t+17)}{(t-3)^4}=\dfrac{-12t-28}{(t-3)^3}$, which is always positive on $[0,2]$. Hence, $4+\dfrac{1}{4}(1-t)^2$ and $2-t+\dfrac{1}{4}(1-t)^2$ are (strictly) ratio ordered on $[0,2]$.\footnote{See Definition 3 of \cite{kartik2023} for the definition of (strictly) ratio ordered functions.} By Theorem 2 of \cite{kartik2023}, for any $a,a' \in \Delta(\mathcal{B})$, $\phi(a,t)-\phi(a',t)$ is single-crossing in $t$ on $[0,2]$. However, $\phi(b,t)-\phi(b'',t)$ is not monotonic in $t$, a violation of the condition above.
\end{example}

The more interesting question is whether the above condition can imply Assumption \ref{a1}. The following proposition answers this question.

\begin{proposition}
\label{p2}
    If for any $b \in \mathcal{B}, a \in \Delta(\mathcal{B}), \phi(b,t)-\phi(a,t)$ is monotonic in $t$, then for any $a,a' \in \Delta(\mathcal{B}), \phi(a,t)-\phi(a',t)$ is monotonic in $t$.
\end{proposition}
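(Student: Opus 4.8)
The plan is to extract from the hypothesis a single ``affine in $t$'' representation shared by all the functions $\phi(a,\cdot)$, and then read off monotonicity directly. The central device is the \emph{endpoint functional} $V(a):=\phi(a,\bar{t})-\phi(a,\underline{t})$, which satisfies $V(a)=\sum_{b\in\mathcal{B}}a_b V(b)$ and is therefore affine in $a$. The first step is an elementary observation: on $T=[\underline{t},\bar{t}]$ a monotonic function with equal values at $\underline{t}$ and $\bar{t}$ is constant. Combined with the hypothesis this yields the key sub-fact: for every $b\in\mathcal{B}$ and every $a\in\Delta(\mathcal{B})$ with $V(b)=V(a)$, the difference $\phi(b,\cdot)-\phi(a,\cdot)$ is constant in $t$ (it is monotonic by assumption and its difference of endpoint values is exactly $V(b)-V(a)=0$).

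Next I would use this to pin down the shape of the vertex functions $\{\phi(b,\cdot)\}_{b\in\mathcal{B}}$. If $V$ is constant on $\mathcal{B}$, then by the sub-fact every $\phi(b,\cdot)-\phi(b',\cdot)$ is constant, hence so is every $\phi(a,\cdot)-\phi(a',\cdot)$, and we are done. Otherwise pick $\underline{b},\overline{b}\in\mathcal{B}$ attaining the minimum and maximum of $V$, so $V(\underline{b})<V(\overline{b})$. For each $b\in\mathcal{B}$ set $\theta_b:=\frac{V(\overline{b})-V(b)}{V(\overline{b})-V(\underline{b})}\in[0,1]$ and form the lottery $q_b:=\theta_b\underline{b}+(1-\theta_b)\overline{b}\in\Delta(\mathcal{B})$ (possibly a point mass). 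Then $V(q_b)=\theta_b V(\underline{b})+(1-\theta_b)V(\overline{b})=V(b)$, so by the sub-fact $\phi(b,\cdot)-\phi(q_b,\cdot)$ is constant; i.e.\ there is $c_b\in\mathbb{R}$ with $\phi(b,t)=\theta_b\phi(\underline{b},t)+(1-\theta_b)\phi(\overline{b},t)+c_b$, equivalently
$$\phi(b,t)=(1-\theta_b)\,h(t)+w(t)+c_b,\qquad h:=\phi(\overline{b},\cdot)-\phi(\underline{b},\cdot),\quad w:=\phi(\underline{b},\cdot),$$
where $h$ is monotonic by the hypothesis. Thus all vertex functions lie on a single line in function space, along the monotonic direction $h$.

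Finally I would extend this to all lotteries by linearity. For $a=\sum_b a_b\,b\in\Delta(\mathcal{B})$,
$$\phi(a,t)=\sum_b a_b\phi(b,t)=\Big(\textstyle\sum_b a_b(1-\theta_b)\Big)h(t)+w(t)+\sum_b a_b c_b=:\lambda(a)\,h(t)+w(t)+c(a),$$
with $\lambda(a)$ and $c(a)$ scalars depending only on $a$. Hence for any $a,a'\in\Delta(\mathcal{B})$,
$$\phi(a,t)-\phi(a',t)=\big(\lambda(a)-\lambda(a')\big)\,h(t)+\big(c(a)-c(a')\big),$$
an affine function with constant coefficients of the monotonic function $h$, and therefore monotonic in $t$. (As a by-product this exhibits the MD$^\star$ representation of $\phi$.)

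The step I expect to be the genuine obstacle is \emph{locating} this route. The naive attempts — peeling a single bundle off $a$ and inducting on $|\mathcal{B}|$, or comparing $a$ and $a'$ through one deterministic ``pivot'' bundle — all reduce $\phi(a,\cdot)-\phi(a',\cdot)$ to a sum of a nondecreasing and a nonincreasing function, about which nothing can be concluded. What breaks the impasse is realizing that the hypothesis, applied to the \emph{particular} lotteries $\theta\underline{b}+(1-\theta)\overline{b}$ whose endpoint functional matches that of an intermediate vertex, forces the entire vertex family to be affinely aligned along one monotonic direction; once that rigidity is in hand, linearity finishes everything. Only routine care is needed for the degenerate mixtures ($\theta_b\in\{0,1\}$, where $q_b$ is a point mass but still lies in $\Delta(\mathcal{B})$) and for the trivial case in which $V$ is constant on $\mathcal{B}$.
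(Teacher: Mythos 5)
Your proof is correct, and it takes a genuinely different route from the paper's. The paper normalizes by one bundle, works with derivatives $\tilde{\phi}_t(b_i,t)$, and argues by contradiction (separately for strictly monotonic, flat, and mixed cases) that the ratio of derivatives is constant, so that every vertex function is an affine transformation of a single one; this implicitly requires differentiating $\phi$, i.e.\ more smoothness than the model's ``$v$ continuously differentiable'' literally guarantees, and involves a fairly delicate case analysis. You instead exploit the closed-interval structure of $T$: the endpoint functional $V(a)=\phi(a,\bar{t})-\phi(a,\underline{t})$ is affine in $a$, a monotonic function with equal endpoint values is constant, and applying the hypothesis to the matched mixtures $q_b=\theta_b\underline{b}+(1-\theta_b)\overline{b}$ with $V(q_b)=V(b)$ forces every $\phi(b,\cdot)$ onto the line $\{\lambda h+w+c\}$ with $h=\phi(\overline{b},\cdot)-\phi(\underline{b},\cdot)$ monotonic; linearity in $a$ then gives the conclusion immediately. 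Both arguments land on the same structural fact (all vertex functions affinely aligned along one monotonic direction, consistent with Lemma 2 of Kartik, Lee, and Rappoport and with the MD$^\star$ representation, which your construction exhibits explicitly), but your version is derivative-free, shorter, and handles the degenerate cases ($V$ constant, $\theta_b\in\{0,1\}$) with no extra work; its only extra reliance is on $T$ having both endpoints, which the model assumes.

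The one point worth stating explicitly if you write this up is the justification that $\phi(a,\cdot)$ is the pointwise mixture $\sum_b a_b\,\phi(b,\cdot)$ (so that $V$ is affine and the final linearity step is legitimate); this holds here because $v(a,t)=\mathbb{E}_{b\sim a}[v(b,t)]$, as the paper notes, but it is the load-bearing hypothesis behind your functional $V$.
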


\begin{proof}
        Let $\mathcal{B}=\{b_1,...,b_{2^n}\}$. Claim that the problem regarding $(\phi(b_1,t),...,\phi(b_{2^n},t))$ is equivalent to the problem regarding $(\tilde{\phi}(b_1,t),...,\tilde{\phi}(b_{2^n},t))$, in which $\tilde{\phi}(b_1,t)=0, \tilde{\phi}(b_i,t)=\phi(b_i,t)-\phi(b_1,t), \forall i \in \{2,...,2^n\}$. This is because for any $a=\sum \limits_{i=1}^{2^n} \lambda_i b_i$ and $a' = \sum \limits_{i=1}^{2^n} \lambda_i' b_i$, \begin{equation*} \begin{split} \phi(a,t)-\phi(a',t) &=\sum \limits_{i=2}^{2^n} \lambda_i [\phi(b_i,t)-\phi(b_1,t)]+\phi(b_1,t)-\sum \limits_{i=2}^{2^n} \lambda_i' [\phi(b_i,t)-\phi(b_1,t)]-\phi(b_1,t) \\ &= \sum \limits_{i=1}^{2^n} (\lambda_i-\lambda_i')[\phi(b_i,t)-\phi(b_1,t)]=\tilde{\phi}(a,t)-\tilde{\phi}(a',t). \end{split} \end{equation*} Consider the problem regarding $\tilde{\phi}$. Since $\forall i \in \{2,...,2^n\}, \tilde{\phi}(b_i,t)-\tilde{\phi}(b_1,t)$ is monotonic in $t$ and $\tilde{\phi}(b_1,t)=0$, $\tilde{\phi}(b_i,t)$ is monotonic in $t, \forall i \in \{2,...,2^n\}$.

    I first focus on the case where $\tilde{\phi}(b_i,t)$ is strictly monotonic, $\forall i \in \{2,...,2^n\}$. For this case, I claim that $\forall i,j \in \{2,...,2^n\}, \dfrac{\tilde{\phi}_t(b_i,t)}{\tilde{\phi}_t(b_j,t)}$ is a constant (which can depend on $i$ and $j$). I show this result by contradiction. Suppose this is not the case, i.e., there exists $i,j,t_1,t_2$ such that $\dfrac{\tilde{\phi}_t(b_i,t_1)}{\tilde{\phi}_t(b_j,t_1)} \neq \dfrac{\tilde{\phi}_t(b_i,t_2)}{\tilde{\phi}_t(b_j,t_2)}$. Without loss of generality, let $\dfrac{\tilde{\phi}_t(b_i,t_1)}{\tilde{\phi}_t(b_j,t_1)} < \dfrac{\tilde{\phi}_t(b_i,t_2)}{\tilde{\phi}_t(b_j,t_2)}$. If $\forall t, \tilde{\phi}_t(b_i,t) \tilde{\phi}_t(b_j,t) >0$, i.e., both $\tilde{\phi}(b_i,t)$ and $\tilde{\phi}(b_j,t)$ are either increasing or decreasing, then there exists $c>0$ such that $\dfrac{\tilde{\phi}_t(b_i,t_1)}{\tilde{\phi}_t(b_j,t_1)} < c < \dfrac{\tilde{\phi}_t(b_i,t_2)}{\tilde{\phi}_t(b_j,t_2)}$. It is further without loss to assume that $\forall t, \tilde{\phi}(b_i,t)>0, \tilde{\phi}(b_j,t)>0$ and $c \leq 1$.\footnote{If $c>1$, consider $\frac{\tilde{\phi}_t(b_j,t_1)}{\tilde{\phi}_t(b_i,t_1)} > \frac{1}{c} > \frac{\tilde{\phi}_t(b_j,t_2)}{\tilde{\phi}_t(b_i,t_2)}$.} The inequality above gives $\tilde{\phi}_t(b_i,t_1)-c \tilde{\phi}_t(b_j,t_1)<0; \tilde{\phi}_t(b_i,t_2)-c \tilde{\phi}_t(b_j,t_2)>0$. However, $\tilde{\phi}(b_i,t)-\tilde{\phi}(cb_j+(1-c)b_1,t)$ is monotonic in $t$, which gives $[\tilde{\phi}_t(b_i,t_1)-\tilde{\phi}_t(cb_j+(1-c)b_1,t_1)][\tilde{\phi}_t(b_i,t_2)-\tilde{\phi}_t(cb_j+(1-c)b_1,t_2)]=[\tilde{\phi}_t(b_i,t_1)-c \tilde{\phi}_t(b_j,t_1)][\tilde{\phi}_t(b_i,t_2)-c \tilde{\phi}_t(b_j,t_2)]\geq 0$, a contradiction.

    If $\forall t, \tilde{\phi}_t(b_i,t)\tilde{\phi}_t(b_j,t)<0$, i.e, one of $\tilde{\phi}(b_i,t)$ and $\tilde{\phi}(b_j,t)$ is increasing and the other is decreasing, then there exists $c<0$ such that $\dfrac{\tilde{\phi}_t(b_i,t_1)}{\tilde{\phi}_t(b_j,t_1)} < c < \dfrac{\tilde{\phi}_t(b_i,t_2)}{\tilde{\phi}_t(b_j,t_2)}$. It is still without loss to assume that $\forall t, \tilde{\phi}(b_i,t)>0$. The inequality above once again gives $\tilde{\phi}_t(b_i,t_1)-c \tilde{\phi}_t(b_j,t_1)<0; \tilde{\phi}_t(b_i,t_2)-c \tilde{\phi}_t(b_j,t_2)>0$. However, $\tilde{\phi}(\dfrac{1}{1-c}b_i+\dfrac{-c}{1-c}b_j,t)-\tilde{\phi}(b_1,t)$ is monotonic in $t$, which gives $\frac{1}{1-c}[\tilde{\phi}_t(b_i,t_1)-c\tilde{\phi}_t(b_j,t_1)][\tilde{\phi}_t(b_i,t_2)-c\tilde{\phi}_t(b_j,t_2)] \geq 0$, a contradiction.\footnote{$\frac{1}{1-c}b_i+\frac{-c}{1-c}b_j$ is a convex combination of $b_i$ and $b_j$ because $c<0$ implies that $0<\frac{1}{1-c}<1, 0<\frac{-c}{1-c}<1$.} 

    Hence, the initial claim holds. With the claim, it is not hard to see that $\forall i \in \{3,...,2^n\}, \exists c_i,c_i' \in \mathbb{R}$ such that $\tilde{\phi}(b_i,t)=c_i\tilde{\phi}(b_2,t)+c_i'$.\footnote{A way to show this is $\forall t, i \in \{3,...,n\}, \tilde{\phi}(b_i,t)-\tilde{\phi}(b_i,\underline{t})=\int_{\underline{t}}^t \tilde{\phi}_t(b_i,s)ds=c_i\int_{\underline{t}}^t \tilde{\phi}_t(b_2,s)ds=c_i[\tilde{\phi}(b_2,t)-\tilde{\phi}(b_2,\underline{t})]$, in which $c_i$ is the constant given by $\dfrac{\tilde{\phi}_t(b_i,t)}{\tilde{\phi}_t(b_2,t)}$.}As a result, $\forall a,a' \in \Delta(B), \tilde{\phi}(a,t)-\tilde{\phi}(a',t)$ is a affine transformation of $\tilde{\phi}(b_2,t)$, which is monotonic in $t$ because $\tilde{\phi}(b_2,t)$ is monotonic in $t$.

    Now I proceed to discuss the case where some of the $\tilde{\phi}(b_i,t)$ may not be strictly monotonic. I first show that there do not exist $i,j,t_1,t_2$ such that $\tilde{\phi}_t(b_i,t_1)=0, \tilde{\phi}_t(b_j,t_1) \neq 0; \tilde{\phi}_t(b_i,t_2) \neq 0, \tilde{\phi}_t(b_j,t_2) =0$. Suppose this is the case, Without loss of generality, let $\tilde{\phi}_t(b_j,t_1)>0$. For $\lambda \in (0,1), \tilde{\phi}(\lambda b_i+(1-\lambda)b_j, t)-\tilde{\phi}(b_1,t)$ is monotonic in $t$ implies that $\tilde{\phi}_t(b_i,t_2)>0$. Meanwhile, $\tilde{\phi}(b_i,t)-\tilde{\phi}(\lambda b_j+(1-\lambda) b_1,t)$ is monotonic in $t$ implies that $\tilde{\phi}_t(b_i,t_2)<0$, a contradiction. Next I claim that if there exist $i,j,t_1$ such that $\tilde{\phi}_t(b_i,t_1)=0, \tilde{\phi}_t(b_j,t_1) \neq 0$, then $\tilde{\phi}_t(b_i,t)=0, \forall t$, i.e., $\tilde{\phi}(b_i,t)$ is a constant function. Suppose this is not the case, then there exist $t_2$ such that $\tilde{\phi}_t(b_i,t_2) \neq 0$. By the result above, $\tilde{\phi}_t(b_j,t_2) \neq 0$. Without loss of generality, let $\tilde{\phi}_t(b_j,t_1)>0$, then by monotonicity and $\tilde{\phi}_t(b_j,t_2) \neq 0$, $\tilde{\phi}_t(b_j,t_2) > 0$. If $\tilde{\phi}_t (b_i,t_2)<0$, then there exists $\lambda \in (0,1)$ such that $\lambda \tilde{\phi}_t(b_j,t_2)+(1-\lambda) \tilde{\phi}_t (b_i,t_2)<0$. On the other hand, $\tilde{\phi}(\lambda b_j+(1-\lambda) b_i,t)-\tilde{\phi}(b_i,t)=\tilde{\phi}(\lambda b_j+(1-\lambda) b_i,t)$ is monotonic in $t$. $\lambda \tilde{\phi}_t(b_j,t_1)+(1-\lambda) \tilde{\phi}_t (b_i,t_1)=\lambda \tilde{\phi}_t(b_j,t_1) >0$ implies that $\lambda \tilde{\phi}_t(b_j,t_2)+(1-\lambda) \tilde{\phi}_t (b_i,t_2)>0$, a contradiction. If $\tilde{\phi}_t(b_i,t_2)>0$, then there exists $\lambda' \in (0,1)$ such that $\lambda' \tilde{\phi}_t (b_j,t_2)-\tilde{\phi}_t (b_i,t_2)<0$. On the other hand, $\tilde{\phi}(\lambda' b_j+(1-\lambda') b_1,t)-\tilde{\phi}(b_i,t)=\lambda' \tilde{\phi}(b_j,t)-\tilde{\phi}(b_i,t)$ is monotonic in $t$. $\lambda' \tilde{\phi}_t (b_j,t_1)-\tilde{\phi}_t(b_i,t_1)=\lambda' \tilde{\phi}_t (b_j,t_1)>0$ implies that $\lambda' \tilde{\phi}_t (b_j,t_2)-\tilde{\phi}_t (b_i,t_2)>0$, again a contradiction. Hence, if there exist $i,j,t_1$ such that $\tilde{\phi}_t(b_i,t_1)=0, \tilde{\phi}_t(b_j,t_1) \neq 0$, then $\tilde{\phi}_t(b_i,t)=0, \forall t$. It is not hard to see that for such a $b_i$, $\tilde{\phi}(b_i,t)$ can always be a affine transformation of $\tilde{\phi}(b_{i'},t)$ for $i' \neq i$.

    The only case left to discuss is when there exists $i,j \in \{2,...,2^n\}, i \neq j$ such that $\{ t| \tilde{\phi}_t(b_i,t)=0\}=\{ t| \tilde{\phi}_t(b_j,t)=0\}$, which in turn suggests that $\{ t| \tilde{\phi}_t(b_i,t) \neq 0\}=\{ t| \tilde{\phi}_t(b_j,t) \neq 0\}$. By the argument made regarding strict monotonicity, for any $t \in \{ t| \tilde{\phi}_t(b_i,t) \neq 0\}=\{ t| \tilde{\phi}_t(b_j,t) \neq 0\}$, $\dfrac{\tilde{\phi}_t(b_i,t)}{\tilde{\phi}_t(b_j,t)}$ is a constant, i.e., $\exists c_i,c_i' \in \mathbb{R}$ such that $\tilde{\phi}(b_i,t)=c_i \tilde{\phi}(b_j,t)+c_i'$. By continuity, for any $t \in \{ t| \tilde{\phi}_t(b_i,t) = 0\}=\{ t| \tilde{\phi}_t(b_j,t) = 0\}$, $\dfrac{\tilde{\phi}_t(b_i,t)}{\tilde{\phi}_t(b_j,t)}$, it is still the case that $\tilde{\phi}(b_i,t)=c_i \tilde{\phi}(b_j,t)+c_i'$, so that this affine transformation relationship holds for  all $t$.

    In conclusion, if for all $i \in \{2,...,2^n\}, \tilde{\phi}_t(b_i,t)=0, \forall t$, then the result holds trivially. Otherwise, there exists $\bar{i} \in \{2,...,n\}$ such that $\exists \bar{t}, \tilde{\phi}_t(b_{\bar{i}}, \bar{t}) \neq 0$, and $\forall i \neq \bar{i}, \tilde{\phi}(b_i,t)$ is a affine transformation of $\tilde{\phi}(b_{\bar{i}},t)$, which further implies the monotonicity of $\tilde{\phi}(a,t)-\tilde{\phi}(a',t)$ for all $a,a' \in \Delta(B)$ by the monotonicity of $\tilde{\phi}(b_{\bar{i}},t)$.
\end{proof}

Note that the proof is consistent with Lemma 2 of \cite{kartik2023}, which says any linear combinations of two monotonic functions is monotonic if and only if one function is an affine transformation of the other. 

Hence, the ``deterministic''-``stochastic'' monotonic difference condition is stronger than Assumption \ref{a1}, which makes it redundant for Proposition \ref{p1}, but the proposition above shows that it is exactly the condition needed for Theorem \ref{t2}.

\end{spacing}

\end{document}